\newif\ifnotesw\noteswtrue
\ifnotesw\marginpar[\hfill\(\top\)]{\(\top\)}\fi}%
\ifnotesw\marginpar[\hfill\(\bot\)]{\(\bot\)}\fi}
\newcommand{\mnote}[1]%
    {\ifnotesw\marginpar%
        [{\scriptsize\begin{minipage}[t]{\marginparwidth}
        \raggedleft#1%
                        \end{minipage}}]%
        {\scriptsize\begin{minipage}[t]{\marginparwidth}
        \raggedright#1%
                        \end{minipage}}%
    \fi}
\newcommand{\ignore}[1]{}
\newcommand{\etal}{{\it et al.~}}
\newtheorem{theorem}{Theorem}
\newtheorem{corollary}[theorem]{Corollary}
\newtheorem{lemma}[theorem]{Lemma}
\newtheorem{proposition}[theorem]{Proposition}
\newtheorem{fact}{Fact}
\newtheorem{definition}{Definition}
\newcommand{\iverson}[1]{\lbrack\!\lbrack #1 \rbrack\!\rbrack}
\newcommand{\ZZ}{\mathbb{Z}}
\newcommand{\RR}{\mathbb{R}}
\newcommand{\QQ}{\mathbb{Q}}
\newcommand{\GG}{\mathcal{G}}
\newcommand{\ket}[1]{\vec{#1}}
\newcommand{\bra}[1]{\vec{#1}^{\dagger}}
\newcommand{\bracket}[2]{\langle {#1} | {#2} \rangle}           
\newcommand{\tbracket}[3]{\langle {#1} | {#2} {#3} \rangle}     
\newcommand{\uket}[1]{{\mathbf{e}}_{#1}}		
\newcommand{\tbra}[1]{\uket{#1}^{T}}			
\newcommand{\buket}[1]{{\mathbf{e}}_{#1}}		
\newcommand{\vket}[1]{\vec{#1}}      			
\newcommand{\bvket}[1]{\vec{#1}}            	
\newcommand{\cart}{\mbox{ $\Box$ }}
\DeclareMathOperator{\diag}{diag}
\DeclareMathOperator{\Circ}{Circ}
\DeclareMathOperator{\Sp}{Spec}
\DeclareMathOperator{\rk}{rank}
\newcommand{\one}{\mathbf{1}}
\newcommand{\Id}{\mathbb{I}}
\newcommand{\DLp}{{L}}
\newcommand{\SLp}{{Q}}
\newcommand{\NLp}{\mathcal{L}}
\newcommand{\Line}{\ell}
\newcommand{\NA}{\mathcal{A}}
\newcommand{\NB}{\mathcal{B}}
\newcommand{\NP}{\mathcal{P}}
\newcommand{\comp}[1]{\overline{#1}}
\newcommand{\pst}{\tau}
\newcommand{\talpha}{\tilde{\alpha}}
\newcommand{\xy}{\textit{XY}}
\newcommand{\xyz}{\textit{XYZ}}
\newcommand{\Cone}[1]{\widehat{#1}}
\newcommand{\OU}{\mathcal{U}}
\newcommand{\Fan}{\Cone{P_{4}}}
\title{
Perfect State Transfer in Laplacian Quantum Walk
}
\author{
Rachael Alvir\thanks{Department of Mathematics and Statistics, Colorado Mesa University.}
\and
Sophia Dever\thanks{Department of Mathematics, University of Texas at Austin.}
\and
Benjamin Lovitz\thanks{Mathematics Department, Bates College.}
\and
James Myer\thanks{Department of Mathematics, SUNY Potsdam.}
\and
Christino Tamon\thanks{Department of Computer Science, Clarkson University.}
\and
Yan Xu\thanks{Department of Combinatorics and Optimization, University of Waterloo.}
\and
Hanmeng Zhan\footnotemark[6]
\ignore{REU 2014 Algebraic Graph Theory (draft)}
}
\date{\today}
\begin{document}
\maketitle
\bibliographystyle{plain}

\begin{abstract}
For a graph $G$ and a related symmetric matrix $M$,
the continuous-time quantum walk on $G$ relative to $M$ is defined
as the unitary matrix $U(t) = \exp(-itM)$, where $t$ varies over the reals.
Perfect state transfer occurs between vertices $u$ and $v$ at time $\tau$
if the $(u,v)$-entry of $U(\tau)$ has unit magnitude. 
This paper studies quantum walks relative to graph Laplacians. 
Some main observations include the following closure properties for perfect state transfer:
\begin{itemize}
\item
If a $n$-vertex graph has perfect state transfer at time $\tau$ relative to the Laplacian, 
then so does its complement if $n\tau \in 2\pi\ZZ$.
As a corollary, the double cone 
over any $m$-vertex graph has perfect state transfer relative to the Laplacian 
if and only if $m \equiv 2\pmod{4}$. 
This was previously known for a double cone over a clique
(S. Bose, A. Casaccino, S. Mancini, S. Severini, Int. J. Quant. Inf., {\bf 7}:11, 2009).

\item 
If a graph $G$ has perfect state transfer at time $\tau$ relative to the normalized Laplacian, 
then so does the weak product $G \times H$ if 
for any normalized Laplacian eigenvalues $\lambda$ of $G$ and $\mu$ of $H$,
we have $\mu(\lambda-1)\tau \in 2\pi\ZZ$.
As a corollary, a weak product of $P_{3}$ with an even clique or an odd cube
has perfect state transfer relative to the normalized Laplacian.
It was known earlier that a weak product of a circulant with odd integer eigenvalues 
and an even cube or a Cartesian power of $P_{3}$ has perfect state transfer
relative to the adjacency matrix.
\end{itemize}
As for negative results, no path with four vertices or more has
antipodal perfect state transfer relative to the normalized Laplacian.
This almost matches the state of affairs under the adjacency matrix 
(C. Godsil, Discrete Math., {\bf 312}:1, 2011).

\vspace{.1in}
\par\noindent{\em Keywords}: 
quantum walk, perfect state transfer.
Laplacian (combinatorial, signless, normalized), 
equitable and almost-equitable partitions, join, weak product, line graph.
\end{abstract}


\section{Introduction}

Given a graph $G=(V,E)$, we may associate a matrix $M$ with $G$. 
For example, in graph theory, common choices for $M$ include the adjacency matrix $A$ 
and the Laplacian $ D-A$, where $D$ is the diagonal degree matrix of $G$.
On the other hand, in probability theory, natural choices for $M$ include 
a simple random walk matrix $P = AD^{-1}$ and a lazy random walk matrix 
$W = \frac{1}{2}\Id + \frac{1}{2}P$.
If $M$ is Hermitian, then we may define
a continuous-time quantum walk on $G$ relative to $M$ as the time-dependent unitary matrix
\begin{equation} \label{eqn:def-qwalk}
U_{G}(t) = \exp(-itM),
\end{equation} 
where $t \in \RR$. 
This definition is motivated by Schr\"{o}dinger's equation where $M$ is viewed as 
the Hamiltonian of the underlying system.
Continuous-time quantum walk on graphs is a useful method for designing efficient 
quantum algorithms (see Childs \etal \cite{ccdfgs03} and Farhi \etal \cite{fgg08}) 
and is a universal model for quantum computation (see Childs \cite{childs09}).

In an early seminal work, Farhi and Gutmann \cite{fg98} used the {\em infinitesimal generator} 
matrix to define their quantum walk. The latter matrix is a weighted Laplacian matrix used
commonly to define a continuous-time random walk (see Grimmett and Stirzaker \cite{gs80}).
This Laplacian matrix provides arguably the most natural connection between the continuous-time 
classical random walk and its quantum counterpart. 
As pointed out by Bose \etal \cite{bcms09}, from a physics viewpoint, 
the quantum walks relative to the adjacency and Laplacian matrices are intimately related 
to quantum spin chains in the $\xy$ and $\xyz$ interaction models, respectively. 
The $\xyz$ interaction model is also known as the isotropic Heisenberg model.

The literature on graph Laplacians is vast and has a strong focus on the following
three different Laplacians.
The aforementioned {\em standard} (or combinatorial) Laplacian $D-A$ is closely related 
to Laplace's heat equation and has beautiful algorithmic applications 
(see Spielman \cite{spielman-survey12}).
The {\em signless} Laplacian $D+A$ of a graph $G$ shares a strong spectral correspondence with
the line graph $\Line(G)$ through the incidence matrix of $G$. 
The {\em normalized} Laplacian $\NLp = D^{-1/2}(D - A)D^{-1/2}$ has an interesting connection 
to the Heat Kernel random walk which is defined as $e^{-t\NLp}$
(see Chung's monograph \cite{chung}). 
Since $\NLp = \Id - D^{-1/2}PD^{1/2}$, where $P$ is the simple random walk matrix,
the normalized Laplacian is similar to $\Id - P$. 
But even though $e^{-it\NLp}$ is a well-defined quantum walk, 
the ``quantum walk'' $e^{-it(\Id-P)}$ is illegal since $P$ might not be symmetric.
The latter is related to the Heat Kernel random walk via an imaginary time 
transformation\footnote{This is apparently a common technique in statistical and quantum physics. 
See \cite{ik10,gvz13} for an application of this method to continuous-time random and quantum walks.}
$t \leftrightarrow it$.

A quantum walk on a graph $G$ relative to a matrix $M$ has {\em perfect state transfer} 
between vertices $u$ and $v$ at time $\tau$ if the $(u,v)$-entry of the unitary matrix $U(\tau)$
has unit magnitude; that is:
\begin{equation} \label{eqn:def-pst}
|\tbracket{\buket{v}}{e^{-i\tau M}}{\buket{u}}| = 1.
\end{equation}
The notion of state transfer was introduced by Bose \cite{bose03} in the context of 
information transfer in quantum spin chains. 
In his work, Bose considered perfect state transfer in the $\xyz$ model. 
This notion was further studied by Christandl \etal \cite{cdel04,cddekl05} 
for paths and hypercubes in the $\xy$ (adjacency matrix) model. They observed that the $n$-cube 
has antipodal perfect state transfer at time $\pi/2$, for {\em any} $n$ (which is counter-intuitive 
since the diameter of the $n$-cube increases with $n$). In contrast, relative to the normalized Laplacian, 
the antipodal perfect state transfer time of the $n$-cube is $n\pi/2$. 

Our main goal is to understand how these graph Laplacians affect state transfer on graphs
and how they compare with the adjacency matrix model. 
For regular graphs, the quantum walks relative to the adjacency and Laplacian matrices are equivalent
(up to irrelevant phase factors, time dilations, and time reversal). 
On bipartite graphs, the quantum walks relative to the standard and signless Laplacians are equivalent. 
So, our primary focus will be on irregular and/or nonbipartite graphs. 
We describe some of our results in what follows.

\begin{figure}[t]
\begin{center}
\begin{tabular}{|c|c|c|c|} \hline
\rowcolor[gray]{0.9}
Graph family                    & PST time  		& Laplacian          	& Source        \\  \hline
$Q_{n}$							& $\pi/2$  			& standard/signless		& Christandl \etal \cite{cddekl05} \\
$Q_{n}$							& $n\pi/2$  		& normalized			& Moore and Russell \cite{mr02} \\ \hline
$\comp{K_{2}}+\GG_{4n-2}$		& $\pi/2$   		& standard              & this work \\
$\comp{K_{2}}+\GG_{2n,n-1}$		& $\pi/2\sqrt{n}$	& signless              & this work \\ 
$P_{3} \times \{K_{2n},Q_{2n-1}\}$     		
								& $(2n-1)\pi$		& normalized   			& this work \\ \hline
$P_{n \ge 3}$, $T_{n \ge 3}$	& $\infty$    		& standard/signless     & Godsil \cite{godsil-book}, 
																				Coutinho and Liu \cite{cl14} \\
$P_{n \ge 4}$					& $\infty$ 			& normalized			& this work \\ \hline
\end{tabular}
\caption{
{\em Summary of some results on Laplacian perfect state transfer (PST)}:
a PST time of $\infty$ denotes {\em no} perfect state transfer;
$n \ge 1$ is a positive integer;
$\GG_{n}$ denotes any family of $n$-vertex graphs; 
$\GG_{n,k}$ denotes any family of $(n,k)$-regular graphs; 
$P_{n}$, $T_{n}$, $K_{n}$ denote a path, tree, and complete graph
on $n$ vertices, respectively;
$Q_{n}$ is the $n$-dimensional hypercube.
}
\label{fig:results}
\end{center}
\hrule
\end{figure}

For the standard Laplacian, we show that perfect state transfer is closed 
under complementation with some mild assumptions. As a corollary, we characterize
perfect state transfer on double cones:
$\comp{K_{2}} + H$ has perfect state transfer relative to the standard Laplacian
if and only if $|V(H)| \equiv 2\pmod{4}$. This generalizes a result of Bose \etal
\cite{bcms09} where $H$ is the complete graph. 
We also compare this to the $\xy$ model where the double cone has perfect state transfer
if $H$ is a $(n,k)$-regular graph, provided both $k$ and $\sqrt{k^{2}+8n}$ are integers 
divisible by four and the largest powers of two which divide them are distinct
(see Angeles-Canul \etal \cite{anoprt10}). 
Thus, perfect state transfer on double cones is less complicated in the $\xyz$ model 
compared to the $\xy$ model.
In contrast, we also find double cones with perfect state transfer 
relative to the signless Laplacian, but not the standard Laplacian.
In particular, we show $\comp{K_{2}} + H$ has perfect state transfer relative to the
signless Laplacian if $|V(H)|$ is even and it is densely regular
(more precisely, regular with degree $\frac{1}{2}|V(H)|-1$).
Most of our proofs employ the machinery of quotient graphs under equitable
and almost-equitable partitions.

By exploiting the spectral connection between the signless Laplacian and line graphs, 
we show there is no perfect state transfer on a family of odd unicyclic graphs relative 
to the signless Laplacian. The latter family of graphs is obtained by attaching two pendant 
paths to a three-cycle. Our proof uses the idea of controllable subsets in graphs 
(see Godsil \cite{godsil-ac12}).
Using the same technique, we can also show there is no perfect state transfer on paths 
with five or more vertices relative to the signless (and standard, by switching equivalence) Laplacian.
But, a better result (with optimal proof) is known to Godsil who showed that paths 
on at least three vertices have no perfect state transfer under the standard Laplacian.
Recently, Coutinho and Liu \cite{cl14} improved this considerably and showed there is 
no perfect state transfer on trees with at least three vertices relative to the standard Laplacian.

So, we know $P_{3}$ has no perfect state transfer relative to the standard/signless Laplacians
since it is a double cone over a single vertex (also, of course, from the results of Godsil, 
Coutinho and Liu mentioned above).
Interestingly, $P_{3}$ has perfect state transfer under the normalized Laplacian
at time $\pi$ (as opposed to $\pi/\sqrt{2}$ in the $\xy$ model).
We use this to show that a weak product of $P_{3}$ with either an even clique or an odd cube
has perfect state transfer relative to the normalized Laplacian.
This is a consequence of another closure property:
if $G$ has perfect state transfer under the normalized Laplacian at time $\tau$,
then so does the weak product $G \times H$ provided that 
for any normalized Laplacian eigenvalues $\lambda$ of $G$ and $\mu$ of $H$,
$\mu(\lambda-1)\tau$ is an integer multiple of $2\pi$.
In comparison, relative to the adjacency matrix, it is known that
a weak product of a circulant which has odd integer eigenvalues with
either $Q_{2n}$ or $P_{3}^{\Box n}$, where $n$ is a positive integer,
has perfect state transfer
(see Ge \etal \cite{ggpt11}). 

Finally, we show that no path on four or more vertices has antipodal perfect state transfer
relative to the normalized Laplacian. The proof is based on a reduction to even cycles
in the adjacency matrix model. This almost matches the strong negative result for paths 
in the $\xy$ model, where there is no perfect state transfer between any pair of vertices 
(see Godsil \cite{godsil-dm11}).

We summarize some of the known results on Laplacian state transfer along with our contributions 
in Figure \ref{fig:results}. A survey on state transfer from a graph-theoretic perspective 
is given by Godsil \cite{godsil-dm11}.


\section{Preliminaries}

For a logical statement $S$, we use the Iversonian bracket $\iverson{S}$ to denote
$1$ if $S$ is true, and $0$ otherwise (see \cite{gkp94}).
The $n$-dimensional all-one vector is denoted $\one_{n}$. 
The identity matrix of order $n$ is denoted $\Id_{n}$.
The $m \times n$ all-one matrix is denoted $J_{m,n}$ or simply $J_{n}$ whenever $m=n$.
We omit dimensions if the context is clear.
For a matrix $A$, $A^{T}$ and $A^{\dagger}$ denote its transpose and Hermitian transpose, respectively.
The inner product of vectors $\vket{u}$ and $\vket{v}$ is denoted $\bracket{\vket{u}}{\vket{v}}$.
Given an index $u$, let $\uket{u}$ denote the unit vector that is $1$ at position $u$ and zero elsewhere.
We often consider the inner product $\tbracket{x}{A}{y}$, and in the form of
$\tbracket{\buket{u}}{A}{\buket{v}}$, it is simply the $(u,v)$-entry of $A$.

For two sets $A,B$ of numbers, we denote 
their sum as $A + B = \{a+b : a \in A, b \in B\}$,
their product as $AB = \{ab : a \in A, b \in B\}$,
and the scalar product of $A$ with a constant $c$ as 
$cA = \{ca : a \in A\}$.

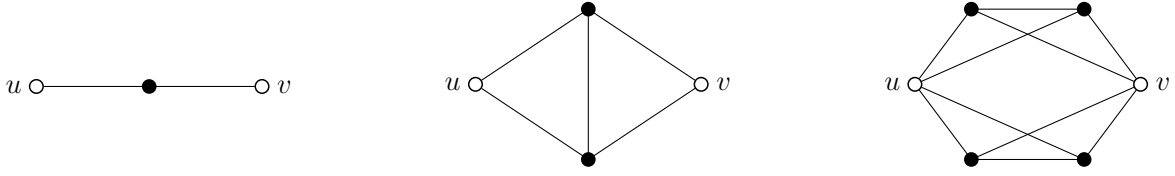
\begin{figure}[t]
\begin{center}
\begin{tikzpicture}
%
\node at (0,0)[scale=0.5]{};
\foreach \y in {+1} {	
	\draw (-1.5,\y)--(0,\y);
	\draw (+1.5,\y)--(0,\y);
	\foreach \x in {-1.5,1.5} {
		\node at (\x,\y)[circle, fill=white][scale=0.5]{};
		\draw[line width=0.2mm] (\x, \y) circle (0.09cm);
	}
	\node at (0,\y)[circle, fill=black][scale=0.5]{};

	\node at (-1.8,\y)[scale=0.9]{$u$};
	\node at (+1.8,\y)[scale=0.9]{$v$};
}
\end{tikzpicture}
\quad \quad \quad \quad
\begin{tikzpicture}
%
\foreach \y in {-1,1} {
    \draw (-1.5,0)--(0,\y);
    \draw (+1.5,0)--(0,\y);
    \node at (0,\y)[circle, fill=black][scale=0.5]{};
}
\node at (-1.5,0)[circle, fill=white][scale=0.5]{};
\draw[line width=0.2mm] (-1.5,0) circle (0.09cm);
\node at (+1.5,0)[circle, fill=white][scale=0.5]{};
\draw[line width=0.2mm] (+1.5,0) circle (0.09cm);

\draw (0,-1)--(0,+1);

\node at (-1.8,0)[scale=0.9]{$u$};
\node at (+1.8,0)[scale=0.9]{$v$};
\end{tikzpicture}
\quad \quad \quad \quad
\begin{tikzpicture}
%
\foreach \x in {-0.75,0.75} {
	\foreach \y in {-1,1} {
		\node at (\x,\y)[circle, fill=black][scale=0.5]{};
	}
}
\foreach \y in {-1,+1} {
	\draw (-0.75,\y)--(+0.75,\y);
}

\foreach \z in {-1.5,+1.5} {
	\foreach \x in {-0.75,+0.75} {
		\foreach \y in {-1,+1} {
			\draw (\z,0)--(\x,\y);
		}
	}
}

\node at (-1.5,0)[circle, fill=white][scale=0.5]{};
\draw[line width=0.2mm] (-1.5,0) circle (0.09cm);
\node at (+1.5,0)[circle, fill=white][scale=0.5]{};
\draw[line width=0.2mm] (+1.5,0) circle (0.09cm);
\node at (-1.8,0)[scale=0.9]{$u$};
\node at (+1.8,0)[scale=0.9]{$v$};
\end{tikzpicture}
\caption{
Small examples of graphs with Laplacian perfect state transfer (between vertices marked white):
(i) $P_{3} = \comp{K_{2}} + K_{1}$ has perfect state transfer at time $\pi$ relative to the normalized Laplacian 
	(but not the standard/signless Laplacian);
(ii) $\comp{K_{2}} + K_{2}$ has perfect state transfer at time $\pi/2$ relative to the standard/signless Laplacians;
(iii) $\comp{K_{2}} + 2K_{2}$ has perfect state transfer at time $\pi/\sqrt{8}$ relative to the signless Laplacian
	(but not the standard Laplacian).
}
\label{figure:lap-pst}
\end{center}
\hrule
\end{figure}

Let $G=(V,E)$ be a graph that is simple, undirected, and (mostly) connected.
Two vertices $u$ and $v$ are adjacent, or $u \sim v$, if $(u,v) \in E$.
The degree of a vertex $u \in V$, which we denote $\deg(u)$, is the number of vertices adjacent 
to $u$; that is, $\deg(u) = \sum_{v \in V} \iverson{u \sim v}$.
A graph $G$ is called $(n,k)$-regular if $G$ has $n$ vertices and each vertex has degree $k$.
As is customary, we let $P_{n}$ and $K_{n}$ denote a path and a complete graph on $n$ vertices, 
respectively, and $Q_{n}$ denote the $n$-dimensional hypercube.

\smallskip

For a graph $G=(V,E)$, its adjacency matrix $A$ is defined as 
$A_{u,v} = \iverson{(u,v) \in E}$
and its diagonal degree matrix $D$ is defined as 
$D_{u,v} = \iverson{u=v} \deg(u)$.
We focus on the following graph Laplacians.
The {\em standard} Laplacian is given by $\DLp = D - A$,
the {\em signless} Laplacian by $\SLp = D + A$,
and
the {\em normalized} Laplacian is $\NLp = \Id - D^{-1/2}AD^{-1/2}$.
For a matrix $M$ related to a graph $G$,
the $M$-{\em spectrum} of $G$, denoted $\Sp_{M}(G)$, is the set of eigenvalues of $M(G)$.

\smallskip

The {\em complement} of a graph $G$, denoted $\comp{G}$, is a graph whose vertex set is $V(G)$ with 
edge set $\{(u,v) : (u,v) \not\in E(G), u \neq v\}$.
For two graphs $G$ and $H$, their {\em disjoint union} $G \cup H$ is a graph whose vertex set is
$V(G) \cup V(H)$ and edge set is $E(G) \cup E(H)$, respectively. 
Here, we assume $V(G)$ and $V(H)$ are disjoint sets.
The {\em join} of $G$ and $H$, denoted $G + H$, is defined as 
$G + H = \overline{\overline{G} \cup \overline{H}}$.
We also consider products of $G$ and $H$ where the vertex set is $V(G) \times V(H)$
and the edge set is defined by an adjacency rule on the pairs $(g_{1},h_{1})$ and $(g_{2},h_{2})$:
\begin{itemize}
\item {\em weak product} $G \times H$:
	$(g_{1},h_{1}) \sim (g_{2},h_{2})$ if $g_{1} \sim g_{2}$ and $h_{1} \sim h_{2}$.
	The adjacency matrix is given by $A(G \times H) = A(G) \otimes A(H)$.

\item {\em Cartesian product} $G \cart H$:
	$(g_{1},h_{1}) \sim (g_{2},h_{2})$ if
	$g_{1} \sim g_{2}$ and $h_{1} = h_{2}$, or $g_{1} = g_{2}$ and $h_{1} \sim h_{2}$.
	The adjacency matrix is given by 
	$A(G \cart H) = A(G) \otimes \Id_{H} + \Id_{G} \otimes A(H)$.
\end{itemize}
The {\em line graph} of a graph $G$, denoted\footnote{We follow a convention used by Mike Newman \cite{newman-thesis}.} 
$\Line(G)$, is a graph whose vertex set is $E(G)$ where two edges are adjacent in $\Line(G)$
if they share a common vertex;
that is, $E(\Line(G)) = \{(e_{1},e_{2}) : e_{1},e_{2} \in E(G), |e_{1} \cap e_{2}| = 1\}$.

\medskip

A vertex partition $\pi$ of a graph $G=(V,E)$ given by $V = V_{1} \cup \ldots \cup V_{m}$
is called {\em equitable} if there are constants $d_{j,k}$, for $1 \le j,k \le m$, so that 
\begin{equation} \label{eqn:equitable}
(\forall j,k \in \{1,\ldots,m\})(\forall u \in V_{j}) \
|N(u) \cap V_{k}| = d_{j,k}.
\end{equation}
If condition (\ref{eqn:equitable}) is only required to hold for $j \neq k$, 
the partition is called {\em almost equitable}.

The (normalized) partition matrix $\NP$ of $\pi$ is given by
\begin{equation}
\NP = \sum_{u \in V, k \in [m]} \frac{\iverson{u \in V_{k}}}{\sqrt{|V_{k}|}} \uket{u}\tbra{k}
\end{equation}
We state the following well-known properties of equitable partitions.

\begin{fact} \label{fact:equitable-adjacency} (Godsil \cite{godsil-dm11}) \\
Let $G=(V,E)$ be a graph with an equitable partition $\pi$ given by $V = \bigcup_{k=1}^{m} V_{k}$
with constants $d_{j,k}$, for $j,k=1,\ldots,m$. Let $\NP$ be the (normalized) partition matrix of $\pi$,
where $\NP^{T}\NP = \Id_{m}$. Then:
\begin{enumerate}
\item $\NP\NP^{T} = \diag(\{J_{|V_{k}|} : k \in [m]\})$ which commutes with $A(G)$.
\item $A(G)\NP = \NP B$, where $B$ is a $m \times m$ matrix defined as
\begin{equation}
B_{j,k} = \sqrt{d_{j,k}d_{k,j}},
\end{equation}
where $j,k = 1,\ldots,m$.
\end{enumerate}
Thus, $B = A(G/\pi)$ is the adjacency matrix of the quotient graph $G/\pi$. 
\end{fact}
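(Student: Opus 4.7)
The plan is to recognize $\NP$ as the matrix whose $k$-th column is the normalized indicator vector $\one_{V_{k}}/\sqrt{|V_{k}|}$, so that $\NP\NP^{T}$ is the orthogonal projection onto the $m$-dimensional subspace $W \subset \RR^{V}$ spanned by $\{\one_{V_{k}}\}_{k=1}^{m}$. With this reformulation, both items in the fact reduce to understanding how $A(G)$ acts on the indicator vectors of the parts, which is exactly what the equitable condition (\ref{eqn:equitable}) governs.

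For item (1), I would first compute $(\NP\NP^{T})_{u,v}$ entry-wise: the sum vanishes unless $u$ and $v$ lie in the same part $V_{k}$, in which case it equals $1/|V_{k}|$. This gives the claimed block-diagonal form (up to a positive scalar on each block, which is irrelevant for commutation). For the commutation claim, it suffices to show $W$ is $A(G)$-invariant; because $A(G)$ is symmetric, $W^{\perp}$ is then automatically invariant as well, and a symmetric matrix commutes with the orthogonal projector onto any invariant subspace. To verify invariance, I would compute $(A(G)\one_{V_{k}})_{u} = |N(u)\cap V_{k}|$ directly from the definition of $A$. By (\ref{eqn:equitable}), this value depends only on the part $V_{j}$ containing $u$ and equals $d_{j,k}$, hence $A(G)\one_{V_{k}} = \sum_{j=1}^{m} d_{j,k}\,\one_{V_{j}} \in W$.

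For item (2), I would compute both sides of $A(G)\NP = \NP B$ in coordinates. Fix $u \in V_{j}$; then the $(u,l)$-entry of $A(G)\NP$ is $d_{j,l}/\sqrt{|V_{l}|}$ (again by the equitable condition), while the $(u,l)$-entry of $\NP B$ is $B_{j,l}/\sqrt{|V_{j}|}$. Equating yields $B_{j,l} = \sqrt{|V_{j}|/|V_{l}|}\,d_{j,l}$. To rewrite this in the symmetric form claimed in the statement, I would invoke the handshake identity $|V_{j}|\,d_{j,l} = |V_{l}|\,d_{l,j}$, obtained by double-counting the edges between $V_{j}$ and $V_{l}$. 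Substituting $|V_{j}|/|V_{l}| = d_{l,j}/d_{j,l}$ into the expression produces $B_{j,l} = \sqrt{d_{j,l}\,d_{l,j}}$, matching the stated formula.

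The main subtlety, rather than an obstacle, is tracking the normalization: with unnormalized characteristic vectors the quotient matrix would be the asymmetric $(d_{j,k})$, and the appearance of the symmetric form $\sqrt{d_{j,l}d_{l,j}}$ is a direct consequence of having chosen an orthonormal basis of $W$. This symmetrization is what makes $\NP$ an isometry and eventually lets one transfer spectral and functional-calculus identities (such as expressions involving $\exp(-itA(G))$) cleanly between $G$ and the quotient $G/\pi$. All other ingredients are elementary bookkeeping; no structural input beyond (\ref{eqn:equitable}) is required.
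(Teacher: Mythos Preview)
Your proposal is correct and complete. The paper does not actually prove this fact --- it is stated as a known result attributed to Godsil \cite{godsil-dm11}, with no argument given --- so there is nothing to compare against. Your approach (realizing $\NP\NP^{T}$ as the orthogonal projector onto $W=\operatorname{span}\{\one_{V_k}\}$, establishing $A(G)$-invariance of $W$ from the equitable condition, and deriving the symmetric form of $B$ via the double-counting identity $|V_j|\,d_{j,l}=|V_l|\,d_{l,j}$) is the standard argument. One minor remark: as you implicitly noted, with the normalized $\NP$ the blocks of $\NP\NP^{T}$ are $J_{|V_k|}/|V_k|$ rather than $J_{|V_k|}$; the paper's stated formula is off by these scalars, but this has no effect on commutation or on the derivation of $B$.
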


We will need the following lemma which relates perfect state transfer in quantum walks
on a graph and on its quotient under an equitable partition.

\begin{lemma} \label{lemma:lifting} (Bachman \etal \cite{bfflott12}) \\
Let $G=(V,E)$ be a graph with equitable partition $\pi$.
Suppose $u,v \in V(G)$ belong to singleton partitions under $\pi$.
Then,
\begin{equation}
\tbracket{\buket{u}}{e^{-itA(G)}}{\buket{v}}
=
\tbracket{\buket{\pi(u)}}{e^{-itA(G/\pi)}}{\buket{\pi(v)}}.
\end{equation}
Therefore, 
perfect state transfer occurs between $u$ and $v$ in $G$ 
if and only if
it occurs between $\pi(u)$ and $\pi(v)$ in $G/\pi$.
\end{lemma}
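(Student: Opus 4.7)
The plan is to promote the intertwining relation $A(G)\NP = \NP B$ with $B = A(G/\pi)$ supplied by Fact \ref{fact:equitable-adjacency} to the associated unitaries, and then exploit the fact that singleton cells of $\pi$ embed into standard basis vectors under the normalized partition matrix $\NP$.

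First I would show by a straightforward induction on $k$ that $A(G)^{k}\NP = \NP B^{k}$ for every $k \ge 0$, which upon summing the Taylor series term by term yields
\begin{equation}
e^{-itA(G)}\, \NP \;=\; \NP\, e^{-itB}.
\end{equation}
Multiplying on the left by $\NP^{T}$ and invoking the orthonormality $\NP^{T}\NP = \Id_{m}$ collapses this to $\NP^{T} e^{-itA(G)} \NP = e^{-itB}$. Next I would use the hypothesis that $u$ and $v$ lie in singleton cells: by the definition of $\NP$, if $V_{\pi(u)} = \{u\}$ then $\NP \uket{\pi(u)} = \uket{u}$, and similarly for $v$. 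Sandwiching the above identity between $\uket{\pi(v)}$ and $\uket{\pi(u)}$ then gives
\begin{equation}
\tbracket{\buket{v}}{e^{-itA(G)}}{\buket{u}} \;=\; \tbracket{\buket{\pi(v)}}{\NP^{T} e^{-itA(G)} \NP}{\buket{\pi(u)}} \;=\; \tbracket{\buket{\pi(v)}}{e^{-itB}}{\buket{\pi(u)}},
\end{equation}
and since $B = A(G/\pi)$ this is exactly the claimed amplitude equality. The perfect state transfer equivalence then follows by taking absolute values on both sides and recalling the definition in (\ref{eqn:def-pst}).

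I do not anticipate any real obstacle. The two points to be careful about are (i) using the \emph{normalized} convention so that $\NP^{T}\NP = \Id_{m}$, without which one could not cleanly ``cancel'' $\NP$ on the left after the exponential intertwining, and (ii) noticing that the singleton hypothesis is precisely what forces $\NP \uket{\pi(u)}$ to collapse from a uniform superposition over a cell to a single standard basis vector; without this, one would only recover an equality between the quotient amplitude and a \emph{sum} of amplitudes over a cell, which would be too weak to transfer the unit-magnitude condition in either direction. Everything else is algebraic manipulation using Fact \ref{fact:equitable-adjacency}.
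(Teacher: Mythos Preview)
Your argument is correct and is exactly the standard route: pass from the intertwining $A(G)\NP=\NP B$ to $e^{-itA(G)}\NP=\NP e^{-itB}$, cancel with $\NP^{T}\NP=\Id_{m}$, and collapse $\NP\uket{\pi(u)}$ to $\uket{u}$ via the singleton hypothesis. The paper does not supply its own proof of this lemma (it is quoted from Bachman \etal \cite{bfflott12}), but it carries out precisely your computation when applying the lemma in the proof of Theorem~\ref{thm:signless-double-cone}, writing $e^{-itB}=\NP^{T}e^{-it\SLp(G)}\NP$ and sandwiching between the singleton conical vertices; so your approach matches the paper's implicit reasoning.
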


Further background on algebraic graph theory can be found in 
Godsil and Royle \cite{godsil-royle01}.


\section{Basic observations}

In this section, we state some basic facts about Laplacian quantum walk on graphs. 

\begin{definition} (Equivalence under quantum walk) \\
Given a graph $G$ and two matrices $M_{1}(G)$ and $M_{2}(G)$ associated with $G$, 
the quantum walks based on $M_{1}(G)$ and $M_{2}(G)$ are {\em equivalent}
if for every time $t \in \RR$, we have
\begin{equation}
|\tbracket{\buket{u}}{e^{-itM_{1}(G)}}{\buket{v}}| 
= 
|\tbracket{\buket{u}}{e^{-i(\alpha t)M_{2}(G)}}{\buket{v}}|,
\end{equation}
for each $u,v \in V(G)$ and for some $\alpha \in \RR$.
\end{definition}

Here, we consider two quantum walks equivalent if their entry-wise complex magnitudes are the same
at all times. The global phase factors (of the form $e^{i\theta}$ for some real $\theta$) may be
safely ignored since they are undetectable by quantum measurements.

\subsection{Regular graphs}

\begin{fact} \label{fact:regular}
For any regular graph $G$, 
the quantum walks based on the adjacency matrix, 
the standard and signless Laplacians, 
and the normalized Laplacian are all equivalent.
\end{fact}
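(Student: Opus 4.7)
The plan is to reduce everything to the adjacency matrix walk by exploiting that, on a $k$-regular graph, the degree matrix collapses to a scalar: $D = k\Id$. Consequently the three Laplacians become affine functions of $A$, namely $\DLp = k\Id - A$, $\SLp = k\Id + A$, and $\NLp = \Id - \tfrac{1}{k}A$ (assuming $k\ge 1$, which we may, since the $k=0$ case is trivial as all four matrices vanish).

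Next I would invoke the fact that $\Id$ commutes with $A$, so for any matrix of the form $M = a\Id + bA$ we have the clean factorization
\begin{equation}
e^{-itM} = e^{-ita}\, e^{-itbA}.
\end{equation}
Since $e^{-ita}$ is a unimodular scalar, it drops out upon taking entrywise absolute value:
\begin{equation}
|\tbracket{\buket{u}}{e^{-itM}}{\buket{v}}| \;=\; |\tbracket{\buket{u}}{e^{-i(bt)A}}{\buket{v}}|,
\end{equation}
which exactly matches the definition of equivalence of quantum walks, with scaling constant $\alpha = b$ (compared against the adjacency walk as $M_2$). Substituting the three values $b \in \{-1, +1, -1/k\}$ covers the standard, signless, and normalized Laplacian cases respectively, and each therefore differs from the adjacency walk only by a global phase and a constant time dilation (with a possible time-reversal sign). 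Equivalence among the three Laplacian walks then follows by transitivity.

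The only thing that needs any care is the normalized case, where $D^{-1/2}$ must make sense; regularity with $k\ge 1$ handles this, and one should note explicitly that the scalar $\alpha = -1/k$ corresponds to a nontrivial time rescaling (so that, for instance, an adjacency PST time of $\tau$ transports to a normalized-Laplacian PST time of $k\tau$). There is no real obstacle here beyond bookkeeping of the constants; the content of the fact is essentially the remark that scalar multiples of the identity are harmless inside a matrix exponential, and regularity is precisely the condition that reduces $D$ to such a scalar.
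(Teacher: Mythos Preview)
Your proposal is correct and follows essentially the same argument as the paper: write each Laplacian on a $k$-regular graph as $a\Id + bA$, factor $e^{-itM} = e^{-ita}e^{-itbA}$, and observe that the scalar phase is irrelevant to the modulus. The paper's proof is slightly terser but identical in substance; your added remarks on the $k=0$ case and the invertibility of $D$ for the normalized Laplacian are sound and do not depart from the intended route.
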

\begin{proof}
Let $G$ be a $k$-regular graph. 
The standard, signless and normalized Laplacians of $G$, respectively, are given by 
$\DLp(G) = k\Id - A(G)$,
$\SLp(G) = k\Id + A(G)$,
and
$\NLp(G) = \Id - \frac{1}{k}A(G)$.
Therefore, their quantum walks are defined as 
\begin{eqnarray}
\exp(-it\DLp(G)) & = & e^{-i(kt)}\exp(+itA(G)) \\
\exp(-it\SLp(G)) & = & e^{-i(kt)}\exp(-itA(G)) \\
\exp(-it\NLp(G)) & = & e^{-it}\exp(i(t/k)A(G)),
\end{eqnarray}
which are all equivalent to $e^{-itA(G)}$ up to phase factors, time reversal and time dilations.
\end{proof}

\subsection{Bipartite graphs}

\begin{fact} \label{fact:bipartite}
For any bipartite graph $G$, 
the quantum walks based on the standard and signless Laplacians are equivalent.
\end{fact}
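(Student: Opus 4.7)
The plan is to exploit the bipartition to conjugate the signless Laplacian into the standard Laplacian via a diagonal $\pm 1$ signing, which preserves absolute values of matrix entries. Concretely, let $V(G) = X \cup Y$ be the bipartition, and define the diagonal signing matrix $S$ by $S_{uu} = +1$ if $u \in X$ and $S_{uu} = -1$ if $u \in Y$. Note that $S = S^{T} = S^{-1}$.

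First I would verify that $S A(G) S = -A(G)$. This is immediate from the bipartite structure: for any edge $(u,v) \in E(G)$, the endpoints lie in opposite parts, so $S_{uu} S_{vv} = -1$, and for non-edges the entry is zero either way. Since $S$ commutes with the diagonal degree matrix $D$ (both are diagonal), we also have $S D S = D$. Combining, $S \SLp(G) S = S(D + A) S = D - A = \DLp(G)$, so the two Laplacians are similar via $S$.

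Next I would transfer this similarity through the matrix exponential. Because $S^{2} = \Id$, the standard series expansion yields
\begin{equation}
e^{-it\DLp(G)} = e^{-itS\SLp(G)S} = S\, e^{-it\SLp(G)}\, S.
\end{equation}
Taking matrix entries, for any $u,v \in V(G)$,
\begin{equation}
\tbracket{\buket{u}}{e^{-it\DLp(G)}}{\buket{v}}
= S_{uu}\, \tbracket{\buket{u}}{e^{-it\SLp(G)}}{\buket{v}}\, S_{vv},
\end{equation}
and since $S_{uu}, S_{vv} \in \{-1,+1\}$, the complex magnitudes agree for all $t \in \RR$. Taking $\alpha = 1$ in the definition of equivalence completes the argument.

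There is no substantive obstacle here; the only thing to be careful about is that the similarity passes through $\exp$ (which holds precisely because $S^{-1} = S$), and that one is asking only for equality of magnitudes of entries rather than of entries themselves, so the diagonal signs on either side are harmless.
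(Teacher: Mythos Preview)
Your proof is correct and follows essentially the same approach as the paper: both use the diagonal $\pm 1$ signing from the bipartition to conjugate $\SLp(G)$ into $\DLp(G)$, then push this similarity through the exponential to conclude that the entrywise magnitudes agree. Your write-up is in fact a bit more explicit than the paper's in verifying that the signing commutes with the degree matrix and in extracting the entrywise magnitude equality, but the underlying idea is identical.
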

\begin{proof}
If $G$ is bipartite, then $-A(G) = DA(G)D^{-1}$ for some nonsingular diagonal matrix
$D$ with $\pm 1$ entries along its diagonal
(see Godsil and Royle \cite{godsil-royle01}, for example).
This implies that $\SLp(G) = D\DLp(G)D^{-1}$ and, moreover, $e^{-it\SLp(G)} = De^{-it\DLp(G)}D^{-1}$.
\end{proof}

\subsection{Cartesian products}

To construct infinite families of graphs with perfect state transfer in the $\xy$ model,
the Cartesian product is a useful closure operator. The seminal works of Christandl \etal
\cite{cdel04,cddekl05} showed that both $K_{2}^{\Box n}$ and $P_{3}^{\Box n}$ have
perfect state transfer (since each of $K_{2}$ and $P_{3}$ have such property).
We state a similar observation for the standard/signless Laplacians.

\begin{fact} \label{fact:laplacian-cartesian-product}
Let $M$ denote the standard or signless Laplacian.
Suppose $G$ has perfect state transfer at time $t$ between $g_{1}$ and $g_{2}$ relative to $M$.
Suppose $H$ has perfect state transfer at time $t$ between $h_{1}$ and $h_{2}$ relative to $M$.
Then, $G \cart H$ has perfect state transfer at time $t$ between $(g_{1},h_{1})$ and $(g_{2},h_{2})$ 
relative to $M$.
\end{fact}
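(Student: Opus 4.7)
The plan is to exploit the additive decomposition of the Laplacians under the Cartesian product together with the fact that matrix exponentials factor through tensor products of commuting summands. Let $M \in \{\DLp, \SLp\}$. The first step is to verify the identity
\[
M(G \cart H) = M(G) \otimes \Id_{H} + \Id_{G} \otimes M(H).
\]
For the adjacency part this is stated in the preliminaries. For the degree part, note that $\deg_{G \cart H}(g,h) = \deg_{G}(g) + \deg_{H}(h)$, so $D(G \cart H) = D(G) \otimes \Id_{H} + \Id_{G} \otimes D(H)$. Combining these and using $\DLp = D - A$ (resp.\ $\SLp = D + A$) yields the claim for both Laplacians.

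The second step is to observe that $M(G) \otimes \Id_{H}$ and $\Id_{G} \otimes M(H)$ commute, since $(X \otimes \Id)(\Id \otimes Y) = X \otimes Y = (\Id \otimes Y)(X \otimes \Id)$. Therefore
\[
e^{-itM(G \cart H)}
= e^{-it(M(G) \otimes \Id_{H})} \cdot e^{-it(\Id_{G} \otimes M(H))}
= e^{-itM(G)} \otimes e^{-itM(H)}.
\]

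The third step is to read off the relevant entry. Using the tensor product structure of standard basis vectors $\buket{(g,h)} = \buket{g} \otimes \buket{h}$,
\[
\tbracket{\buket{(g_{2},h_{2})}}{e^{-itM(G \cart H)}}{\buket{(g_{1},h_{1})}}
= \tbracket{\buket{g_{2}}}{e^{-itM(G)}}{\buket{g_{1}}} \cdot \tbracket{\buket{h_{2}}}{e^{-itM(H)}}{\buket{h_{1}}}.
\]
By the hypothesis of perfect state transfer on $G$ and $H$ at the common time $t$, each factor on the right has unit magnitude, so the left side has unit magnitude as well, giving PST between $(g_{1},h_{1})$ and $(g_{2},h_{2})$.

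There is no real obstacle: the only potentially delicate point is verifying the additive decomposition of $M(G \cart H)$ for the signless case, but it follows immediately from $\SLp = D+A$ and the fact that both $D$ and $A$ decompose additively across the Cartesian product. The requirement that PST occur at the \emph{same} time $t$ in both factors is essential, since otherwise the tensor-exponential factorization would not simultaneously produce unit-modulus entries.
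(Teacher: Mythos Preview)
Your proof is correct and follows exactly the paper's approach: establish $M(G\cart H)=M(G)\otimes\Id_H+\Id_G\otimes M(H)$ via the additive decomposition of both $D$ and $A$, factor the exponential as $e^{-itM(G)}\otimes e^{-itM(H)}$, and read off the product of the two unit-modulus entries. You have merely made explicit the commutation step and the tensor decomposition of the basis vectors that the paper leaves implicit.
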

\begin{proof}
Note that $D(G \cart H) = D(G) \otimes \Id_{H} + \Id_{G} \otimes D(H)$.
Thus, $M(G \cart H) = M(G) \otimes \Id_{H} + \Id_{G} \otimes M(H)$.
This shows that
\begin{equation}
\exp(-itM(G \cart H))
=
e^{-itM(G)} \otimes e^{-itM(H)},
\end{equation}
which implies the claim.
\end{proof}

Later, we apply Fact \ref{fact:laplacian-cartesian-product} to some examples of graphs 
with Laplacian perfect state transfer. 
For example, $Q_{n} \cart (\comp{K_{2}} + G)$ has perfect state transfer at time $\pi/2$ 
relative to the standard Laplacian, for any $n$-cube $Q_{n}$ and any graph $G$ with 
$|V(G)| \equiv 2\pmod{4}$ (see Corollary \ref{cor:laplacian-double-cone}). 
See Figure \ref{fig:product}.

\subsection{Three-vertex path}

\begin{figure}[t]
\begin{center}
\begin{tikzpicture}
\node at (0,-0.25){};
{
\draw(-1.5, 0)--(+1.5, 0);

\draw (0, 0) .. controls (-0.5, 0.5) and (-0.5, 0.75) .. (0, 0.75) .. controls (+0.5, 0.75) and (+0.5, 0.5) .. (0, 0);

\node at (0, 1)[scale=0.8]{$\alpha$};

\node at (0, 0)[circle, fill=black][scale=0.5]{};

\foreach \x in {-1.5, 1.5}
{
    \node at (\x, 0)[circle, fill=white][scale=0.5]{};
    \draw[line width=0.2mm] (\x, 0) circle (0.09cm);
}

\node at (-1, 0.25)[scale=0.8]{$1$};
\node at (+1, 0.25)[scale=0.8]{$1$};
}
\end{tikzpicture}
\caption{The weighted graph $P_{3}(\alpha)$:
antipodal perfect state transfer occurs at time $\tau$ relative to the adjacency matrix
if and only if $\cos(\tau\alpha/2)\cos(\tau\Delta) = -1$,
where $\Delta^{2} = (\alpha/2)^{2} + 2$.
}
\end{center}
\hrule
\end{figure}
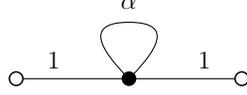

We show that $P_{3}$ has perfect state transfer relative to the normalized Laplacian.

\begin{fact} \label{fact:normalized-p3}
$P_{3}$ has antipodal perfect state transfer relative to the normalized Laplacian at time $\pi$.
\end{fact}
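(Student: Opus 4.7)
The plan is a direct spectral computation: diagonalize $\NLp(P_3)$, expand $e^{-it\NLp}$ in the eigenbasis, read off the $(1,3)$-entry, and evaluate at $t = \pi$.

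First I would label the vertices $1,2,3$ with $2$ the middle vertex, so that $D = \diag(1,2,1)$ and
\[
\NLp(P_3) = \Id - D^{-1/2} A D^{-1/2}
= \begin{pmatrix} 1 & -1/\sqrt{2} & 0 \\ -1/\sqrt{2} & 1 & -1/\sqrt{2} \\ 0 & -1/\sqrt{2} & 1 \end{pmatrix}.
\]
A short determinant calculation factors the characteristic polynomial as $-\lambda(1-\lambda)(2-\lambda)$, giving the spectrum $\{0,1,2\}$. The corresponding orthonormal eigenvectors can be written down by inspection: the $0$-eigenvector is $v_0 = \tfrac{1}{2}(1,\sqrt{2},1)^T$ (which is always $D^{1/2}\one/\|D^{1/2}\one\|$ for the normalized Laplacian), the $1$-eigenvector is $v_1 = \tfrac{1}{\sqrt{2}}(1,0,-1)^T$ (by the reflection symmetry of $P_3$ swapping the endpoints), and the $2$-eigenvector is $v_2 = \tfrac{1}{2}(1,-\sqrt{2},1)^T$ (orthogonal to the other two).

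Next I would use the spectral resolution $e^{-it\NLp} = \sum_{k=0}^{2} e^{-it\lambda_k} v_k v_k^T$ to compute the antipodal transition amplitude
\[
\tbracket{\uket{1}}{e^{-it\NLp}}{\uket{3}}
= \tfrac{1}{4} - \tfrac{1}{2} e^{-it} + \tfrac{1}{4} e^{-2it}
= \tfrac{1}{4}\bigl(1 - e^{-it}\bigr)^2.
\]
Setting $t = \pi$ yields $\tfrac{1}{4}(1-(-1))^2 = 1$, so the $(1,3)$-entry of $e^{-i\pi\NLp(P_3)}$ has unit magnitude, which by definition (\ref{eqn:def-pst}) is exactly antipodal perfect state transfer at time $\pi$.

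There is essentially no obstacle here: the graph has three vertices, its normalized Laplacian has integer spectrum $\{0,1,2\}$ with eigenvectors determined by the $1{\leftrightarrow}3$ reflection symmetry, and the needed amplitude factors perfectly as $\tfrac{1}{4}(1-e^{-it})^2$, which is manifestly $\pm 1$ precisely when $e^{-it} = -1$. The only thing worth double-checking is the identification of the $0$-eigenvector as $D^{1/2}\one$ rather than $\one$ (a standard feature of the normalized Laplacian) so that the eigenvector normalizations are correct.
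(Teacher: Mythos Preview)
Your proof is correct. The spectrum, eigenvectors, and amplitude computation all check out, and the factorization $\tfrac{1}{4}(1-e^{-it})^{2}$ makes the conclusion at $t=\pi$ transparent.

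The paper takes a slightly different route: it observes the matrix identity $\NLp(P_{3}) = \Id - \tfrac{1}{\sqrt{2}}A(P_{3})$ (which holds because $D^{-1/2}A(P_{3})D^{-1/2}$ happens to equal $\tfrac{1}{\sqrt{2}}A(P_{3})$ for this particular graph), so that $e^{-it\NLp(P_{3})} = e^{-it}\,e^{i(t/\sqrt{2})A(P_{3})}$, and then cites the known fact that $A(P_{3})$ has antipodal perfect state transfer at time $\pi/\sqrt{2}$. Your approach is a direct spectral computation that is fully self-contained and yields an explicit closed form for the amplitude at all times; the paper's approach is shorter but relies on an external result and on spotting the scalar relation between $\NLp(P_{3})$ and $A(P_{3})$. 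For a three-vertex graph the difference is minor, but your version has the advantage of not depending on any cited lemma.
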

\begin{proof}
Note that $\NLp(P_{3}) = \Id - \frac{1}{\sqrt{2}}A(P_{3})$ with eigenvalues $\lambda = 0,1,2$.
Therefore, we have
\begin{equation}
e^{-it\NLp(P_{3})} 
= \exp\left(-it\left[\Id - \frac{1}{\sqrt{2}}A(P_{3})\right]\right) 
= e^{-it} e^{i(t/\sqrt{2})A(P_{3})}.
\end{equation}
Since $A(P_{3})$ has antipodal perfect state transfer at time $\pi/\sqrt{2}$ (see Godsil \cite{godsil-dm11}),
$P_{3}$ has antipodal perfect state transfer at time $\pi$ relative to the normalized Laplacian.
\end{proof}

\medskip

In what follows, we consider a path on three vertices with a weighted self-loop on the middle
vertex. We describe a necessary and sufficient condition on the weight of the self-loop that yields
antipodal perfect state transfer. This simple graph will be useful later when we analyze 
the double cone $\comp{K_{2}} + G$ on Laplacians. 

\begin{fact} \label{fact:p3-optimal}
For a real number $\alpha \in \RR$, let $\GG(\alpha)$ be a graph on the vertex set $\{0,1,2\}$
with the following adjacency matrix:
\begin{equation}
A(\GG(\alpha)) =
\begin{bmatrix}
0 & 1 & 0 \\
1 & \alpha & 1 \\
0 & 1 & 0
\end{bmatrix}
\end{equation}
Then, $G(\alpha)$ has antipodal perfect state transfer relative to the adjacency matrix 
at time $\tau$ if and only if
\begin{equation} \label{eqn:p3-iff}
e^{-i\tau\alpha/2}\cos(\Delta \tau) = -1,
\end{equation}
where $\Delta = \sqrt{(\alpha/2)^{2} + 2}$.
\end{fact}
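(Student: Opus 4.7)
The plan is to exploit the swap symmetry of $\GG(\alpha)$ that fixes the middle vertex and swaps the endpoints $0$ and $2$. This symmetry splits $\RR^{3}$ into two invariant subspaces: the antisymmetric subspace, spanned by $\ket{a} = (\uket{0}-\uket{2})/\sqrt{2}$, and the symmetric subspace, spanned by $\ket{s} = (\uket{0}+\uket{2})/\sqrt{2}$ and $\uket{1}$. A direct check gives $A(\GG(\alpha))\ket{a} = 0$, while on the symmetric subspace $A(\GG(\alpha))$ acts as
\begin{equation}
\tilde{A} = \begin{bmatrix} 0 & \sqrt{2} \\ \sqrt{2} & \alpha \end{bmatrix}.
\end{equation}
Writing $\uket{0} = (\ket{s}+\ket{a})/\sqrt{2}$ and $\uket{2} = (\ket{s}-\ket{a})/\sqrt{2}$, one obtains
\begin{equation}
\tbracket{\buket{2}}{e^{-i\tau A(\GG(\alpha))}}{\buket{0}}
= \tfrac{1}{2}\bigl[\tbracket{s}{e^{-i\tau\tilde{A}}}{s} - 1\bigr],
\end{equation}
so antipodal perfect state transfer at time $\tau$ is equivalent to $\tbracket{s}{e^{-i\tau\tilde{A}}}{s} = -1$.

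Next I would compute $e^{-i\tau\tilde{A}}$ by splitting off the trace: write $\tilde{A} = (\alpha/2)\Id + M$ with
\begin{equation}
M = \begin{bmatrix} -\alpha/2 & \sqrt{2} \\ \sqrt{2} & \alpha/2 \end{bmatrix},
\end{equation}
which satisfies $M^{2} = \bigl((\alpha/2)^{2}+2\bigr)\Id = \Delta^{2}\Id$. Hence
\begin{equation}
e^{-i\tau\tilde{A}} = e^{-i\tau\alpha/2}\Bigl[\cos(\Delta\tau)\,\Id - i\,\tfrac{\sin(\Delta\tau)}{\Delta}\,M\Bigr],
\end{equation}
and reading off the $(s,s)$-entry yields
\begin{equation}
\tbracket{s}{e^{-i\tau\tilde{A}}}{s} = e^{-i\tau\alpha/2}\Bigl[\cos(\Delta\tau) + i\,\tfrac{\alpha}{2\Delta}\sin(\Delta\tau)\Bigr].
\end{equation}

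Finally, I would verify that this quantity equals $-1$ if and only if the stated condition (\ref{eqn:p3-iff}) holds. The ``if'' direction is immediate once one observes that $e^{-i\tau\alpha/2}\cos(\Delta\tau) = -1$ forces $|\cos(\Delta\tau)| = 1$, hence $\sin(\Delta\tau)=0$, so the imaginary term vanishes. For the ``only if'' direction, note that $(\alpha/(2\Delta))^{2} = (\alpha/2)^{2}/((\alpha/2)^{2}+2) < 1$, so the bracketed expression has modulus $\sqrt{\cos^{2}(\Delta\tau) + (\alpha/(2\Delta))^{2}\sin^{2}(\Delta\tau)}$, which is strictly less than $1$ unless $\sin(\Delta\tau) = 0$; combined with $\tbracket{s}{e^{-i\tau\tilde{A}}}{s}=-1$, this yields (\ref{eqn:p3-iff}). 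There is no serious obstacle here; the only subtlety is this last step where one must rule out compensation between the real and imaginary parts, which is handled by the elementary bound on $\alpha/(2\Delta)$.
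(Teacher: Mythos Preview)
Your proposal is correct and arrives at exactly the same formula for the $(2,0)$-entry as the paper, namely $\tfrac{1}{2}\bigl[e^{-i\tau\alpha/2}\bigl(\cos(\Delta\tau)+i\tfrac{\alpha}{2\Delta}\sin(\Delta\tau)\bigr)-1\bigr]$, but the route is organized differently. The paper computes all three eigenpairs of $A(\GG(\alpha))$ explicitly and sums the spectral projections; you instead exploit the swap symmetry to block-diagonalize into the trivial antisymmetric eigenvector $\ket{a}$ and a $2\times 2$ symmetric block, which you exponentiate via the traceless trick $M^{2}=\Delta^{2}\Id$. Your path avoids normalizing the nontrivial eigenvectors and is a bit cleaner; the paper's is more direct spectral calculus. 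For the ``only if'' direction the paper appeals to a convex-combination-on-the-unit-circle argument, while you use the equivalent elementary modulus bound coming from $|\alpha/(2\Delta)|<1$; these are the same idea in different clothing. One small point worth making explicit: the reduction ``PST at time $\tau$ is equivalent to $\tbracket{s}{e^{-i\tau\tilde A}}{s}=-1$'' uses that this diagonal entry of a unitary has modulus at most $1$, so that $\bigl|\tfrac{1}{2}(X-1)\bigr|=1$ with $|X|\le 1$ forces $X=-1$ by the triangle inequality.
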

\begin{proof}
Let $\talpha = \alpha/2$ and $\Delta = \sqrt{\talpha^{2} + 2}$.
The eigenvalues of $A(\GG(\alpha))$ are
$\lambda_{0} = 0$ and 
$\lambda_{\pm} = \talpha \pm \Delta$.
with the following corresponding eigenvectors:
\begin{equation}
\vket{z_{0}} = 
\frac{1}{\sqrt{2}}
\begin{bmatrix}
1 \\ 0 \\ -1
\end{bmatrix},
\ \hspace{.3in}
\vket{z_{\pm}} = 
\frac{1}{\sqrt{2\Delta(\Delta \pm \talpha)}}
\begin{bmatrix}
1 \\ \lambda_{\pm} \\ 1
\end{bmatrix}
\end{equation}
where we have used $2 + \lambda_{\pm}^{2} = 2\Delta(\Delta \pm \talpha)$.

Thus, the antipodal fidelity of the quantum walk $e^{-itA(\GG(\alpha))}$ is given by
\begin{eqnarray}
\tbracket{\buket{2}}{e^{-itA(\GG(\alpha))}}{\buket{0}}
	& = & -\frac{1}{2} + \frac{1}{2\Delta}\sum_{\pm} \frac{e^{-it\lambda_{\pm}}}{(\Delta \pm \talpha)} \\
\ignore{
	& = & -\frac{1}{2} + \frac{e^{-it\talpha}}{2\Delta}
		\left[
		\frac{(\Delta - \talpha)e^{-it\Delta} + (\Delta + \talpha)e^{it\Delta}}{\Delta^{2} - \talpha^{2}}
		\right],
		\mbox{ since $\lambda_{\pm} = \talpha \pm \Delta$ } \\
	& = & -\frac{1}{2} + \frac{e^{-it\talpha}}{4\Delta}
		\left[
		(\Delta - \talpha)e^{-it\Delta} + (\Delta + \talpha)e^{it\Delta}
		\right],
		\mbox{ since $\Delta^{2} = \talpha^{2} + 2$} \\
}
	\label{eqn:fidelity-p3}
	& = & -\frac{1}{2} + \frac{e^{-it\talpha}}{2}
		\left[\cos(\Delta t) + i\left(\frac{\talpha}{\Delta}\right)\sin(\Delta t)\right] 
\end{eqnarray}
Let $g(\alpha,t) = \cos(\Delta t) + i(\talpha/\Delta)\sin(\Delta t)$.
Note that $|g(\alpha,t)| = 1$ if and only if $\Delta t \in \ZZ\pi$,
since $\talpha/\Delta < 1$.

We show that there is a time $\tau$ so 
$|\tbracket{\buket{2}}{e^{-i\tau A(\GG(\alpha))}}{\buket{0}}| = 1$ 
if and only if $e^{-i\tau\talpha}\cos(\Delta \tau) = -1$.
By inspecting (\ref{eqn:fidelity-p3}), the condition is clearly sufficient.
To show it is necessary, suppose there are $\tau,\theta \in \RR$ so that
$\tbracket{\buket{2}}{e^{-i\tau A(\GG(\alpha))}}{\buket{0}} = e^{i\theta}$.
From (\ref{eqn:fidelity-p3}), we have
\begin{equation} \label{eqn:middle-eqn}
e^{i\theta} = -\frac{1}{2} + \frac{1}{2} e^{-i\tau\talpha}g(\alpha,\tau).
\end{equation}
This implies that $|g(\alpha,\tau)| = 1$ (by taking the complex conjugate and multiplying).
Note that if a convex combination of numbers of the form $e^{i\beta_{k}}$ lies on the complex unit circle, 
then all $\beta_{j}$ are congruent modulo $2\pi$.
Hence, $e^{-i\tau\talpha}\cos(\Delta \tau) = -1$.
\end{proof}


\section{Standard Laplacian}

\subsection{Complements}

We show that perfect state transfer relative to the Laplacian is closed under complementation. 
Relative to the adjacency matrix, this only holds for regular graphs.

\begin{theorem} \label{thm:laplacian-complement}
If $G$ is a graph with perfect state transfer between vertices $u$ and $v$ at time $t$ 
relative to the standard Laplacian, where
\begin{equation}
|V(G)|t \in 2\pi\ZZ,
\end{equation}
then $\comp{G}$ has perfect state transfer between vertices $u$ and $v$ at time $t$ 
relative to the standard Laplacian.
\end{theorem}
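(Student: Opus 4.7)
The plan is to exploit the identity $\DLp(\comp{G}) = \DLp(K_n) - \DLp(G) = n\Id - J - \DLp(G)$, where $n = |V(G)|$. Since $G$ and $\comp{G}$ both have the all-ones vector $\one$ in the kernel of their Laplacians, we have $\DLp(G)\one = 0$, which gives $J\DLp(G) = \one\one^T \DLp(G) = 0$ and, by symmetry, $\DLp(G)J = 0$. So $J$ commutes with $\DLp(G)$, and the three summands in the expression for $\DLp(\comp{G})$ mutually commute. This lets us factor the unitary as
\begin{equation}
e^{-it\DLp(\comp{G})} = e^{-itn\Id}\cdot e^{itJ}\cdot e^{it\DLp(G)}.
\end{equation}

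Next I would compute $e^{itJ}$ in closed form. Because $J^2 = nJ$, a telescoping of the power series gives
\begin{equation}
e^{itJ} = \Id + \frac{e^{itn}-1}{n}J.
\end{equation}
To handle the factor $e^{it\DLp(G)}$ (note the sign), I would invoke the PST hypothesis $e^{-it\DLp(G)}\uket{u} = \gamma\uket{v}$ with $|\gamma|=1$ and use the fact that $\DLp(G)$ is real symmetric, so entrywise conjugation yields $e^{it\DLp(G)}\uket{u} = \bar\gamma\,\uket{v}$.

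Putting the pieces together, I would apply $\ubra{v}$ to the right-hand side and use $\ubra{v}\uket{v}=1$ and $\ubra{v}\one = 1$ to get
\begin{equation}
\tbracket{\buket{v}}{e^{-it\DLp(\comp{G})}}{\buket{u}}
= \bar\gamma\,e^{-itn}\left(1 + \frac{e^{itn}-1}{n}\right).
\end{equation}
The assumption $nt \in 2\pi\ZZ$ forces $e^{itn}=1$, collapsing the parenthesized factor to $1$, so the magnitude of the whole expression equals $|\bar\gamma\,e^{-itn}| = 1$, giving PST on $\comp{G}$ between $u$ and $v$ at the same time $t$.

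The main obstacle is conceptual rather than computational: one must identify exactly the correction term introduced by the $J$-summand in $\DLp(\comp{G})$ and recognize that the hypothesis $nt \in 2\pi\ZZ$ is precisely what annihilates that correction. All other steps (commutativity of $J$ with $\DLp(G)$, the closed form for $e^{itJ}$, and the conjugation trick to move from $-it$ to $+it$) are short observations once the decomposition above is in hand.
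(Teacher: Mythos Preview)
Your proof is correct and follows essentially the same route as the paper: the decomposition $\DLp(\comp{G}) = n\Id - J - \DLp(G)$, the commutativity of $J$ with $\DLp(G)$, the closed form for $e^{itJ}$, and the observation that $nt \in 2\pi\ZZ$ kills the $J$-correction are exactly the paper's argument. The paper stops at the matrix identity $e^{-it\DLp(\comp{G})} = e^{it\DLp(G)}$ and leaves the conjugation step implicit, whereas you make it explicit by tracking the $(u,v)$-entry; this is a cosmetic difference only.
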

\begin{proof}
Let $G$ be a graph on $n$ vertices. 
The standard Laplacian of $\comp{G}$ is given by
\begin{equation}
\DLp(\comp{G}) = [(n-1)\Id - D(G)] - [J - \Id - A(G)] = n\Id - J - \DLp(G).
\end{equation}
Since $\DLp(G)$ commutes with $J$, we get
\begin{equation}
e^{-it\DLp(\comp{G})} 
= e^{-int} e^{itJ} e^{it\DLp(G)}.
\end{equation}
By the spectral theorem, 
$e^{itJ} = e^{int}J/n + \Id - J/n$, which implies
\begin{equation}
e^{-it\DLp(\comp{G})} 
= \textstyle e^{-int} \left[ (e^{int}-1)\frac{1}{n}J + \Id \right] e^{it\DLp(G)}.
\end{equation}
Thus, if $nt \in 2\pi\ZZ$, we obtain $e^{-it\DLp(\comp{G})} = e^{it\DLp(G)}$.
\end{proof}

\smallskip

We show applications of Theorem \ref{thm:laplacian-complement} to perfect state transfer on
graph joins and on double cones relative to the standard Laplacian.

\begin{corollary} \label{cor:laplacian-join}
Let $\comp{G}$ be a graph which has perfect state transfer between vertices $u$ and $v$ at time $t$
relative to the standard Laplacian. 
For any graph $H$, the join $G + H$ has perfect state transfer between vertices $u$ and $v$
at time $t$ relative to the standard Laplacian provided
\begin{equation}
t(|V(G)| + |V(H)|) \in 2\pi\ZZ.
\end{equation}
\end{corollary}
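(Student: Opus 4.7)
The plan is to reduce this to Theorem~\ref{thm:laplacian-complement} by lifting the hypothesis on $\comp{G}$ to a perfect state transfer statement on $\comp{G+H}$, and then invoking the complement theorem once. The crucial graph-theoretic ingredient is the identity
\begin{equation*}
\comp{G+H} = \comp{G} \cup \comp{H},
\end{equation*}
which is immediate from the definition $G+H = \comp{\comp{G}\cup\comp{H}}$.

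Given this identity, I would observe that the standard Laplacian of a disjoint union decomposes as a direct sum, $\DLp(\comp{G+H}) = \DLp(\comp{G}) \oplus \DLp(\comp{H})$, and hence so does the corresponding unitary: $e^{-it\DLp(\comp{G+H})} = e^{-it\DLp(\comp{G})} \oplus e^{-it\DLp(\comp{H})}$. Since $u$ and $v$ both lie in the $\comp{G}$-block, the $(u,v)$-entry of $e^{-it\DLp(\comp{G+H})}$ coincides with the $(u,v)$-entry of $e^{-it\DLp(\comp{G})}$. By the hypothesis that $\comp{G}$ has perfect state transfer between $u$ and $v$ at time $t$, this entry has unit magnitude, so $\comp{G+H}$ enjoys perfect state transfer between $u$ and $v$ at time $t$ relative to the standard Laplacian.

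To finish, I would apply Theorem~\ref{thm:laplacian-complement} to the graph $\comp{G+H}$, which has exactly $|V(G)| + |V(H)|$ vertices. The assumption $t(|V(G)|+|V(H)|) \in 2\pi\ZZ$ is precisely the numerical condition of that theorem, so it delivers perfect state transfer between $u$ and $v$ at time $t$ in the complement of $\comp{G+H}$, namely $G+H$. There is essentially no real obstacle: the argument is a two-step assembly (disjoint-union block decomposition, then double complement), and the only care needed is bookkeeping which graph is being complemented at each stage.
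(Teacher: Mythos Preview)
Your proof is correct and is essentially the same as the paper's: the paper simply notes $G+H = \comp{\comp{G}\cup\comp{H}}$ and invokes Theorem~\ref{thm:laplacian-complement}, leaving implicit exactly the disjoint-union block decomposition that you spell out. Your version just makes explicit the step that perfect state transfer on $\comp{G}$ lifts to $\comp{G}\cup\comp{H}$ before complementing.
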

\begin{proof}
We note that $G + H = \comp{\comp{G} \cup \comp{H}}$ and apply Theorem \ref{thm:laplacian-complement}.
\end{proof}

\smallskip

\begin{corollary} \label{cor:laplacian-double-cone}
The join $\comp{K_{2}} + H$ has perfect state transfer at time $\pi/2$ between the vertices of
$\comp{K_{2}}$ relative to the standard Laplacian 
if $|V(H)| \equiv 2\pmod{4}$.
\end{corollary}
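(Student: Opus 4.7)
The plan is to deduce this as a direct application of Corollary \ref{cor:laplacian-join} with $G=\comp{K_{2}}$, since then $\comp{G}=K_{2}$ and the hypothesis reduces to checking that $K_{2}$ has perfect state transfer at time $\pi/2$ under the standard Laplacian, together with verifying the congruence condition.

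First I would compute the standard-Laplacian quantum walk on $K_{2}$ explicitly. Since $\DLp(K_{2}) = \Id - X$, where $X=\begin{pmatrix}0&1\\1&0\end{pmatrix}$ is the swap matrix, we get
\begin{equation}
e^{-it\DLp(K_{2})} = e^{-it}\bigl(\cos(t)\Id + i\sin(t)X\bigr),
\end{equation}
whose off-diagonal entry has magnitude $|\sin(t)|$. Hence $K_{2}$ has antipodal perfect state transfer exactly at the times $t = \pi/2 + k\pi$ with $k\in\ZZ$; in particular at $t=\pi/2$ between the two vertices.

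Next I would invoke Corollary \ref{cor:laplacian-join} with $G=\comp{K_{2}}$, $\comp{G}=K_{2}$, and the pair $u,v$ equal to the two vertices of $\comp{K_{2}}$. The corollary tells us that $\comp{K_{2}} + H$ has perfect state transfer between $u$ and $v$ at time $\pi/2$ provided that
\begin{equation}
\frac{\pi}{2}\bigl(|V(\comp{K_{2}})| + |V(H)|\bigr) = \frac{\pi}{2}(2 + |V(H)|) \in 2\pi\ZZ,
\end{equation}
which is equivalent to $2 + |V(H)| \equiv 0 \pmod 4$, i.e.\ $|V(H)| \equiv 2 \pmod 4$. This is exactly the hypothesis, so the conclusion follows.

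There is essentially no serious obstacle here: the corollary does all the work, and the only computation needed is the elementary spectral decomposition of $\DLp(K_{2})$. The one small thing worth stating explicitly, for the sake of clarity, is that the perfect state transfer between the two vertices of $\comp{K_{2}}$ inside the join is the same (at the level of the relevant $2\times 2$ submatrix of $e^{-it\DLp(\comp{K_{2}}+H)}$) as the perfect state transfer between those vertices in $K_{2}$, which is precisely what Corollary \ref{cor:laplacian-join} encodes via the complementation identity $G+H=\comp{\comp{G}\cup\comp{H}}$ and Theorem \ref{thm:laplacian-complement}.
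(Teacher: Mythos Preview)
Your proposal is correct and follows essentially the same approach as the paper: apply Corollary~\ref{cor:laplacian-join} with $\comp{G}=K_{2}$, note that $K_{2}$ has Laplacian perfect state transfer at time $\pi/2$, and reduce the divisibility condition $(2+|V(H)|)\pi/2\in 2\pi\ZZ$ to $|V(H)|\equiv 2\pmod 4$. The only difference is that you spell out the spectral computation for $\DLp(K_{2})$ explicitly, whereas the paper simply asserts the PST on $K_{2}$.
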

\begin{proof}
We apply Corollary \ref{cor:laplacian-join} with $\comp{G} = K_{2}$ which has perfect state transfer
at time $t = \pi/2$. Thus, $\comp{K_{2}}+H$ has perfect state transfer if
$(2 + |V(H)|)\pi/2 \in 2\pi\ZZ$, which proves the claim.
\end{proof}

\smallskip

\par\noindent{\em Remark}:
Corollary \ref{cor:laplacian-double-cone} is a generalization of the main result due to
Bose, Casaccino, Mancini and Severini \cite{bcms09} which studied Laplacian perfect state transfer 
in complete graphs with a missing edge. By viewing $K_{n} \setminus e$ as a double cone and 
using closure under complementation for Laplacian perfect state transfer, we found a simpler proof
for a more general result. 
Also, contrast Corollary \ref{cor:laplacian-double-cone} with a similar result in 
the adjacency matrix model due to Angeles-Canul \etal \cite{anoprt10}.
They showed that perfect state transfer occurs on $\comp{K_{2}} + H$ under 
a more complicated number-theoretic conditions and only when $H$ is regular.

\subsection{Double cones}

In this section, we show a tighter version of Corollary \ref{cor:laplacian-double-cone} 
using the machinery of quotient graphs relative to {\em almost equitable partitions}.
This provides a characterization of perfect state transfer on the double cones relative to the
standard Laplacian.
First, we state a symmetric quotient graph within the framework developed by Cardoso \etal \cite{cdr07}.

\begin{fact} \label{fact:equitable-laplacian} (based on Cardoso \etal \cite{cdr07}) \\
Let $G=(V,E)$ be a graph with an {almost} equitable partition $\pi$ given by $V = \bigcup_{k=1}^{m} V_{k}$
with constants $d_{j,k}$, for $1 \le j < k \le m$. Let $\NP$ be the (normalized) partition matrix of $\pi$, where
$\NP^{T}\NP = \Id_{m}$.
Then:
\begin{enumerate}
\item $\NP\NP^{T} = \diag(\{J_{|V_{k}|} : k \in [m]\})$ which commutes with $\DLp(G)$.
\item $\DLp(G)\NP = \NP B$, where $B$ is a $m \times m$ matrix defined as
\begin{equation}
B_{j,k} = 
	\left\{\begin{array}{ll}
	-\sqrt{d_{j,k}d_{k,j}} & \mbox{ if $j \neq k$ } \\
	\sum_{\ell \neq j} d_{j,\ell} & \mbox{ if $j = k$}
	\end{array}\right.
\end{equation}
where $j,k = 1,\ldots,m$.
\end{enumerate}
Thus, $B = \DLp(G/\pi)$ is the standard Laplacian of the quotient graph $G/\pi$. 
\end{fact}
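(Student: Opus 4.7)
The plan is to establish part (2) first by a direct column-wise computation, then derive part (1) as an immediate consequence of the symmetry of $B$ together with the symmetry of $\DLp(G)$.

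Fix an index $k$ and consider the $k$-th column $\NP\uket{k} = \frac{1}{\sqrt{|V_{k}|}}\sum_{u \in V_{k}} \uket{u}$. I would evaluate $\DLp(G)\NP\uket{k} = (D(G)-A(G))\NP\uket{k}$ at an arbitrary vertex $v$. The diagonal part contributes $\deg(v)/\sqrt{|V_{k}|}$ if $v \in V_{k}$ and zero otherwise, while the adjacency part contributes $|N(v) \cap V_{k}|/\sqrt{|V_{k}|}$ in every case. For $v \in V_{j}$ with $j \neq k$, the almost equitable condition yields $|N(v)\cap V_{k}| = d_{j,k}$, so the resulting entry is $-d_{j,k}/\sqrt{|V_{k}|}$, constant on $V_{j}$. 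For $v \in V_{k}$, I would rewrite $\deg(v)-|N(v)\cap V_{k}| = \sum_{\ell \neq k}|N(v)\cap V_{\ell}|$ and apply the almost equitable condition to each summand (legitimate because $\ell \neq k$), obtaining $\sum_{\ell \neq k} d_{k,\ell}/\sqrt{|V_{k}|}$, also constant on $V_{k}$. Since $\DLp(G)\NP\uket{k}$ is constant on each cell, it lies in the column space of $\NP$, and writing it as $\sum_{j} B_{j,k}\NP\uket{j}$ lets me read off the coefficients. The off-diagonal form $B_{j,k} = -\sqrt{d_{j,k}d_{k,j}}$ follows from the standard double-counting identity $|V_{j}|\,d_{j,k} = |V_{k}|\,d_{k,j}$ (both sides count edges between $V_{j}$ and $V_{k}$), which also forces $B$ to be symmetric.

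For part (1), the matrix $\NP\NP^{T}$ is a scalar multiple of the orthogonal projection onto the column space of $\NP$, whose image is precisely the space of vectors constant on each cell. Part (2) shows $\DLp(G)$ preserves this subspace, and since $\DLp(G)$ is symmetric it also preserves the orthogonal complement, so $\DLp(G)$ commutes with the projection. Equivalently, using the symmetry of $B$ one computes directly
\begin{equation}
\DLp(G)\,\NP\NP^{T} = \NP B\, \NP^{T} = \NP\NP^{T}\,\DLp(G).
\end{equation}

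The main subtlety lies in the case $v \in V_{k}$: under an almost equitable partition the within-cell neighborhood sizes $|N(v)\cap V_{k}|$ need not be constant on $V_{k}$, which is exactly what would break the analogous statement for $A(G)$ in Fact~\ref{fact:equitable-adjacency}. The Laplacian absorbs this variation because the diagonal term $\deg(v)$ cancels $|N(v)\cap V_{k}|$, leaving only the cross-cell count $\sum_{\ell \neq k} d_{k,\ell}$, which \emph{is} controlled by the almost equitable condition. This is the structural reason the quotient construction extends from $A(G)$ to $\DLp(G)$ once equitability is relaxed to almost equitability, and recognizing this cancellation is the only nontrivial step of the argument.
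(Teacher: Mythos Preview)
Your argument is correct and complete. The paper's own proof is a one-line deferral to Cardoso \etal\ \cite{cdr07}, so you have supplied precisely the self-contained computation that the citation hides. One structural point worth flagging: for the signless Laplacian analogue (Fact~\ref{fact:equitable-signless}) the paper proves commutativity first by observing that $\NP\NP^{T}$ commutes with $D(G)$, and then deduces the intertwining relation. That route is unavailable here, since under a merely almost equitable partition the degrees need not be constant on cells, so $\NP\NP^{T}$ need not commute with $D(G)$ or $A(G)$ individually. Your choice to establish $\DLp(G)\NP=\NP B$ directly and then read off commutativity from the symmetry of $B$ is exactly the right adaptation, and your final paragraph correctly isolates the cancellation $\deg(v)-|N(v)\cap V_{k}|=\sum_{\ell\neq k}|N(v)\cap V_{\ell}|$ as the reason this works for $\DLp(G)$ but not for $A(G)$. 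One tiny quibble: since $\NP^{T}\NP=\Id_{m}$, the matrix $\NP\NP^{T}$ is already the orthogonal projection onto the column space, not merely a scalar multiple of it; and strictly speaking you never verify the explicit block form $\NP\NP^{T}=\diag(\{|V_{k}|^{-1}J_{|V_{k}|}\})$, though that is immediate from the definition of $\NP$.
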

\begin{proof}
Follows immediately from Cardoso \etal \cite{cdr07} via the normalized partition matrix.
\end{proof}

\medskip

\par\noindent
Using Fact \ref{fact:equitable-laplacian}, we provide the following tight version of 
Corollary \ref{cor:laplacian-double-cone}.

\begin{corollary} \label{cor:laplacian-double-cone-tight}
The join $\comp{K_{2}} + H$ has perfect state transfer at time $\pi/2$ between the vertices of
$K_{2}$ relative to the standard Laplacian 
{\em if and only if} $|V(H)| \equiv 2\pmod{4}$.
\end{corollary}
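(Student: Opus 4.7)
My plan is to reduce the $(n+2)$-vertex problem to a $3\times 3$ quotient via the natural almost equitable partition $\pi = \{\{u\},\{v\},V(H)\}$, where $u,v$ are the two vertices of $\comp{K_{2}}$ and $n := |V(H)|$. The partition is almost equitable regardless of the structure of $H$: each vertex of $V(H)$ has exactly one neighbor in $\{u\}$ and one in $\{v\}$, and $u,v$ each have exactly $n$ neighbors in $V(H)$. Applying Fact \ref{fact:equitable-laplacian}, the quotient Laplacian is the symmetric matrix
\[
B \;=\; \begin{pmatrix} n & 0 & -\sqrt{n} \\ 0 & n & -\sqrt{n} \\ -\sqrt{n} & -\sqrt{n} & 2 \end{pmatrix}.
\]
Since $u$ and $v$ lie in singleton cells of $\pi$, the argument of Lemma \ref{lemma:lifting} carries over verbatim to the standard Laplacian using Fact \ref{fact:equitable-laplacian} in place of Fact \ref{fact:equitable-adjacency}, so the $(u,v)$-amplitude of $e^{-it\DLp(\comp{K_{2}}+H)}$ equals the corresponding amplitude of $e^{-itB}$.

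The next step is to diagonalize $B$ using the $u\leftrightarrow v$ symmetry. The antisymmetric vector $\tfrac{1}{\sqrt{2}}(1,-1,0)$ is an eigenvector with eigenvalue $n$, and restricting to the symmetric subspace yields a $2\times 2$ block with trace $n+2$ and determinant $0$, giving the remaining eigenvalues $0$ and $n+2$. Expanding $\buket{1}$ and $\buket{2}$ in the resulting orthonormal eigenbasis and collecting terms yields
\[
\tbracket{\buket{2}}{e^{-itB}}{\buket{1}} \;=\; \frac{1}{n+2} + \frac{n}{2(n+2)}\,e^{-it(n+2)} - \frac{1}{2}\,e^{-itn}.
\]

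Finally, I would evaluate at $t=\pi/2$, using $e^{-i(n+2)\pi/2} = -e^{-in\pi/2}$ to collapse the two exponentials into a single fourth root of unity $\omega := e^{-in\pi/2}$, giving the amplitude $\tfrac{1}{n+2} - \tfrac{n+1}{n+2}\,\omega$. Squaring the modulus condition $|1-(n+1)\omega| = n+2$ and expanding forces $\mathrm{Re}(\omega) = -1$, so $\omega = -1$, i.e., $n\equiv 2\pmod 4$. This establishes both directions of the equivalence. The main obstacle is not computational but structural: extending the lifting lemma from equitable partitions with the adjacency matrix to almost equitable partitions with the standard Laplacian. Fortunately, Fact \ref{fact:equitable-laplacian} supplies exactly the two ingredients Lemma \ref{lemma:lifting}'s proof uses, namely the intertwining $\DLp(G)\NP = \NP B$ and the commutation of $\NP\NP^{T}$ with $\DLp(G)$, so the extension is routine.
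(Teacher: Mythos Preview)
Your proof is correct and begins exactly where the paper's does: the same almost-equitable partition $\{\{u\},\{v\},V(H)\}$ and the same $3\times 3$ Laplacian quotient $B$ (the paper merely orders the cells as $\{u\},V(H),\{v\}$), together with the Laplacian version of the lifting lemma obtained from Fact~\ref{fact:equitable-laplacian}. From there the two arguments diverge. You diagonalize $B$ directly, write down the closed-form amplitude $\frac{1}{n+2}+\frac{n}{2(n+2)}e^{-it(n+2)}-\frac{1}{2}e^{-itn}$, and evaluate at $t=\pi/2$, reducing the question to $|1-(n+1)\omega|=n+2$ with $\omega=e^{-in\pi/2}$; this is self-contained and proves exactly the stated equivalence at the specific time $\pi/2$. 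The paper instead applies the affine change $\tilde{B}=-\tfrac{1}{\sqrt{n}}(B-n\Id)$ to land on the weighted $P_{3}$ of Fact~\ref{fact:p3-optimal} with $\alpha=(n-2)/\sqrt{n}$, and then argues via the parity of the rational number $\talpha/\Delta=(n-2)/(n+2)$. The paper's route reuses the $P_{3}(\alpha)$ machinery and in fact yields the stronger conclusion that there is no perfect state transfer at \emph{any} time unless $n\equiv 2\pmod{4}$; your route avoids that auxiliary fact entirely and is shorter for the statement as written. Your observation that Fact~\ref{fact:equitable-laplacian} supplies both the intertwining relation and the commutation needed to port Lemma~\ref{lemma:lifting} to the Laplacian setting is exactly right.
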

\begin{proof}
Let $G = \comp{K_{2}} + H$, where $H$ is a $n$-vertex graph.
By Fact \ref{fact:equitable-laplacian}, the Laplacian quotient $B$ of $G$ is given by
\begin{equation}
B = 
\begin{bmatrix}
n & -\sqrt{n} & 0 \\
-\sqrt{n} & 2 & -\sqrt{n} \\
0 & -\sqrt{n} & n
\end{bmatrix}
\end{equation}
In a quantum walk, up to phase factors, time reversal, and time dilations,
$B$ is equivalent to 
\begin{equation}
\tilde{B} =
\frac{-1}{\sqrt{n}}(B - n\Id)
=
\begin{bmatrix}
0 & 1 & 0 \\
1 & \frac{1}{\sqrt{n}}(n-2) & 1 \\
0 & 1 & 0
\end{bmatrix}
\end{equation}
We apply Fact \ref{fact:p3-optimal} with $\alpha = \frac{1}{\sqrt{n}}(n-2)$.
Note also that $\talpha = \alpha/2$ and $\Delta = \sqrt{\talpha^{2} + 2}$.
Thus, there is antipodal perfect state transfer using $\tilde{B}$ at time $t$ if and only if
\begin{equation}
\cos(\talpha t)\cos(\Delta t) = -1.
\end{equation}
This implies that $\talpha t, \Delta t \in \ZZ\pi$ and
$\talpha/\Delta$ is a rational number $p/q$ of distinct parities (either $p$ is odd and $q$ is even,
or $p$ is even and $q$ is odd). But, note that
\begin{equation}
\frac{\talpha}{\Delta}
=
\frac{n-2}{\sqrt{(n-2)^{2} + 8n}} = \frac{n-2}{n+2}.
\end{equation}
Since the parities of the numbers in the fraction $p/q$ must be distinct, it is clear that $n$ must be even
and must satisfy $n \equiv 2\pmod{4}$.
By Lemma \ref{lemma:lifting} (lifting), we obtain the claim on the double cone $\comp{K_{2}} + H$.
\end{proof}

\subsection{Joins}

We revisit perfect state transfer on graph joins relative to the standard Laplacian
and show a negative result on connected double cones.

\begin{fact} \label{fact:qwalk-join}
Let $G$ and $H$ be graphs on $m$ and $n$ vertices, respectively. 
For vertices $u$ and $v$ of $G$, the quantum walk on $G+H$ relative to the standard Laplacian
satisfies
\begin{equation}
\tbracket{\buket{u}}{e^{-it\DLp(G+H)}}{\buket{v}}
= 
e^{-itn}\tbracket{\buket{u}}{e^{-it\DLp(G)}}{\buket{v}}
+
\frac{(e^{-it(m+n)} - e^{-itn})}{m} 
+ 
\frac{(1 - e^{-it(m+n)})}{m+n}.
\end{equation}
Moreover, 
if perfect state transfer occurs between vertices $u$ and $v$ in $G+H$ at time $t$
relative to the standard Laplacian, then $t(m+n) \in 2\pi\ZZ$.
\end{fact}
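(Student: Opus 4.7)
The plan is to diagonalize $\DLp(G+H)$ by splitting off a commuting ``join'' matrix, and then read off the needed entries via spectral decomposition. In block form with the $V(G)$-vertices first, one has $\DLp(G+H)=[\DLp(G)\oplus\DLp(H)]+M_{1}$, where $M_{1}$ is the block matrix with $n\Id_{m}$, $-J_{m,n}$, $-J_{n,m}$, $m\Id_{n}$ in the four blocks (equivalently, $M_{1}=\DLp(K_{m,n})$). These two summands commute: $\DLp(G)\one_{m}=0$ and $\DLp(H)\one_{n}=0$ force $\DLp(G)J_{m,n}=J_{n,m}\DLp(H)=0$ (and their transposes), so expanding either product leaves only $\mathrm{diag}(n\DLp(G),m\DLp(H))$ in both orders. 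Hence $e^{-it\DLp(G+H)}=e^{-it[\DLp(G)\oplus\DLp(H)]}\,e^{-itM_{1}}$.

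Next, I spectrally decompose $M_{1}$: its eigenvalues are $0$ on $\one_{m+n}$, $m+n$ on the line through $(n\one_{m},-m\one_{n})$, $n$ with multiplicity $m-1$ on $\{(x,0):\one_{m}^{T}x=0\}$, and $m$ with multiplicity $n-1$ on $\{(0,y):\one_{n}^{T}y=0\}$. Reading off the $(w,v)$-entry of $e^{-itM_{1}}$ for $w,v\in V(G)$ yields
$$(e^{-itM_{1}})_{wv}=e^{-itn}\delta_{w,v}+\frac{ne^{-it(m+n)}}{m(m+n)}-\frac{e^{-itn}}{m}+\frac{1}{m+n}.$$
Then I multiply on the left by $e^{-it\DLp(G)}$ (block diagonal, so the product only sums $w$ over $V(G)$) and use $\one_{m}^{T}e^{-it\DLp(G)}=\one_{m}^{T}$ (from $\DLp(G)\one_{m}=0$) to collapse the three constant-in-$w$ terms via $\sum_{w\in V(G)}(e^{-it\DLp(G)})_{uw}=1$. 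Combined with the identity $\frac{n}{m(m+n)}=\frac{1}{m}-\frac{1}{m+n}$, this rearranges into the advertised three-term expression.

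For the ``moreover'', I would run the same spectral computation on a cross block: for $u\in V(G)$ and $v'\in V(H)$ every such entry of $e^{-itM_{1}}$ equals $\frac{1-e^{-it(m+n)}}{m+n}$, and the $\one$-invariance argument shows $(e^{-it\DLp(G+H)})_{uv'}$ equals the same constant, independent of $v'$. Since $\DLp(G+H)\one_{m+n}=0$, each row of the unitary $e^{-it\DLp(G+H)}$ sums to $1$ while having unit $\ell_{2}$-norm, so whenever some entry in a row has magnitude $1$, every other entry in that row must vanish. In particular the cross-block value must be $0$, which forces $e^{-it(m+n)}=1$, i.e.\ $t(m+n)\in 2\pi\ZZ$.

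The main obstacle is the bookkeeping in the spectral decomposition of $M_{1}$: its four eigenvalues $\{0,m,n,m+n\}$ produce several constant-in-$w$ terms whose coefficients must telescope via $\frac{n}{m(m+n)}=\frac{1}{m}-\frac{1}{m+n}$ to match the stated formula, and one must keep separate track of the diagonal block and the cross block when extracting the $(u,v)$ versus $(u,v')$ entries.
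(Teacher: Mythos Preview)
Your proof is correct, and the organizational choice differs from the paper's in a way worth noting. The paper writes down the full spectral decomposition of $\DLp(G+H)$ in one shot --- eigenvalues $0$, $m+n$, $n+\lambda_{k}$ (for $k\ge 1$), $m+\mu_{\ell}$ (for $\ell\ge 1$) with the corresponding eigenprojections --- and then exponentiates and reads off the $(u,v)$ and $(u,y)$ entries directly. You instead split $\DLp(G+H)=[\DLp(G)\oplus\DLp(H)]+\DLp(K_{m,n})$, observe the two summands commute (via $\DLp(G)\one_{m}=0$), and only spectrally decompose the simpler bipartite piece $M_{1}$; the dependence on $G$ then enters solely through the row-sum identity $\sum_{w}(e^{-it\DLp(G)})_{uw}=1$. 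The two routes carry the same spectral content (your four eigenspaces of $M_{1}$ are exactly what the paper uses after subtracting off $\lambda_{k}$ and $\mu_{\ell}$), but your factorization is a bit more modular: it isolates the ``join'' contribution cleanly and never needs the eigenprojections $E_{k}$ of $\DLp(G)$ explicitly. The paper's version, on the other hand, makes the full eigenstructure of $\DLp(G+H)$ visible at once, which can be handy if one wants to extract more than just these particular matrix entries. For the ``moreover'' clause both arguments are identical in substance: compute the cross-block entry as $\tfrac{1}{m+n}(1-e^{-it(m+n)})$ and use unitarity to force it to vanish.
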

\begin{proof}
Let the spectral decompositions of the standard Laplacians of $G$ and $H$ be
\begin{equation}
\DLp(G) = \sum_{k=0}^{m-1} \lambda_{k}E_{k},
\ \hspace{.2in} \
\DLp(H) = \sum_{\ell=0}^{n-1} \mu_{\ell}F_{\ell}.
\end{equation}
Then, the join $G+H$ has the following spectral decomposition:
\begin{equation}
\DLp(G+H)
=
0 \cdot z_{0}z_{0}^{T}
+ (m+n)z_{1}z_{1}^{T} 
+ \sum_{k=1}^{m-1} (n+\lambda_{k})\begin{bmatrix} E_{k} & O \\ O & O \end{bmatrix}
+ \sum_{\ell=1}^{n-1} (m+\mu_{\ell})\begin{bmatrix} O & O \\ O & F_{\ell} \end{bmatrix}
\end{equation}
where 
\begin{equation}
z_{0} = 
\frac{1}{\sqrt{m+n}}\one_{m+n},
\ \hspace{.2in} \
z_{1} = 
\frac{1}{\sqrt{mn(m+n)}}\begin{bmatrix} n\one_{m} \\[1.5ex] -m\one_{n} \end{bmatrix}.
\end{equation}

\smallskip

Using this, the quantum walk on $G+H$ relative to the standard Laplacian is 
\begin{equation} \label{eqn:qwalk-join}
e^{-it\DLp(G+H)}
=
\frac{J_{m+n}}{m+n}
+ e^{-it(m+n)}z_{1}z_{1}^{T} 
+ \sum_{k=1}^{m-1} e^{-it(n+\lambda_{k})} \begin{bmatrix} E_{k} & O \\ O & O \end{bmatrix}
+ \sum_{\ell=1}^{n-1} e^{-it(m+\mu_{\ell})} \begin{bmatrix} O & O \\ O & F_{\ell} \end{bmatrix}.
\end{equation}
For the vertices $u$ and $v$ of $G$, we have
\begin{eqnarray}
\tbracket{\buket{u}}{e^{-it\DLp(G+H)}}{\buket{v}}
	& = &
	\frac{1}{m+n} + \left[\frac{1}{m} - \frac{1}{m+n}\right] e^{-it(m+n)} 
		+ e^{-int}\sum_{k=1}^{m-1} e^{-it\lambda_{k}}\tbracket{\buket{u}}{E_{k}}{\buket{v}} \\
	& = &
	\frac{(e^{-it(m+n)} - e^{-itn})}{m} + \frac{(1 - e^{-it(m+n)})}{m+n}
		+ e^{-itn}\tbracket{\buket{u}}{e^{-it\DLp(G)}}{\buket{v}},
\end{eqnarray}
where we have used the fact $E_{0} = J_{m}/m$.

To show the second claim, let $y$ be a vertex of $H$. Then, by (\ref{eqn:qwalk-join}), we have
\begin{equation}
\tbracket{\buket{u}}{e^{-it\DLp(G+H)}}{\buket{y}}
=
\frac{1}{m+n}(1 - e^{-it(m+n)}).
\end{equation}
This expression is zero if perfect state transfer occurs between $u$ and $v$ in $G+H$.
Therefore, $e^{-it(m+n)} = 1$, which implies $t(m+n) \in 2\pi\ZZ$.
\end{proof}

We show that {\em connected} double cones $K_{2}+G$ have no perfect state transfer 
relative to the standard Laplacian, unlike its counterpart $\comp{K_{2}}+G$.

\begin{corollary}
For any graph $G$, there is no perfect state transfer on $K_{2} + G$ 
between the two vertices of $K_{2}$ relative to the standard Laplacian.
\end{corollary}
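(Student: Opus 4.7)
The plan is to apply Fact \ref{fact:qwalk-join} with the roles $G \leftarrow K_{2}$ (so $m=2$) and $H \leftarrow G$ (so $n = |V(G)|$), then check that the only candidate transfer times are ruled out by a direct computation. The key leverage is the necessary condition from the second part of Fact \ref{fact:qwalk-join}: if perfect state transfer occurs at time $\tau$ between the two vertices $u,v$ of $K_{2}$ inside $K_{2}+G$, then
\begin{equation*}
\tau(2+n) \in 2\pi\ZZ, \qquad \text{i.e., } e^{-i\tau(2+n)} = 1.
\end{equation*}

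Next I would compute the single off-diagonal entry of $e^{-it\DLp(K_{2})}$. Since $\DLp(K_{2})$ has spectral decomposition $0\cdot \frac{1}{2}J_{2} + 2\cdot(\Id - \frac{1}{2}J_{2})$, one gets
\begin{equation*}
\tbracket{\buket{u}}{e^{-it\DLp(K_{2})}}{\buket{v}} = \tfrac{1}{2}\bigl(1 - e^{-2it}\bigr)
\end{equation*}
for $u \neq v$. Substituting $m=2$ into the formula of Fact \ref{fact:qwalk-join} gives
\begin{equation*}
\tbracket{\buket{u}}{e^{-it\DLp(K_{2}+G)}}{\buket{v}}
= e^{-itn}\cdot\tfrac{1}{2}(1-e^{-2it}) + \tfrac{1}{2}(e^{-it(2+n)}-e^{-itn}) + \tfrac{1}{2+n}(1-e^{-it(2+n)}).
\end{equation*}
The first two summands cancel after expanding, leaving exactly $\tfrac{1}{2+n}(1-e^{-it(2+n)})$.

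Finally, I would substitute the necessary condition $e^{-i\tau(2+n)}=1$ from the first step into this simplified expression. The amplitude collapses to $0$, which is incompatible with perfect state transfer (which requires modulus $1$). Hence no such $\tau$ can exist, establishing the corollary. There is essentially no serious obstacle here: the argument is a short cancellation once Fact \ref{fact:qwalk-join} is in hand, and the main thing to be careful about is the bookkeeping of which variable in Fact \ref{fact:qwalk-join} corresponds to $K_{2}$ versus $G$ so that the necessary condition on $\tau$ is applied correctly.
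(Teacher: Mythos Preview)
Your proposal is correct and follows essentially the same argument as the paper's proof: both invoke Fact \ref{fact:qwalk-join}, use the necessary condition $e^{-i\tau(2+n)}=1$, compute $\tbracket{\buket{u}}{e^{-it\DLp(K_{2})}}{\buket{v}} = \tfrac{1}{2}(1-e^{-2it})$, and observe that the resulting amplitude vanishes. The only cosmetic difference is the order of simplification---you cancel the first two summands for general $t$ before imposing the condition, whereas the paper imposes the condition first---but the content is identical.
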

\begin{proof}
Let $G$ be a graph on $n$ vertices.
Suppose there is perfect state transfer on $K_{2}+G$ at time $t$ between 
the vertices $u$ and $v$ of $K_{2}$ relative to the standard Laplacian.
By Fact \ref{fact:qwalk-join}, we have 
\begin{equation} \label{eqn:connected-cone}
\tbracket{\buket{u}}{e^{-it\DLp(K_{2}+G)}}{\buket{v}}
= 
e^{-itn}
\left[\tbracket{\buket{u}}{e^{-it\DLp(K_{2})}}{\buket{v}}
+
\frac{1}{2}(e^{-2it} - 1)\right], 
\end{equation}
since $e^{-it(2+n)} = 1$. 
But, note that $\tbracket{\buket{u}}{e^{-it\DLp(K_{2})}}{\buket{v}} = \frac{1}{2}(1-e^{-2it})$.
So, the right-hand side of (\ref{eqn:connected-cone}) is zero, which is a contradiction 
since we assume $K_{2}+G$ has perfect state transfer at time $t$ between $u$ and $v$.
\end{proof}


\section{Signless Laplacian}

\subsection{Double cones}

By Fact \ref{fact:bipartite}, the quantum walks relative to the Laplacians $D \pm A$ are equivalent 
for regular and/or bipartite graphs. We describe a family of graphs with perfect state transfer 
relative to the signless Laplacian $D+A$, but not under the standard Laplacian $D-A$. 
First, we state some facts about quotient graphs relative to the signless Laplacian.

\begin{fact} \label{fact:equitable-signless}
Let $G=(V,E)$ be a graph with an equitable partition $\pi$ given by $V = \bigcup_{k=1}^{m} V_{k}$
with constants $d_{j,k}$, for $j,k=1,\ldots,m$. Let $\NP$ be the (normalized) partition matrix of $\pi$,
where $\NP^{T}\NP = \Id_{m}$. Then:
\begin{enumerate}
\item $\NP\NP^{T} = \diag(\{J_{|V_{k}|} : k \in [m]\})$ which commutes with $\SLp(G)$.
\item $\SLp(G)\NP = \NP B$, where $B$ is a $m \times m$ matrix defined as
\begin{equation} \label{eqn:signless-quotient}
B_{j,k} = 
	\left\{\begin{array}{ll}
	\sqrt{d_{j,k}d_{k,j}} & \mbox{ if $j \neq k$ } \\
	2d_{j,j} + \sum_{\ell \neq j} d_{j,\ell} & \mbox{ if $j = k$}
	\end{array}\right.
\end{equation}
where $j,k = 1,\ldots,m$.
\end{enumerate}
Thus, $B = \SLp(G/\pi)$ is the signless Laplacian of the quotient graph $G/\pi$.
\end{fact}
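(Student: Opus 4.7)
The plan is to mirror, step for step, the proof that was summarized for Fact \ref{fact:equitable-laplacian}, changing only the algebraic bookkeeping that reflects $\SLp(G) = D + A$ versus $\DLp(G) = D - A$. Split $\SLp(G) = D + A$ and analyze the action of each summand on $\NP$ separately; the adjacency piece has already been essentially handled in Fact \ref{fact:equitable-adjacency}, and the degree piece is very easy because of a built-in uniformity coming from the equitable-partition hypothesis.

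For the adjacency piece, for $u \in V_j$ the equitable partition property gives $(A\NP)_{u,k} = |N(u)\cap V_k|/\sqrt{|V_k|} = d_{j,k}/\sqrt{|V_k|}$. Writing this as $(\NP B_A)_{u,k} = (B_A)_{j,k}/\sqrt{|V_j|}$ forces $(B_A)_{j,k} = d_{j,k}\sqrt{|V_j|/|V_k|}$, which equals $\sqrt{d_{j,k}d_{k,j}}$ by the handshaking identity $|V_j|\,d_{j,k} = |V_k|\,d_{k,j}$ (both count the edges between $V_j$ and $V_k$). In particular, $B_A$ is symmetric. For the degree piece, the key observation is that the equitable-partition constraint forces every vertex in $V_j$ to have the common degree $r_j := \sum_{\ell} d_{j,\ell}$, so $D$ acts on block $V_j$ as multiplication by the scalar $r_j$. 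Consequently $D\NP = \NP\,\mathrm{diag}(r_1,\ldots,r_m)$.

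Adding the two pieces yields $\SLp(G)\NP = \NP B$ with $B = B_A + \mathrm{diag}(r_1,\ldots,r_m)$. Off-diagonal entries of $B$ coincide with those of $B_A$, giving $B_{j,k} = \sqrt{d_{j,k}d_{k,j}}$ for $j\neq k$, while the $(j,j)$-entry becomes $d_{j,j} + r_j = 2d_{j,j} + \sum_{\ell\neq j}d_{j,\ell}$, exactly as claimed. For the commutation statement, $\NP\NP^T$ is the orthogonal projector onto the block-constant subspace; since $B_A$ is symmetric, the identity $A\NP = \NP B_A$ together with its transpose shows $A$ commutes with $\NP\NP^T$, and $D$ does so trivially because it is block-scalar on the partition. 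Hence $\SLp(G) = D + A$ commutes with $\NP\NP^T$.

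There is no real obstacle; the only mild subtlety worth flagging in the write-up is the origin of the factor of $2$ on $d_{j,j}$. It arises because the within-block adjacency count $d_{j,j}$ already sits inside $B_A$ and the same $d_{j,j}$ is again contained inside the degree sum $r_j = \sum_{\ell} d_{j,\ell}$; in the standard-Laplacian setting these two copies appear with opposite signs and cancel to leave $\sum_{\ell\neq j}d_{j,\ell}$, whereas here the signless convention makes them reinforce to $2d_{j,j} + \sum_{\ell\neq j}d_{j,\ell}$. Everything else is identical bookkeeping to the proof template of Fact \ref{fact:equitable-laplacian}.
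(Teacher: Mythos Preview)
Your proof is correct and follows essentially the same approach as the paper: both split $\SLp(G)=D+A$, use the block-scalar structure of $D$ on the partition to get the degree contribution, invoke the adjacency quotient (Fact \ref{fact:equitable-adjacency}) together with the handshaking identity $|V_j|d_{j,k}=|V_k|d_{k,j}$ for the off-diagonal entries, and combine. The only cosmetic difference is that the paper first defines $B=\NP^{T}\SLp(G)\NP$ and then deduces the intertwining $\SLp(G)\NP=\NP B$ from the commutation $[\SLp(G),\NP\NP^{T}]=0$, whereas you verify the intertwining directly and derive commutation afterward; the content is the same.
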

\begin{proof}
To show that $\NP\NP^{T}$ commutes with $\SLp(G)$, it suffices to show it commutes with $D(G)$.
Note that $D(G)$ is a diagonal matrix over $m$ blocks with the following form:
\begin{equation}
D(G) = \diag\left(\left\{\sum_{\ell=1}^{m} d_{k,\ell}\right\} \Id_{|V_{k}|} : k \in [m]\right).
\end{equation}
Since $\NP\NP^{T} = \diag(\{J_{|V_{k}|} : k \in [m]\})$, 
it commutes with $D(G)$ and hence with $\SLp(G)$.

Next, we show that $\SLp(G)\NP = \NP B$, where $B$ is given by (\ref{eqn:signless-quotient}).
If we let $B = \NP^{T}\SLp(G)\NP$,
\begin{eqnarray}
\tbracket{\buket{j}}{B}{\buket{k}}
	& = & \tbracket{\buket{j}}{\NP^{T}\SLp(G)\NP}{\buket{k}} \\
	& = & \frac{1}{\sqrt{|V_{j}||V_{k}|}}\sum_{u \in V_{j}, v \in V_{k}} \tbracket{\buket{u}}{\SLp(G)}{\buket{v}} \\
	& = & \frac{1}{\sqrt{|V_{j}||V_{k}|}}\sum_{u \in V_{j}, v \in V_{k}} \tbracket{\buket{u}}{(D(G) + A(G))}{\buket{v}} \\
	& = & \iverson{j=k}\left[2d_{j,j} + \sum_{\ell \neq j} d_{j,\ell}\right] 
			+ \iverson{j \neq k} \sqrt{d_{j,k}d_{k,j}}.
\end{eqnarray}
The above case for $j \neq k$ follows from Fact \ref{fact:equitable-adjacency}.

From $B = \NP^{T}\SLp(G)\NP$, by multiplying both sides by $\NP$ from the left, we get
\begin{equation}
\NP B = \NP\NP^{T}\SLp(G)\NP = \SLp(G)\NP\NP^{T}\NP = \SLp(G)\NP,
\end{equation}
since $\NP\NP^{T}$ commutes with $\SLp(G)$ and $\NP^{T}\NP = \Id_{m}$. This proves the second claim.
\end{proof}

\medskip

\begin{theorem} \label{thm:signless-double-cone}
For an integer $m \ge 1$, if $H$ is a $(2m,m-1)$-regular graph, 
then $\overline{K}_{2} + H$ has perfect state transfer relative to the signless Laplacian.
\end{theorem}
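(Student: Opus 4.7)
My plan is to exhibit a three-cell equitable partition of $G = \overline{K}_2 + H$ whose signless-Laplacian quotient is a weighted $P_3$, and then lift antipodal PST on that $P_3$ back to $G$. Let $u_1, u_2$ denote the two non-adjacent vertices of $\overline{K}_2$ inside $G$, and take $\pi = \{\{u_1\},\,V(H),\,\{u_2\}\}$. Since $H$ is $(m-1)$-regular on $2m$ vertices, and since each $u_i$ is joined to all of $V(H)$ but not to the other $u_j$, the partition $\pi$ is equitable with $d_{1,2} = d_{3,2} = 2m$, $d_{2,1} = d_{2,3} = 1$, $d_{2,2} = m-1$, and zeros elsewhere.

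Applying Fact \ref{fact:equitable-signless} to these parameters, a short computation of the diagonal entries (each equals $2m$) and off-diagonal entries (each equals $\sqrt{2m}$, apart from the two corner zeros) gives the clean identity
\[
B \;=\; \SLp(G/\pi) \;=\; 2m\,\Id_3 \;+\; \sqrt{2m}\,A(P_3).
\]
Hence $e^{-itB} = e^{-2imt}\,e^{-i\sqrt{2m}\,t\,A(P_3)}$, so the quotient walk is equivalent, up to a global phase and a time dilation by $\sqrt{2m}$, to the adjacency walk on $P_3$. Because $A(P_3)$ has antipodal PST at time $\pi/\sqrt{2}$ (as already used in Fact \ref{fact:normalized-p3}), the quotient matrix $B$ has PST between indices $1$ and $3$ at time $\tau = \pi/(2\sqrt{m})$, which matches the entry in Figure \ref{fig:results}.

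It remains to lift this PST to $G$ itself. Lemma \ref{lemma:lifting} is stated only for the adjacency matrix, but its proof uses exactly two inputs: the isometry $\NP^T\NP = \Id_m$ and the intertwining $M\NP = \NP B$. Both hold with $M = \SLp(G)$ by Fact \ref{fact:equitable-signless}, so the same three-line argument yields
\[
\tbracket{\buket{u_1}}{e^{-it\SLp(G)}}{\buket{u_2}} \;=\; \tbracket{\buket{1}}{e^{-itB}}{\buket{3}},
\]
using that $u_1$ and $u_2$ each form singleton cells. Substituting $t = \tau = \pi/(2\sqrt{m})$ completes the proof.

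The main conceptual step is the choice of partition: collapsing all of $V(H)$ into a single cell is legitimate precisely because $H$ is regular, and once this collapse is done the quotient is forced to be a weighted $P_3$ regardless of $H$'s internal structure. I do not expect a substantive obstacle: the bookkeeping of $\sqrt{2m}$ factors is routine, and the signless-Laplacian version of the lifting lemma is a cosmetic modification of the existing proof.
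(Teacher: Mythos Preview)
Your proof is correct and follows essentially the same route as the paper: the same three-cell equitable partition $\{\{u_1\},V(H),\{u_2\}\}$, the same quotient $B = 2m\,\Id + \sqrt{2m}\,A(P_3)$ via Fact~\ref{fact:equitable-signless}, the same appeal to antipodal PST on $P_3$ at time $\pi/\sqrt{2}$, and the same lifting step (the paper also spells out the $\NP^T e^{-it\SLp(G)}\NP = e^{-itB}$ identity rather than invoking Lemma~\ref{lemma:lifting} verbatim). The only cosmetic difference is that the paper carries general parameters $(n,k)$ and specializes to $k=(n-2)/2$, $n=2m$ at the end, whereas you substitute $n=2m$, $k=m-1$ from the outset.
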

\begin{proof}
Let $H$ be a $(n,k)$-regular graph and consider the double cone $G = \overline{K}_{2} + H$.
By Fact \ref{fact:equitable-signless}, the signless Laplacian quotient $B$ of $G$ is given by
\begin{equation}
B =
\begin{bmatrix}
n & \sqrt{n} & 0 \\
\sqrt{n} & 2k+2 & \sqrt{n} \\
0 & \sqrt{n} & n
\end{bmatrix}
\end{equation}
If $k = (n-2)/2$, we have $B = n\Id + \sqrt{n}A(P_{3})$. 
Let $a$ and $b$ be the conical vertices of $G$ (which are also the antipodal vertices of the quotient).
Therefore,
\begin{equation}
\tbracket{\buket{b}}{e^{-itB}}{\buket{a}} = e^{-int} \tbracket{\buket{b}}{e^{-i(\sqrt{n}t)A(P_{3})}}{\buket{a}},
\end{equation}
which implies that the signless Laplacian quotient of $G$ has perfect state transfer
at time $t = \pi/\sqrt{2n}$.

Using Lemma \ref{lemma:lifting}, we {\em lift} the perfect state transfer from the quotient 
to the original graph: 
\begin{equation}
\tbracket{\buket{b}}{e^{-it\SLp(G)}}{\buket{a}} 
= 
\tbracket{\buket{b}}{\NP^{T}e^{-it\SLp(G)}\NP}{\buket{a}}
= 
\tbracket{\buket{b}}{e^{-itB}}{\buket{a}},
\end{equation}
since $e^{-itB} = \NP^{T}e^{-it\SLp(G)}\NP$.
Finally, we set $n = 2m$ to complete the proof.
\end{proof}

\medskip
\par\noindent{\em Example}:
For completeness, we describe a simple family $\{G_{m}\}$ of $(2m,m-1)$-regular graphs, 
where the double cone $\comp{K_{2}} + G_{m}$ has perfect state transfer between the
two conical vertices relative to the signless Laplacian (by Theorem \ref{thm:signless-double-cone}).
Each graph $G_{m}$ is a circulant over $\ZZ_{2m}$ with the following generating set:
\begin{equation}
S_{m} 
= 
	\left\{\begin{array}{ll}
	\{\pm1,\ldots,\pm \textstyle \frac{1}{2}(m-1)\} & \mbox{ if $m-1$ is even} \\
	\{\pm1,\ldots,\pm \textstyle \frac{1}{2}(m-2)\} \cup \{\pm m\} & \mbox{ if $m-1$ is odd}
	\end{array}\right.
\end{equation}
It is clear that the circulant $G_{m} = \Circ(\ZZ_{2m},S_{m})$ is $(2m,m-1)$-regular. 
Moreover, note that the double cone $\comp{K_{2}}+G_{m}$ has no perfect state transfer
relative to the standard Laplacian, whenever $2m \equiv 0\pmod{4}$ 
(by Corollary \ref{cor:laplacian-double-cone}).

\subsection{Line graphs}

The (normalized) {\em incidence matrix} $\NB$ of $G$ is a $n \times m$ matrix defined as 
$\NB_{u,e} = \frac{1}{\sqrt{2}}\iverson{u \in e}$, for each vertex $u \in V(G)$ 
and each edge $e \in E(G)$.
We state a well-known connection between the signless Laplacian and line graphs.

\begin{fact} \label{fact:signless-line} 
For any graph $G=(V,E)$ with (normalized) incidence matrix $\NB$, we have:
\begin{enumerate}[(i)]
\item $\NB\NB^{T} = \frac{1}{2}\SLp$,
	and is nonsingular if $G$ is connected and nonbipartite.
\item $\NB^{T}\NB = \frac{1}{2}A(\Line(G)) + \Id$,
	and is nonsingular if $G$ is a tree.
\end{enumerate}
\end{fact}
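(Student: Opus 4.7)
The approach is to verify both identities by a direct entry-wise computation from the definition $\NB_{u,e} = \frac{1}{\sqrt 2}\iverson{u \in e}$, and then derive the two nonsingularity claims from standard spectral arguments about $\SLp$ and the rank of the unsigned incidence matrix.

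For part (i), I would write
\begin{equation}
(\NB\NB^T)_{u,v} \;=\; \sum_{e \in E} \NB_{u,e}\NB_{v,e} \;=\; \tfrac{1}{2}\sum_{e \in E} \iverson{u \in e}\iverson{v \in e}.
\end{equation}
When $u=v$ this sum counts $\deg(u)$, while for $u \ne v$ it equals $\iverson{u \sim v}$ since a simple graph has at most one edge between any pair. Hence $\NB\NB^T = \tfrac12(D+A) = \tfrac12\SLp$. For nonsingularity, I would use the standard quadratic form identity $\mathbf{x}^T\SLp\mathbf{x} = \sum_{(u,v)\in E}(x_u+x_v)^2$: when $G$ is connected and nonbipartite, the existence of an odd cycle rules out any nonzero $\mathbf{x}$ with $x_u = -x_v$ on every edge, so $\SLp$ is positive definite and therefore nonsingular.

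For part (ii), the analogous computation gives
\begin{equation}
(\NB^T\NB)_{e,f} \;=\; \tfrac{1}{2}\sum_{u \in V}\iverson{u \in e}\iverson{u \in f}.
\end{equation}
Using that each edge has exactly two endpoints, the diagonal entry is $\tfrac12\cdot 2 = 1$, and the off-diagonal entry equals $\tfrac12$ precisely when $|e \cap f|=1$, yielding $\NB^T\NB = \Id + \tfrac12 A(\Line(G))$. For nonsingularity when $G$ is a tree, I would exploit the fact that $\NB^T\NB$ and $\NB\NB^T$ share all nonzero eigenvalues with multiplicities. The rank of the unsigned incidence matrix of a connected graph equals $n - b$, where $b=1$ if the graph is bipartite and $b=0$ otherwise (this is the usual argument based on signing cycles, or on the kernel characterization via edge $\pm 1$-colorings). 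For a tree, $G$ is connected and bipartite, so $\rk(\NB) = n-1 = m$; hence $\NB^T\NB$ is an $m\times m$ matrix of rank $m$ and is nonsingular.

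The main obstacle is that the two nonsingularity statements rest on genuinely different arguments even though they look parallel: in (i) we invoke positive definiteness of $\SLp$, whereas in (ii) the graph is bipartite and $\SLp$ is actually singular, so one cannot just carry (i) over. The correct viewpoint is to track $\rk(\NB)$ directly and to use the coincidence $m=n-1$ for trees, so that the one eigenvalue lost from $\NB\NB^T$ in passing to the bipartite case exactly accounts for the dimension gap, leaving the $m\times m$ Gram matrix $\NB^T\NB$ full-rank.
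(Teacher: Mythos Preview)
Your proposal is correct. The paper itself does not give a proof at all: it simply cites Godsil and Royle, Theorem~8.2.1, so your entry-wise computations and rank/positive-definiteness arguments are exactly the standard details being invoked by that citation (and your rank argument for trees coincides with a proof sketch the authors wrote but commented out of the final text).
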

\begin{proof}
See Godsil and Royle \cite{godsil-royle01}, Theorem 8.2.1 for example.
\end{proof}

\ignore{
\begin{fact} \label{fact:incidence-nonsingular}
Let $G$ be a connected graph on $n$ vertices with incidence matrix $\NB$. Then:
\begin{enumerate}[(i)]
\item If $G$ is nonbipartite, then $\NB\NB^{T}$ is nonsingular.
\item If $G$ is a tree, then $\NB^{T}\NB$ is nonsingular.
\end{enumerate}
\end{fact}
\begin{proof}
We know that $\rk(B) = n - \iverson{G \mbox{~{\em bipartite}}}$ 
(see Theorem 8.2.1 in Godsil and Royle \cite{godsil-royle01}).
If $G$ is nonbipartite, then $\rk(B) = n$.
Since $\NB$ has $n$ nonzero singular values, all $n$ eigenvalues of $\NB\NB^{T}$ are nonzero.
If $G$ is a tree, then $\rk(B) = n-1$. 
Thus, $\NB$ has $n-1$ nonzero singular values and all $n-1$ eigenvalues of $\NB^{T}\NB$ are nonzero.
\end{proof}
}

\medskip
We observe the following connections 
between the quantum walk on a graph $G$ relative to the signless Laplacian 
and the quantum walk on the line graph $\Line(G)$ relative to the adjacency matrix.
The third observation in Lemma \ref{lemma:SignlessLine} was suggested by Ada Chan.

\begin{lemma} \label{lemma:SignlessLine}
Let $G$ be a graph with (normalized) incidence matrix $\NB$. 
Then:
\begin{enumerate}[a)]
\item $\NB^{T}e^{-it\SLp(G)} = e^{-2it} e^{-itA(\Line(G))}\NB^{T}$.
\item $e^{-it\SLp(G)}\NB = e^{-2it} \NB e^{-itA(\Line(G))}$.
\item $\NB^{T}e^{-it\SLp(G)}\NB = e^{-2it} e^{-itA(\Line(G))}\NB^{T}\NB$.
\end{enumerate}
\end{lemma}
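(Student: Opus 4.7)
The plan is to exploit the two identities from Fact~\ref{fact:signless-line}, namely $\NB\NB^{T} = \tfrac{1}{2}\SLp(G)$ and $\NB^{T}\NB = \tfrac{1}{2}A(\Line(G)) + \Id$, to establish an intertwining relation between $\SLp(G)$ and $A(\Line(G))$ via $\NB^{T}$ (and, dually, via $\NB$). Once that intertwining is in hand, the rest is just a power-series/spectral-calculus argument.

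For part (a), I would first compute
\[
\NB^{T}\SLp(G) \;=\; \NB^{T}\bigl(2\NB\NB^{T}\bigr) \;=\; 2(\NB^{T}\NB)\NB^{T} \;=\; \bigl(A(\Line(G)) + 2\Id\bigr)\NB^{T}.
\]
By iterating, $\NB^{T}\SLp(G)^{k} = \bigl(A(\Line(G)) + 2\Id\bigr)^{k}\NB^{T}$ for every $k\ge 0$. Summing the Taylor series for the matrix exponential then gives
\[
\NB^{T}e^{-it\SLp(G)} \;=\; e^{-it(A(\Line(G)) + 2\Id)}\NB^{T} \;=\; e^{-2it}\,e^{-itA(\Line(G))}\NB^{T},
\]
where the last equality uses that $2\Id$ commutes with $A(\Line(G))$ so the exponential factors cleanly.

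Part (b) is the transpose mirror of part (a): starting from $\SLp(G)\NB = 2\NB(\NB^{T}\NB) = \NB(A(\Line(G)) + 2\Id)$, the same induction and Taylor-series move yield $e^{-it\SLp(G)}\NB = e^{-2it}\NB\,e^{-itA(\Line(G))}$. (Alternatively, one can simply take the conjugate transpose of the identity in (a), using that $\SLp(G)$ and $A(\Line(G))$ are symmetric.) Part (c) is then immediate: multiply the identity in (a) on the right by $\NB$, and no further manipulation is needed since the shift factor $e^{-2it}$ and $e^{-itA(\Line(G))}$ are already on the correct side of $\NB^{T}\NB$.

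The argument is essentially routine once one notices the intertwining $\NB^{T}\SLp(G) = (A(\Line(G)) + 2\Id)\NB^{T}$; there is no real obstacle, but the one subtle point worth checking is the identification of the additive constant (the $+2\Id$, and hence the scalar prefactor $e^{-2it}$), which comes directly from the $+\Id$ term in $\NB^{T}\NB = \tfrac{1}{2}A(\Line(G)) + \Id$. No spectral hypothesis on $G$ (bipartite, connected, tree, etc.) is required for these exponential identities, since only the algebraic relations $\NB\NB^{T} = \tfrac12 \SLp(G)$ and $\NB^{T}\NB = \tfrac12 A(\Line(G)) + \Id$ are used.
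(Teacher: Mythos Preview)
Your proof is correct and is essentially the same as the paper's: both arguments rest on pushing $\NB^{T}$ through the power series via the identity $\NB^{T}(\NB\NB^{T})^{k}=(\NB^{T}\NB)^{k}\NB^{T}$ and then invoking $2\NB^{T}\NB = A(\Line(G))+2\Id$. The only cosmetic difference is that the paper substitutes $\SLp(G)=2\NB\NB^{T}$ inside the exponential first and then slides $\NB^{T}$ across, whereas you first isolate the one-step intertwiner $\NB^{T}\SLp(G)=(A(\Line(G))+2\Id)\NB^{T}$ and then exponentiate; parts (b) and (c) are handled identically.
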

\begin{proof}
For the first identity, we have 
\begin{equation}
\NB^{T}e^{-it\SLp(G)}
	= \NB^{T} \sum_{k=0}^{\infty} \frac{(-2it)^{k}}{k!}(\NB\NB^{T})^{k}
	= \sum_{k=0}^{\infty} \frac{(-2it)^{k}}{k!}(\NB^{T}\NB)^{k}\NB^{T}
	= e^{-2it\NB^{T}\NB}\NB^{T},
\end{equation}
and apply $2\NB^{T}\NB = A(\Line(G)) + 2\Id$ to get the result.
The proof of the second identity is similar.

For the third identity, we have
\begin{equation}
\NB^{T} e^{-it\SLp(G)} \NB
	= \NB^{T} \left[ \sum_{k=0}^{\infty} \frac{(-2it)^{k}}{k!} (\NB\NB^{T})^{k} \right] \NB 
	= \left[ \sum_{k=0}^{\infty} \frac{(-2it)^{k}}{k!} (\NB^{T}\NB)^{k} \right] \NB^{T}\NB, 
\end{equation}
and again apply $2\NB^{T}\NB = A(\Line(G)) + 2\Id$.
\end{proof}

\begin{theorem} \label{thm:pst-signless-line}
Let $G$ be a graph which has perfect state transfer at time $t$ 
from a vertex $u_{1}$ of degree one to another vertex $u_{2}$ 
relative to the signless Laplacian.
Then, $u_{2}$ must have degree one and the line graph $\Line(G)$ 
has perfect state transfer at time $t$ between the unique edges incident to $u_{1}$ and $u_{2}$ 
relative to the adjacency matrix.
\end{theorem}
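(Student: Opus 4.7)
The plan is to apply Lemma \ref{lemma:SignlessLine}(a) to the vector $\buket{u_{1}}$ and read off what the hypothesis of perfect state transfer (PST) forces on the other side. Since $e^{-it\SLp(G)}$ is unitary, the assumption $|\tbracket{\buket{u_{2}}}{e^{-it\SLp(G)}}{\buket{u_{1}}}|=1$ is equivalent to the vector equality
\[
e^{-it\SLp(G)}\buket{u_{1}} = e^{i\theta}\buket{u_{2}}
\]
for some real phase $\theta$. So I would start by recording this, then hit both sides on the left by $\NB^{T}$.

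Using $\deg(u_{1})=1$, let $e_{1}$ be the unique edge at $u_{1}$. Then the definition of the normalized incidence matrix gives $\NB^{T}\buket{u_{1}} = \frac{1}{\sqrt{2}}\buket{e_{1}}$, while for a general vertex $w$ one has $\NB^{T}\buket{w} = \frac{1}{\sqrt{2}}\sum_{e \ni w}\buket{e}$, a vector whose squared norm equals $\deg(w)/2$. Now applying Lemma \ref{lemma:SignlessLine}(a), the left-hand side becomes
\[
\NB^{T}e^{-it\SLp(G)}\buket{u_{1}}
= e^{-2it} e^{-itA(\Line(G))}\NB^{T}\buket{u_{1}}
= \frac{e^{-2it}}{\sqrt{2}} e^{-itA(\Line(G))}\buket{e_{1}},
\]
and the right-hand side becomes $\frac{e^{i\theta}}{\sqrt{2}}\sum_{e \ni u_{2}}\buket{e}$.

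Equating the two and taking norms yields the identity
\[
1 = \|e^{-itA(\Line(G))}\buket{e_{1}}\|
= \Bigl\|\sum_{e \ni u_{2}}\buket{e}\Bigr\|
= \sqrt{\deg(u_{2})},
\]
which forces $\deg(u_{2}) = 1$, proving the first assertion. Then $u_{2}$ has a unique incident edge $e_{2}$, and cancelling the common factor $\frac{1}{\sqrt{2}}$ from the equation above gives $e^{-itA(\Line(G))}\buket{e_{1}} = e^{i(\theta+2t)}\buket{e_{2}}$, which is exactly PST between $e_{1}$ and $e_{2}$ in $\Line(G)$ under the adjacency matrix at time $t$.

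The argument is essentially a one-line calculation once the intertwining identity of Lemma \ref{lemma:SignlessLine} is in hand; there is no real obstacle. The only point requiring a bit of care is the conversion from the PST hypothesis (an entrywise magnitude statement) to the vector identity $e^{-it\SLp(G)}\buket{u_{1}} = e^{i\theta}\buket{u_{2}}$, which relies on unitarity: a column of a unitary matrix with an entry of unit magnitude must be a unimodular multiple of a standard basis vector. With that in place, the degree constraint on $u_{2}$ drops out automatically from comparing norms on the two sides, and the theorem follows.
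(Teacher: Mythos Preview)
Your proof is correct and follows essentially the same route as the paper: both apply the intertwining identity of Lemma~\ref{lemma:SignlessLine}(a) to $\buket{u_{1}}$ and use unitarity of the quantum walk to extract the conclusion. Your presentation is slightly cleaner in that you argue at the vector level and obtain $\deg(u_{2})=1$ directly from the norm comparison, whereas the paper computes the $(e_{2},e_{1})$-entry for an arbitrary edge $e_{2}\ni u_{2}$ and leaves the degree conclusion implicit (any such $e_{2}$ receives full amplitude, so there can be only one); but the underlying argument is the same.
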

\begin{proof}
Let $\NB$ be the normalized incidence matrix of $G$. 
Suppose $G$ has perfect state transfer from vertex $u_{1}$ of degree one to another
vertex $u_{2}$ at time $t$ relative to the signless Laplacian.

Let $e_{1}$ be the unique edge incident to $u_{1}$
and let $e_{2}$ be any edge incident to $u_{2}$. 
Say, $e_{2} = (u_{2},z)$, for some vertex $z$.
By Lemma \ref{lemma:SignlessLine}, we have 
\begin{equation}
e^{-2it}e^{-itA(\Line(G))}\NB^{T}
=
\NB^{T} e^{-it\SLp(G)}. 
\end{equation}
Therefore, up to phase factors, we have 
\begin{eqnarray}
\textstyle \frac{1}{\sqrt{2}} \tbracket{\buket{e_{2}}}{e^{-itA(\Line(G))}}{\buket{e_{1}}}
	& = & \tbracket{\buket{e_{2}}}{e^{-itA(\Line(G))}\NB^{T}}{\buket{u_{1}}} \\
	& = & \tbracket{\buket{e_{2}}}{\NB^{T} e^{-it\SLp(G)}}{\buket{u_{1}}} \\
	& = & \textstyle \frac{1}{\sqrt{2}} \tbracket{\buket{u_{2}} + \buket{z}}{e^{-it\SLp(G)}}{\buket{u_{1}}}.
\end{eqnarray}
We have $\tbracket{\buket{z}}{e^{-it\SLp(G)}}{\buket{u_{1}}} = 0$
since $|\tbracket{\buket{u_{2}}}{e^{-it\SLp(G)}}{\buket{u_{1}}}| = 1$.
Thus, 
\begin{equation}
\tbracket{\buket{e_{2}}}{e^{-itA(\Line(G))}}{\buket{e_{1}}}
=
\tbracket{\buket{u_{2}}}{e^{-it\SLp(G)}}{\buket{u_{1}}},
\end{equation}
which completes the proof.
\end{proof}

\medskip

We show that Theorem \ref{thm:pst-signless-line} may be used as a tool for showing the
absence of perfect state transfer relative to the signless Laplacian in some graphs. 
It would be interesting if we can apply this similarly in the other direction.

\medskip

\begin{corollary} \label{cor:path-no-signless-pst}
For $n \ge 5$, there is no antipodal perfect state transfer on $P_{n}$ relative to the signless Laplacian.
\end{corollary}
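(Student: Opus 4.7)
The plan is to reduce the problem to the well-known absence of perfect state transfer on long paths in the adjacency matrix model, via Theorem \ref{thm:pst-signless-line}. The key observation is that antipodal vertices of $P_n$ are its two endpoints, and each endpoint has degree one, so the hypothesis of Theorem \ref{thm:pst-signless-line} is exactly met by an antipodal pair in $P_n$.

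Concretely, suppose for contradiction that $P_n$ has antipodal perfect state transfer between its endpoints $u_1$ and $u_2$ at time $\tau$ relative to the signless Laplacian. Since both $u_1$ and $u_2$ have degree one, Theorem \ref{thm:pst-signless-line} applies and yields that the line graph $\Line(P_n)$ has perfect state transfer at time $\tau$, relative to the adjacency matrix, between the unique edges $e_1$ and $e_2$ incident to $u_1$ and $u_2$, respectively. Next, I would invoke the standard identification $\Line(P_n) = P_{n-1}$; under this identification, the edges $e_1$ and $e_2$ of $P_n$ correspond precisely to the two endpoints of $P_{n-1}$. Hence the conclusion is that $P_{n-1}$ has antipodal perfect state transfer in the $\xy$ (adjacency) model.

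For $n \ge 5$ we have $n-1 \ge 4$, and by Godsil's negative result \cite{godsil-dm11} (cited in the introduction and in Figure \ref{fig:results}), $P_m$ has no perfect state transfer between any pair of vertices when $m \ge 4$. This contradiction completes the argument. I do not foresee a substantive obstacle: the two ingredients (the signless-to-adjacency transfer principle of Theorem \ref{thm:pst-signless-line} and Godsil's theorem on paths) fit together directly, and the only small verification needed is the elementary identity $\Line(P_n) = P_{n-1}$ together with the correspondence of endpoint edges to endpoint vertices of the line graph.
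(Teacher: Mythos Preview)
Your proposal is correct and follows essentially the same approach as the paper: apply Theorem~\ref{thm:pst-signless-line} to reduce antipodal signless-Laplacian PST on $P_n$ to adjacency-matrix PST on $\Line(P_n)=P_{n-1}$, then invoke the known absence of PST on $P_m$ for $m\ge 4$. The only cosmetic difference is that the paper cites Christandl \etal \cite{cdel04} rather than Godsil \cite{godsil-dm11} for the last step.
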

\begin{proof}
For $n \ge 5$, suppose there is a time $t$ so $|\tbracket{\buket{1}}{e^{-it\SLp(P_{n})}}{\buket{n}}| = 1$.
By Theorem \ref{thm:pst-signless-line}, since $\Line(P_{n}) = P_{n-1}$, we have
$|\tbracket{\buket{e_{1}}}{e^{-itA(P_{n-1})}}{\buket{e_{n-1}}}| = 1$,
where $e_{1}$ and $e_{n-1}$ are the edges incident to vertices $1$ and $n$, respectively.
But, there is no perfect state transfer on paths $P_{m}$, for $m \ge 4$ 
(see Christandl \etal \cite{cdel04}). 
\end{proof}

\medskip
\par\noindent{\em Remark}:
A better version of Corollary \ref{cor:path-no-signless-pst} (with optimal proof) is due to Godsil 
who showed that there is no perfect state transfer on $P_{n}$, for $n \ge 3$, relative to the standard Laplacian.
So, a minor novelty of Corollary \ref{cor:path-no-signless-pst} is in using the spectral link 
between the unnormalized Laplacians and the adjacency matrix of the line graph.
The latest breakthrough result by Coutinho and Liu \cite{cl14} showed that Godsil's result holds 
for trees with at least three vertices.

\subsection{Odd unicyclic graphs}

We describe an application of Theorem \ref{thm:pst-signless-line} to {nonbipartite} graphs. 
To this end, we consider a family of graphs obtained from paths by adding a unique odd-cycle 
(here, we focus on the three-cycle $C_{3}$). 
We show that this family of odd unicyclic graphs has no antipodal perfect state transfer 
relative to the signless Laplacian.
Our proof exploits a connection between perfect state transfer and {\em controllability}
described by Godsil \cite{godsil-ac12,godsil-ejla12} 
(see also Godsil and Severini \cite{gs10}).

We formally define our family of odd unicyclic graphs.
For an integer $m \ge 1$, let $\OU_{m}$ be the graph obtained by attaching two pendant paths 
$P_{m+1}$ (with $m$ edges) to a three-cycle $C_{3}$ 
(see Figure \ref{fig:controllable}(a)). 
The line graph of $\OU_{m}$ is a graph which has two pendant paths $P_{m}$
attached to the pair of vertices of degree two in the cone $K_{1} + P_{4}$.

In what follows, we briefly describe the machinery of controllable subsets on graphs.
Let $G = (V,E)$ be a graph on $n$ vertices with adjacency matrix $A$.
For a subset of vertices $S \subseteq V$, the {\em walk matrix} $W_{S}$ on $G$ with respect to $S$
is defined as
\begin{equation}
W_{S} =
\begin{bmatrix}
e_{S} & Ae_{S} & A^{2}e_{S} & \ldots & A^{n-1}e_{S} 
\end{bmatrix}
\end{equation}
where $e_{S}$ denotes the characteristic vector of $S$.
We say that the pair $(G,S)$ is {\em controllable} if $W_{S}$ has full rank.
A vertex $u$ of $G$ is called controllable if $(G,\{u\})$ is controllable.

\begin{figure}[t]
\begin{center}
\ignore{
\begin{tikzpicture}
%
\node at (0,0)[scale=0.5]{};

\draw (-1.5,0)--(1.5,0);
\draw (-0.5,0)--(0,1);
\draw (+0.5,0)--(0,1);
\node at (-1.5,0)[circle, fill=black][scale=0.5]{};
\node at (+1.5,0)[circle, fill=black][scale=0.5]{};
\node at (0,+1)[circle, fill=black][scale=0.5]{};
\node at (-0.5,0)[circle, fill=black][scale=0.5]{};
\node at (+0.5,0)[circle, fill=black][scale=0.5]{};

\node at (0,1.3)[scale=0.9]{};
\node at (-1.5,-0.3)[scale=0.9]{};
\node at (-0.5,-0.3)[scale=0.9]{};
\node at (+0.5,-0.3)[scale=0.9]{};
\node at (+1.5,-0.3)[scale=0.9]{};
\end{tikzpicture}
\quad
\begin{tikzpicture}
%
\node at (0,0)[scale=0.5]{};

\draw (-1.5,0)--(1.5,0);
\draw (-0.5,0)--(0,1);
\draw (-1.5,0)--(0,1);
\draw (+0.5,0)--(0,1);
\draw (+1.5,0)--(0,1);
\node at (-1.5,0)[circle, fill=black][scale=0.5]{};
\node at (+1.5,0)[circle, fill=black][scale=0.5]{};
\node at (0,+1)[circle, fill=black][scale=0.5]{};
\node at (-0.5,0)[circle, fill=black][scale=0.5]{};
\node at (+0.5,0)[circle, fill=black][scale=0.5]{};

\node at (0,1.3)[scale=0.9]{};
\node at (-1.5,-0.3)[scale=0.9]{};
\node at (-0.5,-0.3)[scale=0.9]{};
\node at (+0.5,-0.3)[scale=0.9]{};
\node at (+1.5,-0.3)[scale=0.9]{};
\end{tikzpicture}
\quad
}
\begin{tikzpicture}
%
\node at (0,0)[scale=0.5]{};

\draw (-2.5,0)--(2.5,0);
\draw (-0.5,0)--(0,1);
\draw (+0.5,0)--(0,1);
\node at (-2.5,0)[circle, fill=black][scale=0.5]{};
\node at (-1.5,0)[circle, fill=black][scale=0.5]{};
\node at (+1.5,0)[circle, fill=black][scale=0.5]{};
\node at (+2.5,0)[circle, fill=black][scale=0.5]{};
\node at (0,+1)[circle, fill=black][scale=0.5]{};
\node at (-0.5,0)[circle, fill=black][scale=0.5]{};
\node at (+0.5,0)[circle, fill=black][scale=0.5]{};

\node at (0,1.3)[scale=0.9]{};
\node at (-2.5,-0.3)[scale=0.9]{};
\node at (-1.5,-0.3)[scale=0.9]{};
\node at (-0.5,-0.3)[scale=0.9]{};
\node at (+0.5,-0.3)[scale=0.9]{};
\node at (+1.5,-0.3)[scale=0.9]{};
\node at (+2.5,-0.3)[scale=0.9]{};
\end{tikzpicture}
\quad \quad \quad \quad
\begin{tikzpicture}
%
\node at (0,0)[scale=0.5]{};

\draw (-2.5,0)--(2.5,0);
\draw (-0.5,0)--(0,1);
\draw (-1.5,0)--(0,1);
\draw (+0.5,0)--(0,1);
\draw (+1.5,0)--(0,1);
\node at (-2.5,0)[circle, fill=black][scale=0.5]{};
\node at (-1.5,0)[circle, fill=black][scale=0.5]{};
\node at (+1.5,0)[circle, fill=black][scale=0.5]{};
\node at (+2.5,0)[circle, fill=black][scale=0.5]{};
\node at (0,+1)[circle, fill=black][scale=0.5]{};
\node at (-0.5,0)[circle, fill=black][scale=0.5]{};
\node at (+0.5,0)[circle, fill=black][scale=0.5]{};

\node at (0,1.3)[scale=0.9]{};
\node at (-2.5,-0.3)[scale=0.9]{};
\node at (-1.5,-0.3)[scale=0.9]{};
\node at (-0.5,-0.3)[scale=0.9]{};
\node at (+0.5,-0.3)[scale=0.9]{};
\node at (+1.5,-0.3)[scale=0.9]{};
\node at (+2.5,-0.3)[scale=0.9]{};
\end{tikzpicture}
\caption{The odd unicyclic graph $\OU_{m}$ is obtained by attaching two pendant paths $P_{m+1}$ 
to a three-cycle $C_{3}$. The example shows $\OU_{2}$ (on left) and its line graph $\Line(\OU_{2})$ (on right).
}
\label{fig:ou2}
\vspace{.2in}
\hrule
\end{center}
\end{figure}
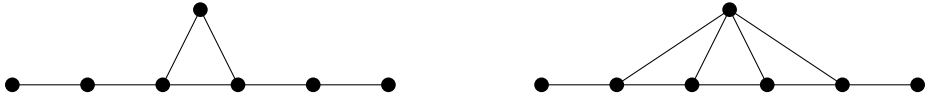

\medskip

The following theorems of Godsil on controllability and state transfer will prove useful.

\begin{theorem} \label{thm:godsil-pst-controllable} (Godsil \cite{godsil-ejla12}, Theorem 7.4) \\
If $G$ has perfect state transfer (relative to the adjacency matrix) between vertices $u$ and $v$, 
then neither $u$ nor $v$ is controllable.
\end{theorem}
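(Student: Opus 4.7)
The plan is to argue by contradiction: assume $u$ is controllable (the case of $v$ is symmetric, since PST from $u$ to $v$ at time $\tau$ entails PST from $v$ to $u$ at the same time) and derive a contradiction. The proof would be organized around the fact that PST forces $u$ and $v$ to be \emph{strongly cospectral}, which is incompatible with controllability.

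First, I would extract the spectral content of PST. Let $A=\sum_r\lambda_r E_r$ be the spectral decomposition, with distinct $\lambda_r$ and real orthogonal idempotents $E_r$. Perfect state transfer gives $e^{-i\tau A}\mathbf{e}_u=\gamma\,\mathbf{e}_v$ for some $\gamma$ with $|\gamma|=1$. Applying $E_r$ to both sides and using $E_r e^{-i\tau A}=e^{-i\tau\lambda_r}E_r$ yields
\[
e^{-i\tau\lambda_r}\,E_r\mathbf{e}_u \;=\; \gamma\,E_r\mathbf{e}_v \quad\text{for every } r.
\]
Because $E_r\mathbf{e}_u$ and $E_r\mathbf{e}_v$ are real vectors, whenever either is nonzero the unit-modulus scalar $\sigma_r:=e^{-i\tau\lambda_r}/\gamma$ must be real, forcing $\sigma_r\in\{\pm 1\}$ and the strong cospectrality identity $E_r\mathbf{e}_u=\sigma_r E_r\mathbf{e}_v$.

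Next I would bring in controllability. The walk matrix $W_{\{u\}}$ has rank equal to the dimension of the Krylov space $K(A,\mathbf{e}_u)$, which is the number of $r$ with $E_r\mathbf{e}_u\neq 0$. Thus controllability of $u$ is equivalent to $A$ having simple spectrum with $E_r\mathbf{e}_u\neq 0$ for every $r$. Writing $E_r=x_r x_r^T$ for unit eigenvectors, the strong cospectrality relation sharpens to the coordinate identity $(x_r)_v=\sigma_r(x_r)_u$ with every $(x_r)_u\neq 0$. Form the real orthogonal involution
\[
Q \;=\; \sum_r \sigma_r\, x_r x_r^T,
\]
which commutes with $A$, squares to the identity, and satisfies $Q\mathbf{e}_u=\mathbf{e}_v$. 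Since $\mathbf{e}_u$ is a cyclic vector for $A$, any matrix commuting with $A$ is a polynomial in $A$, so $Q=p(A)$ for some real polynomial $p$ with $p(A)\mathbf{e}_u=\mathbf{e}_v$. The main obstacle, and the essential combinatorial content of the argument, is the final step: promoting this algebraic relation to $u=v$. The way I would handle it is to show that the rigid requirement that the vector $p(A)\mathbf{e}_u$ land on a standard basis vector — an integer $0$--$1$ vector — together with the uniqueness of $u$ in its cospectral class guaranteed by controllability, leaves no room for a sign pattern $(\sigma_r)$ other than the all-ones pattern, forcing $Q=I$ and hence $\mathbf{e}_v=\mathbf{e}_u$, the required contradiction.
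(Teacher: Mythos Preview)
First, the paper itself supplies no proof of this statement; it is quoted from Godsil's ELA article, so there is no in-paper argument to compare your attempt against.

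Your first four steps are correct and standard: perfect state transfer forces strong cospectrality $E_r\mathbf{e}_u=\sigma_r E_r\mathbf{e}_v$ with $\sigma_r\in\{\pm1\}$; controllability of $u$ forces simple spectrum with every $E_r\mathbf{e}_u\neq 0$; and the involution $Q=\sum_r\sigma_r E_r$ satisfies $Q\mathbf{e}_u=\mathbf{e}_v$, commutes with $A$, and is therefore a polynomial in $A$.

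The final step, however, is a real gap, and the heuristic you sketch cannot be completed. You hope that landing $p(A)\mathbf{e}_u$ on a standard basis vector will force the sign pattern $(\sigma_r)$ to be identically $+1$, hence $Q=I$. But take $G=K_2$: vertex $1$ is controllable (its walk matrix $[\mathbf{e}_1,A\mathbf{e}_1]$ is the identity), perfect state transfer to vertex $2$ occurs at time $\pi/2$, and here $Q=A\neq I$. More tellingly, in $P_3$ the endpoint $u=1$ is again controllable, since its walk matrix
\[
\begin{pmatrix}1&0&1\\0&1&0\\0&0&1\end{pmatrix}
\]
has full rank, yet perfect state transfer from $1$ to $3$ holds and $Q\neq I$. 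In both cases $Q\mathbf{e}_u$ is a genuine standard basis vector distinct from $\mathbf{e}_u$, so the ``rigidity'' you appeal to does not exist.

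These examples indicate that the statement, exactly as phrased here, fails without additional hypotheses; Godsil's original Theorem~7.4 must carry conditions (or employ a notion of controllability) not fully captured by the paper's formulation. Your setup through step four is sound, but the obstacle you identify in step five is not merely the hard part of the proof: under the stated hypotheses there is no argument to be found there, because the conclusion itself does not hold.
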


\begin{theorem} \label{thm:godsil-cone-controllable} (Godsil \cite{godsil-ac12}) \\
Let $G = (V,E)$ be a graph and $S \subseteq V$ be a subset of vertices.
Let $\widehat{G}_{S}$ be a graph obtained from $G$ and a path with endpoints $u$ and $v$
(which may be identical) whereby we connect $u$ to all vertices in $S$.
If $(G,S)$ is controllable, then $v$ is controllable in $\widehat{G}_{S}$.
\end{theorem}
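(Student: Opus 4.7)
The plan is to show that the Krylov subspace $K = \mathrm{span}\{\widehat{A}^k e_v : k \geq 0\}$, where $\widehat{A} = A(\widehat{G}_S)$, coincides with the full ambient space $\RR^{|V(\widehat{G}_S)|}$. I proceed in two phases: first capture the entire ``path part'' of $\widehat{G}_S$ inside $K$, then bootstrap from there to capture the ``$G$ part'' via the hypothesis that $(G,S)$ is controllable. Partition the vertices of $\widehat{G}_S$ as $V(G) \sqcup V(P)$, where $P$ is the added path with vertices $v = w_0, w_1, \ldots, w_\ell = u$, and write $\widehat{A}$ in block form
$$\widehat{A} = \begin{bmatrix} A & e_S e_u^T \\ e_u e_S^T & B \end{bmatrix},$$
with $A = A(G)$ and $B = A(P)$; the off-diagonal blocks encode the edges from $u$ to $S$.

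For the first phase, I would show by induction on $k \in \{0, 1, \ldots, \ell\}$ that the $V(G)$-component of $\widehat{A}^k e_v$ vanishes and its path component equals $B^k e_{w_0}$. The key observation is that the recurrence couples $V(G)$ and $V(P)$ only through the quantity $e_u^T B^k e_{w_0}$, which counts walks of length $k$ from $w_0$ to $u$ in $P$; since $P$ is bipartite of diameter $\ell$, this count is zero for $k < \ell$. The $\ell+1$ vectors $\{B^k e_{w_0}\}_{k=0}^{\ell}$ are upper triangular in the ordered basis $e_{w_0}, \ldots, e_{w_\ell}$ (the entry of $B^k e_{w_0}$ at $w_k$ equals $1$ from the unique length-$k$ walk along the path, and vanishes at $w_j$ for $j > k$), so they span the path subspace. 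In particular $e_u \in K$.

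For the second phase, $\widehat{A} e_u = e_S + B e_u$ lies in $K$; subtracting $B e_u$, which belongs to the path subspace already contained in $K$, yields $e_S \in K$. Inductively, if $A^j e_S \in K$, then $\widehat{A}(A^j e_S) = A^{j+1} e_S + (e_S^T A^j e_S)\, e_u \in K$, and subtracting the $e_u$ contribution gives $A^{j+1} e_S \in K$. Hence $\{A^j e_S : j \geq 0\} \subseteq K$, and by controllability of $(G,S)$ these vectors span $\RR^{|V(G)|}$. Together with the path subspace, this shows $K$ is the entire ambient space, so $v$ is controllable in $\widehat{G}_S$.

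The main obstacle is the bookkeeping in the first phase: one must verify rigorously that the off-diagonal coupling does not activate until step $\ell+1$, which is precisely the parity/bipartite argument sketched above. The degenerate case $u = v$ (so $\ell = 0$) requires a small separate check: then $B$ is the zero $1 \times 1$ matrix and $\widehat{A} e_v = e_S$ in one step, the path subspace is simply $\RR e_v$, and the inductive second phase applies without modification.
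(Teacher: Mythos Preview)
The paper does not supply its own proof of this theorem: it is quoted verbatim from Godsil \cite{godsil-ac12} and used as a black box in the proof of Corollary~\ref{cor:no-signless-pst-odd-unicyclic}. So there is nothing in the paper to compare your argument against.

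That said, your argument is correct. The two-phase strategy---first absorb the path subspace into the cyclic subspace $K$ generated by $e_v$, then use $\widehat{A}$-invariance of $K$ together with the controllability hypothesis on $(G,S)$ to absorb the $V(G)$-subspace---is clean and complete. The key points all check out: the off-diagonal coupling in the block form of $\widehat{A}$ indeed stays dormant until step $\ell$ because $e_{u}^{T}B^{k}e_{w_{0}}$ counts length-$k$ walks from $w_{0}$ to $w_{\ell}$ in a path of diameter $\ell$; the triangular structure of $\{B^{k}e_{w_{0}}\}_{k=0}^{\ell}$ with unit diagonal gives a basis of the path block; and the inductive extraction of $A^{j}e_{S}$ in phase two is exactly right once one uses that $K$ is $\widehat{A}$-invariant and already contains $e_{u}$. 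The degenerate case $u=v$ is handled correctly. If you want to tighten the write-up, you can replace the informal ``upper triangular'' remark by the one-line observation that the $(w_{j},k)$-entry of the matrix $[\,e_{w_{0}}\mid Be_{w_{0}}\mid\cdots\mid B^{\ell}e_{w_{0}}\,]$ is zero for $j>k$ and equals $1$ for $j=k$.
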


A main ingredient of our proof is the next lemma on the controllability of $K_{1} + P_{4}$.

\begin{lemma} \label{lemma:fan-pendant-path}
Let $\Fan = K_{1} + P_{4}$ and let $u$ and $v$ be vertices of degree two in $\Fan$.
For $m \ge 0$, let $\GG_{m}$ be the graph obtained by attaching a pendant path $P_{m+1}$ 
(see Figure \ref{fig:controllable}(b)) to vertex $v$.
Then, $u$ is controllable in $\GG_{m}$ if and only if $m \not\equiv 2\pmod{3}$.
\end{lemma}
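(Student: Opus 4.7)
The plan is to apply the standard eigenvector characterization of controllability: a vertex $u$ is controllable in $\GG_m$ if and only if no nonzero eigenvector of $A(\GG_m)$ vanishes at $u$ (equivalently, the spectrum is simple \emph{and} $u$ has full eigenvalue support, since any repeated eigenvalue admits a linear combination of eigenvectors that kills any prescribed coordinate). So I will determine precisely for which $m$ a nonzero eigenvector $z$ with $z[u] = 0$ exists.

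Label the degree-$2$ vertices of $\Fan$ as $u$ and $v$, label the two interior $P_4$ vertices as $x$ (adjacent to $u$) and $y$ (adjacent to $v$), and let $a$ be the apex; label the pendant-path vertices $w_1, \ldots, w_m$ with $v \sim w_1 \sim w_2 \sim \cdots \sim w_m$. First I would propagate the eigenvalue equation $Az = \lambda z$ through the fan using $z[u] = 0$: the equation at $u$ gives $z[a] = -z[x]$, at $x$ it gives $z[y] = (\lambda+1)z[x]$, at $y$ it gives $z[v] = \lambda(\lambda+1)z[x]$, and the equation at $a$ then collapses to $z[x]\,(\lambda+1)(\lambda+2) = 0$. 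A short separate check handles $z[x] = 0$: all fan coordinates then vanish, and the three-term path recurrence forces every $z[w_k] = 0$, contradicting nonzeroness. Hence $z[x] \neq 0$ and $\lambda \in \{-1, -2\}$.

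Normalizing $z[x] = 1$, the equation at $v$ determines the first pendant coordinate $z[w_1]$, and the path recurrence $z[w_{k+1}] = \lambda z[w_k] - z[w_{k-1}]$ must terminate with the boundary condition $z[w_{m+1}] = 0$ (interpreted as a constraint on the formally extended sequence, so that $m = 0$ degenerates to $z[w_1] = 0$). For $\lambda = -2$, a direct check shows $z[w_k] = 2(-1)^k$, which never vanishes, ruling that eigenvalue out for every $m \ge 0$. For $\lambda = -1$, the initial data $z[v] = 0$, $z[w_1] = 1$ together with the recurrence $z[w_{k+1}] = -z[w_k] - z[w_{k-1}]$ produce the period-three sequence $0, 1, -1, 0, 1, -1, \ldots$, so $z[w_{m+1}] = 0$ holds if and only if $m + 1 \equiv 0 \pmod{3}$, i.e., $m \equiv 2 \pmod{3}$. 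Combining the two cases gives both directions of the lemma.

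The main technical care will lie in bookkeeping at the degenerate small values of $m$ (where the pendant is empty or nearly empty and the boundary condition merges with the eigenvector equation at $v$); once that convention is handled consistently, the period-three structure of the $\lambda = -1$ recurrence supplies the exact congruence $m \equiv 2 \pmod{3}$.
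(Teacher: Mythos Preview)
Your proposal is correct and follows essentially the same approach as the paper: both arguments use the eigenvector criterion for controllability, propagate the eigenvalue equation through the fan from the constraint $z[u]=0$ to obtain $(\lambda+1)(\lambda+2)=0$, and then analyze the pendant-path recurrence for $\lambda=-1$ (period-three pattern, consistent iff $m\equiv 2\pmod 3$) and $\lambda=-2$ (alternating $\pm 2$ pattern, never consistent). The only notable difference is that you explicitly dispose of the degenerate case $z[x]=0$, which the paper assumes away without comment.
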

\begin{proof}
Let $A$ be the adjacency matrix of $\GG_{m}$ whose spectral decomposition is given by
\begin{equation}
A = \sum_{\ell=1}^{n} \lambda_{\ell}\ket{z_{\ell}}\bra{z_{\ell}}
\end{equation}
where $\{\ket{z_{1}},\ldots,\ket{z_{n}}\}$ is the set of orthonormal eigenvectors of $A$
which satisfies $A\ket{z_{\ell}} = \lambda_{\ell}\ket{z_{\ell}}$, for every $\ell=1,\ldots,n$.
Consider the walk matrix $W_{u}$ relative to vertex $u$:
\begin{equation}
W_{u} = \sum_{k=0}^{n-1} A^{k}\uket{u}\tbra{k}
\end{equation}
The rank of $W_{u}$ is equal to the cardinality of the set 
$\{\ell : \bracket{\bvket{z_{\ell}}}{\buket{u}} \neq 0\}$.
To see this, note that
\begin{equation}
A^{k}\uket{u} = \sum_{\ell=1}^{n} \lambda_{\ell}^{k}\bracket{\bvket{z_{\ell}}}{\buket{u}}\ket{z_{\ell}},
\end{equation}
which shows that the columns of $W_{u}$ is spanned by the vectors $\ket{z_{\ell}}$ satisfying
$\bracket{\bvket{z_{\ell}}}{\buket{u}} \neq 0$.

In what follows, we label the vertices of $\Fan$ as $\{0,1,2,3,4\}$ where $0$ is the conical
vertex, $1$ and $4$ are the endpoints of $P_{4}$, $2$ and $3$ are the middle vertices with
$2$ adjacent to $1$ and $3$ adjacent to $4$. The vertices of the pendant path $P_{m+1}$ will
be labeled consecutively as $4$ followed by $4+k$, for $k = 1,\ldots,m$;
see Figure \ref{fig:controllable}(b).
We show $\bracket{\bvket{z_{\ell}}}{\buket{1}} = 0$, for some $\ell$, if and only if $m \equiv 2\pmod{3}$.

Suppose $\ket{z}$ is an eigenvector of $A$ with eigenvalue $\lambda$ 
where $\bracket{\ket{z}}{\buket{1}}=0$.
Assume, without loss of generality, that 
$\bracket{\bvket{z}}{\buket{2}} = a$, for some $a \neq 0$.
Using the fact that 
$\lambda\bracket{\bvket{z}}{\buket{u}} = \sum_{v \sim u} \bracket{\bvket{z}}{\buket{v}}$
and following the chain of implications, we obtain:
\begin{eqnarray}
\bracket{\bvket{z}}{\buket{0}} & = & -a \\
\bracket{\bvket{z}}{\buket{3}} & = & (1+\lambda)a \\
\bracket{\bvket{z}}{\buket{4}} & = & \lambda(1+\lambda)a.
\end{eqnarray}
Since 
$\lambda\bracket{\bvket{z}}{\buket{0}} = \sum_{k=1}^{4} \bracket{\bvket{z}}{\buket{k}}$,
we have $(1 + \lambda)(2 + \lambda) = 0$, which implies $\lambda = -1$ or $\lambda = -2$. 
We consider these two cases separately.

\begin{figure}[t]
\begin{center}
\begin{tikzpicture}
%
\node at (0,0)[scale=0.5]{};

\draw (-1.5,0)--(1.5,0);
\draw (-0.5,0)--(0,1);
\draw (-1.5,0)--(0,1);
\draw (+0.5,0)--(0,1);
\draw (+1.5,0)--(0,1);
\node at (-1.5,0)[circle, fill=white][scale=0.5]{};
		\draw[line width=0.2mm] (-1.5,0) circle (0.09cm);
\node at (+1.5,0)[circle, fill=black][scale=0.5]{};
\node at (0,+1)[circle, fill=black][scale=0.5]{};
\node at (-0.5,0)[circle, fill=black][scale=0.5]{};
\node at (+0.5,0)[circle, fill=black][scale=0.5]{};

\node at (0,1.3)[scale=0.9]{$0$};
\node at (-1.5,-0.3)[scale=0.9]{$1$};
\node at (-0.5,-0.3)[scale=0.9]{$2$};
\node at (+0.5,-0.3)[scale=0.9]{$3$};
\node at (+1.5,-0.3)[scale=0.9]{$4$};
\end{tikzpicture}
\quad \quad \quad \quad
\begin{tikzpicture}
%
\node at (0,0)[scale=0.5]{};

\draw (-1.5,0)--(4.5,0);
\draw (-1.5,0)--(0,1);
\draw (-0.5,0)--(0,1);
\draw (+0.5,0)--(0,1);
\draw (+1.5,0)--(0,1);
\node at (-1.5,0)[circle, fill=white][scale=0.5]{};
		\draw[line width=0.2mm] (-1.5,0) circle (0.09cm);
\node at (0,+1)[circle, fill=black][scale=0.5]{};
\node at (-0.5,0)[circle, fill=black][scale=0.5]{};
\node at (+0.5,0)[circle, fill=black][scale=0.5]{};
\node at (+1.5,0)[circle, fill=black][scale=0.5]{};
\node at (+2.5,0)[circle, fill=black][scale=0.5]{};
\node at (+3.5,0)[circle, fill=black][scale=0.5]{};
\node at (+4.5,0)[circle, fill=black][scale=0.5]{};

\node at (0,1.3)[scale=0.9]{$0$};
\node at (-1.5,-0.3)[scale=0.9]{$1$};
\node at (-0.5,-0.3)[scale=0.9]{$2$};
\node at (+0.5,-0.3)[scale=0.9]{$3$};
\node at (+1.5,-0.3)[scale=0.9]{$4$};
\node at (+2.5,-0.3)[scale=0.9]{$5$};
\node at (+3.5,-0.3)[scale=0.9]{$6$};
\node at (+4.5,-0.3)[scale=0.9]{$7$};
\end{tikzpicture}

\caption{Small graphs with controllable vertices:
(a) The graph $\Fan$ is the cone $K_{1} + P_{4}$ (vertex $1$ is controllable; so is vertex $4$).
(b) The graph $\Fan$ with a pendant path $P_{m+1}$:
	vertex $1$ is controllable if and only if $m \not\equiv 2\pmod{3}$ (see Lemma \ref{lemma:fan-pendant-path}).
	The example shows $m = 3$.
}
\label{fig:controllable}
\vspace{.2in}
\hrule
\end{center}
\end{figure}
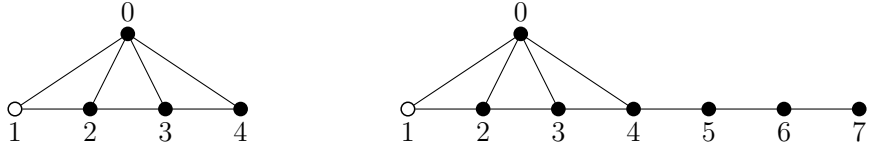

{\em Case}: $\lambda = -1$.
We have $\bracket{\bvket{z}}{\buket{3}} = \bracket{\bvket{z}}{\buket{4}} = 0$. 
For $k \ge 0$, this forces the three-step sequence along the pendant path:
\begin{equation}
\bracket{\bvket{z}}{\buket{4+k}} = 
\left\{\begin{array}{ll}
0 & \mbox{ if $k \equiv 0\pmod{3}$ } \\
+a & \mbox{ if $k \equiv 1\pmod{3}$ } \\
-a & \mbox{ if $k \equiv 2\pmod{3}$ }
\end{array}\right.
\end{equation}
The last vertex on the pendant path must have index $k \equiv 2\pmod{3}$ for the eigenvector $\ket{z}$
to be well-defined.

{\em Case}: $\lambda = -2$.
We have $\bracket{\bvket{z}}{\buket{3}} = -a$ and $\bracket{\bvket{z}}{\buket{4}} = 2a$.
For $k \ge 0$, this forces the following pattern along the pendant path:
\begin{equation}
\bracket{\bvket{z}}{\buket{4+k}} = (-1)^{k}2a,
\end{equation}
which must continue indefinitely. Hence, such an eigenvector $\ket{z}$ does not exist.

This proves the claim.
\end{proof}

We state our other corollary of Theorem \ref{thm:pst-signless-line} for the family 
of odd unicyclic of graphs $\OU_{m}$. This provides a nonbipartite generalization of
Corollary \ref{cor:path-no-signless-pst}.

\begin{corollary} \label{cor:no-signless-pst-odd-unicyclic}
For $m \ge 1$, the family of graphs $\OU_{m}$ has no antipodal perfect state transfer under 
the signless Laplacian whenever $m \not\equiv 0\pmod{3}$.
\end{corollary}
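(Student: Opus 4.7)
The plan is to assume antipodal perfect state transfer in $\OU_m$ relative to the signless Laplacian and derive a contradiction by passing to the line graph. The antipodal vertices of $\OU_m$, call them $u_1$ and $u_2$, are the two degree-one endpoints of the pendant paths. By Theorem \ref{thm:pst-signless-line}, such PST at time $t$ would force adjacency-matrix PST in $\Line(\OU_m)$ at time $t$ between the two unique incident edges $e_1,e_2$ (viewed as vertices of the line graph). So it suffices to exhibit one of $e_1,e_2$ as a controllable vertex of $\Line(\OU_m)$, whereupon Theorem \ref{thm:godsil-pst-controllable} rules out adjacency-matrix PST involving it.

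To set up the controllability argument, I first verify that $\Line(\OU_m)$ contains $\Fan = K_{1}+P_{4}$ as an induced subgraph. Writing $C_3 = \{a,b,c\}$ with pendant paths of $m$ new vertices attached at $a$ and $b$, a direct adjacency check shows that the five line-graph vertices $\{ab,\ bc,\ ca,\ aa_1,\ bb_1\}$ induce a copy of $\Fan$ with cone vertex $ab$ and $P_4 = aa_1 - ca - bc - bb_1$. The degree-two endpoints of this $\Fan$ are $aa_1$ and $bb_1$, and off of each hangs a pendant path of $m-1$ additional new line-graph vertices. Renaming $u := aa_1$ and $v := bb_1$, I view $\Line(\OU_m)$ as being built in two stages: first form $\GG_{m-1}$ (in the notation of Lemma \ref{lemma:fan-pendant-path}, obtained by attaching to $v$ a pendant path with $m-1$ new vertices), and then attach a second pendant path of $m-1$ new vertices whose inner endpoint is adjacent to the singleton $S = \{u\}$ and whose outer endpoint is $e_1$.

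Lemma \ref{lemma:fan-pendant-path} applied with parameter $m-1$ says that $u$ is controllable in $\GG_{m-1}$ exactly when $m-1 \not\equiv 2 \pmod{3}$, i.e., exactly when $m \not\equiv 0 \pmod{3}$. Under this hypothesis, $(\GG_{m-1}, S)$ is controllable, so Theorem \ref{thm:godsil-cone-controllable} transports controllability along the freshly attached path and shows that its far endpoint $e_1$ is controllable in $\Line(\OU_m)$. Theorem \ref{thm:godsil-pst-controllable} then forbids adjacency-matrix PST between $e_1$ and any other vertex, contradicting the consequence of Theorem \ref{thm:pst-signless-line} and finishing the proof.

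The main delicacy is the off-by-one shift in the pendant-length parameter: the original pendant $P_{m+1}$ in $\OU_m$ has $m$ edges, but its image in the line graph contributes only $m-1$ vertices beyond the degree-two endpoint of $\Fan$, so one must be careful to invoke Lemma \ref{lemma:fan-pendant-path} with parameter $m-1$ rather than $m$ in order to land on the correct residue class mod $3$. A secondary check needed is that the five vertices $\{ab, bc, ca, aa_1, bb_1\}$ really induce $\Fan$ (in particular, that $aa_1 \not\sim bb_1$ and that neither is adjacent to the opposite triangle edge), which holds because the pendant arms at $a$ and $b$ are vertex-disjoint.
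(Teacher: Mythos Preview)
Your proof is correct and follows essentially the same strategy as the paper: transfer the question to the adjacency walk on $\Line(\OU_m)$ via Theorem~\ref{thm:pst-signless-line}, then use Lemma~\ref{lemma:fan-pendant-path} together with Theorem~\ref{thm:godsil-cone-controllable} to show an endpoint is controllable, whence Theorem~\ref{thm:godsil-pst-controllable} forbids perfect state transfer. Your careful handling of the off-by-one shift---the pendant $P_{m+1}$ in $\OU_m$ contributes only $m-1$ new line-graph vertices beyond the degree-two vertex of $\Fan$, so Lemma~\ref{lemma:fan-pendant-path} must be applied with parameter $m-1$ to yield the stated condition $m\not\equiv 0\pmod 3$---is actually more accurate than the paper's own write-up, which uses $\GG_m$ and ``$m$ edges'' where $\GG_{m-1}$ and $m-1$ edges are meant.
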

\begin{proof} 
Let $\Fan$ be the cone $K_{1} + P_{4}$ and let $u$ and $v$ be the vertices of degree two in $\Fan$.
Also, let $\GG_{m}$ denote the graph obtained by attaching to $v$ a pendant path with $m$ edges.
Then, the line graph of $\OU_{m}$, that is $\Line(\OU_{m})$, is obtained from $\GG_{m}$ by attaching
to $u$ a pendant path with $m$ edges.

To prove the claim, we show that $\Line(\OU_{m})$ does not have antipodal perfect state transfer under
the adjacency matrix and then apply Theorem \ref{thm:pst-signless-line}. 
By Theorem \ref{thm:godsil-pst-controllable}, it suffices to show that the two ``antipodal'' vertices 
of minimum degree in $\Line(\OU_{m})$ are controllable.
By Lemma \ref{lemma:fan-pendant-path}, the two vertices of degree two in the cone $\Fan$ are controllable.
Using Theorem \ref{thm:godsil-cone-controllable}, we conclude that the unique vertex of degree one in 
$\GG_{m}$ is controllable.
Since vertex $u$ in $\GG_{m}$ is controllable (by Lemma \ref{lemma:fan-pendant-path} again), 
if we attach a pendant path with $m$ edges to $u$, the other endpoint of this path is controllable
by Theorem \ref{thm:godsil-cone-controllable}.
\end{proof}

\par\noindent{\em Remark}:
Our argument in Corollary \ref{cor:no-signless-pst-odd-unicyclic} allows pendant paths of different 
lengths attached to a three-cycle provided the length of one of the paths is not divisible by three.
It would be interesting to show a similar result to Corollary \ref{cor:no-signless-pst-odd-unicyclic}
for arbitrary odd unicyclic graphs. These graphs are interesting since $\NB^{T}\NB$ is 
nonsingular\footnote{Doob \cite{d73} showed that $-2 \in \Sp(\Line(G))$ if and only if
$G$ contains an even cycle or two odd cycles in the same component.}.


\section{Normalized Laplacian}

A quantum walk on the hypercube $Q_{n}$ relative to the adjacency matrix has antipodal perfect state transfer 
at time $\pi/2$ for {\em any} $n$. This might contradict the postulate that the speed of light is constant.
In contrast, a quantum walk on $Q_{n}$ relative to the normalized Laplacian has antipodal perfect state transfer 
at time $n\pi/2$. 
Thus, the normalized Laplacian takes into account the diameter of the $n$-cube whereas the adjacency matrix does not.
This motivates a closer study of quantum walks relative to normalized Laplacians.

\subsection{Weak products}

We show that the weak product is a useful operation for constructing classes of
graphs with perfect state transfer relative to the normalized Laplacian.
First, we observe that the normalized Laplacian of a weak product $G \times H$ has a
strong resemblance in form to the adjacency matrix of a strong product $G \boxtimes H$.

\begin{fact} \label{fact:normalized-weak-product}
For graphs $G$ and $H$, we have
\begin{equation}
\NLp(G \times H) = \NLp(G) \otimes \Id_{H} + \Id_{G} \otimes \NLp(H) - \NLp(G) \otimes \NLp(H).
\end{equation}
\end{fact}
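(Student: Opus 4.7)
The plan is direct computation using three basic tensor identities: that the adjacency matrix of a weak product factorizes as $A(G\times H)=A(G)\otimes A(H)$ (already noted in the preliminaries), that the degree matrix factorizes as $D(G\times H)=D(G)\otimes D(H)$, and that tensor product commutes with taking inverse square roots of positive diagonal matrices.

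First I would establish the degree identity. For a vertex $(g,h)$ in $G\times H$, its neighbors are exactly the pairs $(g',h')$ with $g'\sim g$ in $G$ and $h'\sim h$ in $H$, so $\deg_{G\times H}(g,h)=\deg_G(g)\deg_H(h)$. This gives $D(G\times H)=D(G)\otimes D(H)$, and hence
\begin{equation}
D(G\times H)^{-1/2}=D(G)^{-1/2}\otimes D(H)^{-1/2}.
\end{equation}
Next, abbreviate $\tilde A(G):=D(G)^{-1/2}A(G)D(G)^{-1/2}$ and similarly for $H$, so that $\NLp(G)=\Id_G-\tilde A(G)$. Combining the two factorizations with the mixed-product property of the Kronecker product yields
\begin{equation}
D(G\times H)^{-1/2}A(G\times H)D(G\times H)^{-1/2}=\tilde A(G)\otimes \tilde A(H).
\end{equation}

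Then I would expand:
\begin{align}
\NLp(G\times H) &= \Id_G\otimes\Id_H-\tilde A(G)\otimes\tilde A(H) \\
&= \Id_G\otimes\Id_H-\bigl(\Id_G-\NLp(G)\bigr)\otimes\bigl(\Id_H-\NLp(H)\bigr) \\
&= \NLp(G)\otimes\Id_H+\Id_G\otimes\NLp(H)-\NLp(G)\otimes\NLp(H),
\end{align}
which is exactly the claimed identity. There is no real obstacle here; the only thing worth noting is that the degree-product identity $D(G\times H)=D(G)\otimes D(H)$ is what distinguishes the weak product from the Cartesian product (whose degree matrix is a sum, giving the additive Laplacian decomposition of Fact~\ref{fact:laplacian-cartesian-product}), so the asymmetry between the two formulas traces back entirely to how the degree sequences combine.
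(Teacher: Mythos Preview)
Your proof is correct and follows essentially the same approach as the paper: both observe that $D(G\times H)=D(G)\otimes D(H)$, deduce that the normalized adjacency matrix factors as a tensor product, and then expand $\Id-\bigl(\Id-\NLp(G)\bigr)\otimes\bigl(\Id-\NLp(H)\bigr)$. Your version is slightly more explicit in justifying the degree identity and invoking the mixed-product property, but the argument is the same.
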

\begin{proof}
Note that the degree matrix of $G \times H$ is given by $D(G \times H) = D(G) \otimes D(H)$.
The normalized Laplacian is (also) defined as $\NLp = \Id - \NA$, where
$\NA = D^{-1/2}AD^{-1/2}$ is the normalized adjacency matrix. In our case, we have
$\NA(G \times H) = \NA(G) \otimes \NA(H)$.
Therefore,
\begin{eqnarray}
\NLp(G \times H)
	& = & \Id_{G} \otimes \Id_{H} - \NA(G) \otimes \NA(H) \\
	& = & \Id_{G} \otimes \Id_{H} - (\Id_{G} - \NLp(G)) \otimes (\Id_{H} - \NLp(H)) \\
	& = & \NLp(G) \otimes \Id_{H} + \Id_{G} \otimes \NLp(H) - \NLp(G) \otimes \NLp(H).
\end{eqnarray}
This proves the claim.
\end{proof}

\medskip
We apply Fact \ref{fact:normalized-weak-product} to derive a useful form on the quantum walk
on a weak product relative to the normalized Laplacian.

\begin{lemma} \label{lemma:normalized-qwalk}
Let $G$ and $H$ be graphs whose normalized Laplacians have spectral decompositions
given by $\NLp(G) = \sum_{k} \lambda_{k} E_{k}$ 
and $\NLp(H) = \sum_{\ell} \mu_{\ell} F_{\ell}$.
Then, the quantum walk on $G \times H$ relative to the normalized Laplacian is given by
\begin{equation} \label{eqn:qwalk-normalized}
\exp(-it\NLp(G \times H))
=
\sum_{k,\ell} \exp\left[-it\left(\lambda_{k} + \mu_{\ell} - \lambda_{k}\mu_{\ell}\right)\right]
E_{k} \otimes F_{\ell}.
\end{equation}
\end{lemma}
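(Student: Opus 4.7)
The plan is to combine Fact \ref{fact:normalized-weak-product} with the given spectral decompositions of $\NLp(G)$ and $\NLp(H)$, producing a spectral decomposition of $\NLp(G\times H)$, and then apply the functional calculus to obtain the matrix exponential.

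First I would use the fact that $\{E_k\}$ and $\{F_\ell\}$ are complete families of mutually orthogonal projectors, so $\sum_k E_k = \Id_G$ and $\sum_\ell F_\ell = \Id_H$ with $E_k E_{k'} = \delta_{k,k'} E_k$ and $F_\ell F_{\ell'} = \delta_{\ell,\ell'} F_\ell$. It follows that the tensor products $\{E_k \otimes F_\ell\}$ form a complete family of mutually orthogonal projectors on the tensor product space, summing to $\Id_G \otimes \Id_H$. Expanding each of the three tensor summands on the right-hand side of Fact \ref{fact:normalized-weak-product} using these resolutions of the identity gives
\begin{align}
\NLp(G) \otimes \Id_H &= \sum_{k,\ell} \lambda_k \, E_k \otimes F_\ell, \\
\Id_G \otimes \NLp(H) &= \sum_{k,\ell} \mu_\ell \, E_k \otimes F_\ell, \\
\NLp(G) \otimes \NLp(H) &= \sum_{k,\ell} \lambda_k \mu_\ell \, E_k \otimes F_\ell.
\end{align}

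Next I would add the first two and subtract the third to get
\begin{equation}
\NLp(G \times H) = \sum_{k,\ell} \bigl(\lambda_k + \mu_\ell - \lambda_k \mu_\ell\bigr)\, E_k \otimes F_\ell,
\end{equation}
which is the spectral decomposition of $\NLp(G \times H)$ (up to collecting together terms that happen to share the same eigenvalue, which is immaterial for functional calculus).

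Finally, since $\{E_k \otimes F_\ell\}$ is an orthogonal family of projectors summing to the identity, the matrix exponential acts diagonally on this decomposition, yielding
\begin{equation}
\exp(-it\NLp(G \times H)) = \sum_{k,\ell} \exp\bigl[-it(\lambda_k + \mu_\ell - \lambda_k\mu_\ell)\bigr]\, E_k \otimes F_\ell,
\end{equation}
as claimed. There is no real obstacle here: the argument is a direct bookkeeping exercise on top of Fact \ref{fact:normalized-weak-product}, and the only thing to be careful about is the sign of the $\lambda_k\mu_\ell$ cross-term, which comes from the minus sign in $\NLp(G \times H) = \NLp(G)\otimes\Id + \Id\otimes\NLp(H) - \NLp(G)\otimes\NLp(H)$.
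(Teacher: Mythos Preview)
Your proposal is correct and is essentially the paper's argument, just spelled out in more detail: the paper invokes Fact~\ref{fact:normalized-weak-product} and notes that the three summands $\NLp(G)\otimes\Id_H$, $\Id_G\otimes\NLp(H)$, $\NLp(G)\otimes\NLp(H)$ commute, which is exactly what licenses the simultaneous diagonalization by $\{E_k\otimes F_\ell\}$ that you write out explicitly.
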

\begin{proof}
Follows from Fact \ref{fact:normalized-weak-product} since $\NLp(G \times H)$ consists
of three commuting matrices $\NLp(G) \otimes \Id_{H}$, $\Id_{G} \otimes \NLp(H)$, and
$\NLp(G) \otimes \NLp(H)$.
\end{proof}

\smallskip

We show a closure property for perfect state transfer under weak products relative to
the normalized Laplacian.

\begin{theorem} \label{thm:normalized-weak-pst1}
Let $G$ be a graph with perfect state transfer between vertices $g_{1}$ and $g_{2}$ 
at time $t_{G}$ relative to the normalized Laplacian. 
Suppose that $H$ is a graph where
\begin{equation}
t_{G} \Sp_{\NLp}(H)(\Sp_{\NLp}(G) - 1) \subseteq 2\pi\ZZ.
\end{equation} 
Then, $G \times H$ has perfect state transfer between vertices $(g_{1},h)$ and $(g_{2},h)$
at time $t_{G}$ relative to the normalized Laplacian.
\end{theorem}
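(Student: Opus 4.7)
The plan is to substitute the hypothesis directly into the spectral formula of Lemma \ref{lemma:normalized-qwalk} and watch the cross term collapse. Starting from
\[
\exp(-it\NLp(G \times H)) = \sum_{k,\ell} \exp\!\left[-it\bigl(\lambda_{k} + \mu_{\ell} - \lambda_{k}\mu_{\ell}\bigr)\right] E_{k} \otimes F_{\ell},
\]
I would rewrite the exponent as $\lambda_{k} + \mu_{\ell}(1 - \lambda_{k})$, which separates the influence of $\NLp(G)$ from the coupling between $G$ and $H$. Then, at $t = t_{G}$, the hypothesis $t_{G}\mu_{\ell}(\lambda_{k}-1) \in 2\pi\ZZ$ for all $\lambda_{k} \in \Sp_{\NLp}(G)$ and $\mu_{\ell} \in \Sp_{\NLp}(H)$ forces $\exp[-it_{G}\mu_{\ell}(1-\lambda_{k})] = 1$ for every $(k,\ell)$.

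After this simplification, the double sum factors as
\[
\exp(-it_{G}\NLp(G \times H)) = \Bigl(\sum_{k} e^{-it_{G}\lambda_{k}} E_{k}\Bigr) \otimes \Bigl(\sum_{\ell} F_{\ell}\Bigr) = e^{-it_{G}\NLp(G)} \otimes \Id_{H},
\]
where the second factor collapses to $\Id_{H}$ because the $F_{\ell}$ form a resolution of the identity on $H$. With the walk in this product form, the $((g_{2},h),(g_{1},h))$-entry is
\[
\tbracket{\buket{(g_{2},h)}}{e^{-it_{G}\NLp(G\times H)}}{\buket{(g_{1},h)}}
= \tbracket{\buket{g_{2}}}{e^{-it_{G}\NLp(G)}}{\buket{g_{1}}} \cdot \tbracket{\buket{h}}{\Id_{H}}{\buket{h}},
\]
and the first factor has unit magnitude by the assumed perfect state transfer in $G$, while the second equals $1$. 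This delivers the claimed perfect state transfer on $G \times H$ at time $t_{G}$.

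There is essentially no hard step here: the whole argument is driven by the factorization of the exponent $\lambda_{k}+\mu_{\ell}-\lambda_{k}\mu_{\ell} = \lambda_{k}+\mu_{\ell}(1-\lambda_{k})$, which is exactly tailored to the arithmetic condition stated in the theorem. The only point requiring a moment's care is noting that the hypothesis must hold for \emph{all} eigenvalue pairs (so the cross factor vanishes uniformly across the spectral sum, rather than only on the support of the initial or final state vector), so that the factorization through $\Id_{H}$ is valid regardless of how $\uket{h}$ decomposes in the eigenbasis of $\NLp(H)$.
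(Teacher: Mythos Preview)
Your proof is correct and follows essentially the same approach as the paper: both invoke Lemma \ref{lemma:normalized-qwalk}, rewrite the exponent as $\lambda_{k} + \mu_{\ell}(1-\lambda_{k})$, and use the hypothesis to collapse the cross factor so that $\sum_{\ell} F_{\ell} = \Id_{H}$. The only cosmetic difference is that you factor the entire unitary as $e^{-it_{G}\NLp(G)} \otimes \Id_{H}$ before taking matrix entries, whereas the paper takes the $((g_{2},h_{2}),(g_{1},h_{1}))$-entry first and then simplifies; the underlying argument is identical.
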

\begin{proof}
Suppose $\NLp(G) = \sum_{k} \lambda_{k} E_{k}$ and $\NLp(H) = \sum_{\ell} \mu_{\ell} F_{\ell}$
are the spectral decompositions of the normalized Laplacians of $G$ and $H$.
By Lemma \ref{lemma:normalized-qwalk}, we have 
\begin{equation}
\tbracket{\buket{(g_{2},h_{2})}}{e^{-it\NLp(G \times H)}}{\buket{(g_{1},h_{1})}}
=
\sum_{k} e^{-it\lambda_{k}} \tbracket{\buket{g_{2}}}{E_{k}}{\buket{g_{1}}} 
	\sum_{\ell} e^{-it\mu_{\ell}(1 - \lambda_{k})} \tbracket{\buket{h_{2}}}{F_{\ell}}{\buket{h_{1}}}.
\end{equation}
Note we have used $\buket{(g,h)} = \buket{g} \otimes \buket{h}$.
Suppose at time $t_{G}$, we have $|\tbracket{\buket{g_{2}}}{e^{-it_{G}\NLp(G)}}{\buket{g_{1}}}| = 1$.
Since $t_{G}\Sp_{\NLp}(H)(\Sp_{\NLp}(G)-1) \subseteq 2\pi\ZZ$, we have
\begin{eqnarray}
\tbracket{\buket{(g_{2},h_{2})}}{e^{-it_{G}\NLp(G \times H)}}{\buket{(g_{1},h_{1})}}
	& = & \sum_{k} e^{-it_{G}\lambda_{k}} \tbracket{\buket{g_{2}}}{E_{k}}{\buket{g_{1}}} 
			\sum_{\ell} \tbracket{\buket{h_{2}}}{F_{\ell}}{\buket{h_{1}}} \\
	& = & \tbracket{\buket{g_{2}}}{e^{-it_{G}\NLp(G)}}{\buket{g_{1}}} \bracket{\buket{h_{2}}}{\buket{h_{1}}}, 
\end{eqnarray}
which proves the claim.
\end{proof}

\smallskip

\begin{corollary}
For any integer $m \ge 1$, the weak product $P_{3} \times K_{2m}$
has perfect state transfer at time $t = (2m-1)\pi$
relative to the normalized Laplacian.
\end{corollary}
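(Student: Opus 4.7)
The plan is to apply Theorem \ref{thm:normalized-weak-pst1} directly, taking $G = P_{3}$, $H = K_{2m}$, and $t_{G} = (2m-1)\pi$. Two facts need to be assembled: first, that $P_{3}$ has antipodal perfect state transfer under the normalized Laplacian at the specific time $(2m-1)\pi$, and second, that the arithmetic compatibility condition
\[
t_{G}\,\Sp_{\NLp}(K_{2m})\bigl(\Sp_{\NLp}(P_{3})-1\bigr)\subseteq 2\pi\ZZ
\]
holds at this time.

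For the first ingredient, Fact \ref{fact:normalized-p3} gives antipodal perfect state transfer on $P_{3}$ at time $\pi$. Since $\Sp_{\NLp}(P_{3}) = \{0,1,2\}$ consists of integers, the unitary $e^{-it\NLp(P_{3})}$ is periodic with period $2\pi$, so perfect state transfer occurs at every time of the form $(2k+1)\pi$ with $k \in \ZZ_{\ge 0}$; taking $k = m-1$ gives the required time $t_{G} = (2m-1)\pi$.

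For the second ingredient, I would compute $\Sp_{\NLp}(K_{2m})$ using the fact that $K_{2m}$ is $(2m-1)$-regular with adjacency eigenvalues $2m-1$ (simple) and $-1$ (multiplicity $2m-1$); applying $\NLp(K_{2m}) = \Id - \tfrac{1}{2m-1}A(K_{2m})$ yields $\Sp_{\NLp}(K_{2m}) = \bigl\{0,\tfrac{2m}{2m-1}\bigr\}$. Likewise $\Sp_{\NLp}(P_{3}) - 1 = \{-1,0,1\}$. The set of products is $\bigl\{0,\pm\tfrac{2m}{2m-1}\bigr\}$, and multiplying by $t_{G} = (2m-1)\pi$ gives $\{0,\pm 2m\pi\} \subseteq 2\pi\ZZ$, as required.

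There is no real obstacle here beyond verifying that the perfect state transfer time on $P_{3}$ can be taken to be $(2m-1)\pi$ rather than the base time $\pi$; this is why I emphasize the integrality of $\Sp_{\NLp}(P_{3})$ and the resulting $2\pi$-periodicity of the walk. With both hypotheses of Theorem \ref{thm:normalized-weak-pst1} checked, the conclusion is that $P_{3} \times K_{2m}$ exhibits perfect state transfer between $(g_{1},h)$ and $(g_{2},h)$ at time $(2m-1)\pi$ relative to the normalized Laplacian, where $g_{1},g_{2}$ are the antipodal vertices of $P_{3}$ and $h$ is any vertex of $K_{2m}$.
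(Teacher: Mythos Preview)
Your proposal is correct and follows essentially the same route as the paper: apply Theorem~\ref{thm:normalized-weak-pst1} with $G=P_{3}$, $H=K_{2m}$, compute the two normalized Laplacian spectra, and verify the arithmetic condition. You are in fact slightly more careful than the paper on two points: you explicitly justify why perfect state transfer on $P_{3}$ at time $\pi$ propagates to time $(2m-1)\pi$ via the $2\pi$-periodicity coming from integer eigenvalues, and you correctly write $\Sp_{\NLp}(P_{3})-1=\{-1,0,1\}$ when checking the inclusion (the paper's displayed check uses $\{0,1,2\}$, which is harmless here but not literally what the theorem asks for).
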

\begin{proof}
The normalized Laplacian spectrum of the clique $K_{2m}$ is given by
\begin{equation}
\Sp_{\NLp}(K_{2m}) = \left\{0, 1 + \frac{1}{(2m-1)}\right\}. 
\end{equation}
Let $t_{G} = (2m-1)\pi$. 
By Fact \ref{fact:normalized-p3}, the spectrum of $P_{3}$ is given by
$\Sp_{\NLp}(P_{3}) = \{0,1,2\}$ and it has perfect state transfer at time $t_{G}$
relative to the normalized Laplacian.
Note that
\begin{equation}
(2m-1)\pi \times \left\{0,1 + \frac{1}{(2m-1)}\right\} \times \{0,1,2\} \subseteq 2\pi\ZZ.
\end{equation}
Thus, by Theorem \ref{thm:normalized-weak-pst1}, 
$P_{3} \times K_{2m}$ has perfect state transfer at time $t_{G}$ relative to the normalized Laplacian.
\end{proof}

\smallskip

\begin{corollary}
For any integer $m \ge 1$, the weak product $P_{3} \times Q_{2m-1}$
has perfect state transfer at time $t = (2m-1)\pi$
relative to the normalized Laplacian.
\end{corollary}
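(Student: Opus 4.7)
The plan is to apply Theorem~\ref{thm:normalized-weak-pst1} with $G = P_3$ and $H = Q_{2m-1}$, taking $t_G = (2m-1)\pi$. First I must verify that $P_3$ has normalized Laplacian perfect state transfer at this specific time, not merely at $\pi$ as asserted in Fact~\ref{fact:normalized-p3}. Using the identity $e^{-it\NLp(P_3)} = e^{-it}\, e^{i(t/\sqrt{2})A(P_3)}$ from the proof of that fact, together with the spectral decomposition of $A(P_3)$ (eigenvalues $0, \pm\sqrt{2}$), a direct calculation gives
\[
\tbracket{\buket{3}}{e^{isA(P_3)}}{\buket{1}} \;=\; -\tfrac{1}{2} + \tfrac{1}{2}\cos(s\sqrt{2}),
\]
so $A(P_3)$ has antipodal PST precisely when $s\sqrt{2}$ is an odd integer multiple of $\pi$. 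Setting $s = (2m-1)\pi/\sqrt{2}$ yields $s\sqrt{2} = (2m-1)\pi$, which is odd, so $P_3$ indeed has antipodal normalized Laplacian PST at time $t_G = (2m-1)\pi$.

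Second, I check the spectral hypothesis $t_G\, \Sp_{\NLp}(Q_{2m-1})\bigl(\Sp_{\NLp}(P_3) - 1\bigr) \subseteq 2\pi\ZZ$. Fact~\ref{fact:normalized-p3} gives $\Sp_{\NLp}(P_3) = \{0,1,2\}$, so $\Sp_{\NLp}(P_3) - 1 = \{-1,0,1\}$. Since $Q_{2m-1}$ is $(2m-1)$-regular with adjacency eigenvalues $(2m-1) - 2k$ of multiplicity $\binom{2m-1}{k}$, normalization yields
\[
\Sp_{\NLp}(Q_{2m-1}) \;=\; \Bigl\{\tfrac{2k}{2m-1} \,:\, 0 \le k \le 2m-1\Bigr\}.
\]
For any $\mu = 2k/(2m-1)$ and $\nu \in \{-1,0,1\}$, one computes
\[
t_G\, \mu\, \nu \;=\; (2m-1)\pi \cdot \frac{2k}{2m-1} \cdot \nu \;=\; 2k\nu\pi \;\in\; 2\pi\ZZ,
\]
so the hypothesis holds. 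Theorem~\ref{thm:normalized-weak-pst1} then delivers perfect state transfer between $(g_1, h)$ and $(g_2, h)$ in $P_3 \times Q_{2m-1}$ at time $(2m-1)\pi$, for any vertex $h$ of $Q_{2m-1}$.

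The only substantive observation is that the choice $t_G = (2m-1)\pi$ plays a dual role: it is an odd multiple of the $P_3$ transfer time $\pi$ (keeping the $P_3$ walk synchronized to a perfect transfer), while simultaneously the factor $(2m-1)$ clears the common denominator in every normalized Laplacian eigenvalue of $Q_{2m-1}$. Because the $P_3$ factor contributes only differences $\lambda - 1 \in \{-1,0,1\}$, no further parity condition is imposed from the cube side, and both divisibility requirements are met by the same $t_G$. I do not anticipate any real obstacle beyond the arithmetic verification above; this corollary is a clean application of the weak-product closure machinery.
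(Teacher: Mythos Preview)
Your proposal is correct and follows essentially the same route as the paper: apply Theorem~\ref{thm:normalized-weak-pst1} with $G=P_{3}$, $H=Q_{2m-1}$, and $t_{G}=(2m-1)\pi$, after computing the normalized Laplacian spectra of both factors. Your write-up is in fact slightly more careful than the paper's, since you explicitly justify that $P_{3}$ has perfect state transfer at the odd multiple $(2m-1)\pi$ (not merely at $\pi$) and you use the correct shifted set $\Sp_{\NLp}(P_{3})-1=\{-1,0,1\}$ when checking the spectral hypothesis.
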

\begin{proof}
The normalized Laplacian spectrum of the cube $Q_{2m-1}$ is given by
\begin{equation}
\Sp_{\NLp}(Q_{2m-1}) = \left\{\frac{2k}{2m-1} : k=0,\ldots,2m-1\right\}.
\end{equation}
Let $t_{G} = (2m-1)\pi$. 
By Fact \ref{fact:normalized-p3}, the spectrum of $P_{3}$ is given by
$\Sp_{\NLp}(P_{3}) = \{0,1,2\}$ and it has perfect state transfer at time $t_{G}$
relative to the normalized Laplacian.
Note that
\begin{equation}
(2m-1)\pi \times \left\{\frac{2k}{(2m-1)} : k=0,\ldots,2m-1\right\} \times \{0,1,2\} \subseteq 2\pi\ZZ.
\end{equation}
Thus, by Theorem \ref{thm:normalized-weak-pst1}, 
$P_{3} \times Q_{2m-1}$ has perfect state transfer at time $t_{G}$ relative to the normalized Laplacian.
\end{proof}

\begin{figure}[t]
\begin{center}
\begin{tikzpicture}[scale=1.95]

\foreach \x in {60,90,120,...,390}
    \node at (\x:1)[circle,fill=black][scale=0.5] {};

\foreach \x in {60,150,240,330}
    \foreach \y in {120,210,300}
    {
        \draw (\x:1)--(\x+\y:1);
    }
\foreach \x in {120,210,300,390}
    \foreach \y in {60,150,240}
    {
        \draw (\x:1)--(\x+\y:1);
    }
\foreach \x in {90,180,270,360} {
    \foreach \y in {60,150,240}
    {
        \draw (\x:1)--(\x+\y:1);
    }
    \foreach \y in {120,210,300}
    {
        \draw (\x:1)--(\x+\y:1);
    }
}
\node at (60:1)[circle,fill=white][scale=0.5] {};
\draw[line width=0.2mm] (60:1) circle (0.055cm);
\node at (120:1)[circle,fill=white][scale=0.5] {};
\draw[line width=0.2mm] (120:1) circle (0.055cm);
\end{tikzpicture}
\quad \quad \quad \quad \quad
\begin{tikzpicture}
%
\node at (0,0)[scale=0.5]{};

\node at (0,0)[circle, fill=black][scale=0.5]{};
\node at (0,+3)[circle, fill=white][scale=0.5]{};
	\draw[line width=0.2mm] (0,+3) circle (0.1cm);
\node at (-1,+2)[circle, fill=black][scale=0.5]{};
\node at (+1,+1)[circle, fill=black][scale=0.5]{};
\draw (0,0)--(-1,+2);
\draw (0,0)--(+1,+1);
\draw (-1,+2)--(0,+3);
\draw (+1,+1)--(0,+3);
\draw (-1,+2)--(+1,+1);

\foreach \z in {+1.85}
{
	\foreach \x in {0,\z}
	{
	\draw (0+\x,0)--(-1+\x,+2);
	\draw (0+\x,0)--(+1+\x,+1);
	\draw (-1+\x,+2)--(0+\x,+3);
	\draw (+1+\x,+1)--(0+\x,+3);
	\draw (-1+\x,+2)--(+1+\x,+1);

	\node at (0+\x,0)[circle, fill=black][scale=0.5]{};
	\node at (0+\x,+3)[circle, fill=black][scale=0.5]{};
	\node at (-1+\x,+2)[circle, fill=black][scale=0.5]{};
	\node at (+1+\x,+1)[circle, fill=black][scale=0.5]{};
	}
	\foreach \x in {\z}
	{
	\draw (0,0)--(\x,0);
	\draw (+1,+1)--(+1+\x,+1);
	\draw (-1,+2)--(-1+\x,+2);
	\draw (0,+3)--(0+\x,+3);
	}

	\node at (0,+3)[circle, fill=white][scale=0.5]{};
	\draw[line width=0.2mm] (0,+3) circle (0.1cm);
	\node at (\z,0)[circle, fill=white][scale=0.5]{};
	\draw[line width=0.2mm] (\z,0) circle (0.1cm);
}
\end{tikzpicture}
\caption{Some graph products with perfect state transfer (between vertices marked white):
(a) the weak product $P_{3} \times K_{4}$ has perfect state transfer at time $3\pi$ 
relative to the normalized Laplacian.
(b) the Cartesian product $K_{2} \cart (\comp{K_{2}} + K_{2})$ has perfect state transfer at time $\pi/2$
relative to the standard Laplacian.
}
\label{fig:product}
\end{center}
\hrule
\end{figure}
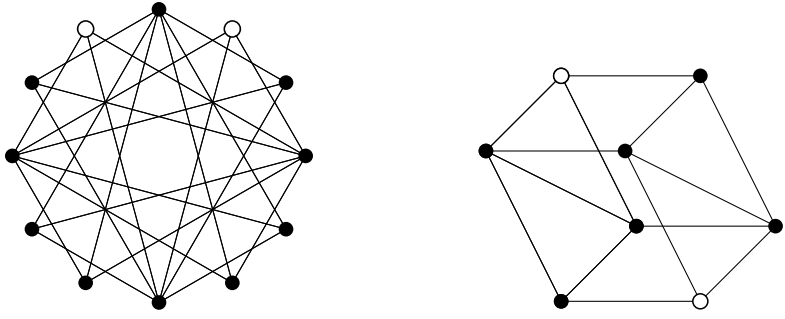

\smallskip

We show another closure property for perfect state transfer under weak products relative to
the normalized Laplacian.

\begin{theorem} \label{thm:normalized-weak-pst2}
Let $G$ and $H$ be graphs with perfect state transfer between vertices $g_{1},g_{2}$ 
and $h_{1},h_{2}$, respectively, both at time $\pst$ relative to the normalized Laplacian. 
Suppose that 
\begin{equation}
\pst \Sp_{\NLp}(G)\Sp_{\NLp}(H) \subseteq 2\pi\ZZ.
\end{equation} 
Then, $G \times H$ has perfect state transfer between vertices $(g_{1},h_{1})$ and $(g_{2},h_{2})$
at time $\pst$ relative to the normalized Laplacian.
\end{theorem}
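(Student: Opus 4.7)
The plan is to invoke Lemma \ref{lemma:normalized-qwalk} directly and observe that the stated hypothesis on the spectra is precisely what is needed to kill the cross term $\lambda_{k}\mu_{\ell}$ in the eigenvalue $\lambda_{k}+\mu_{\ell}-\lambda_{k}\mu_{\ell}$ of $\NLp(G\times H)$. Concretely, starting from
\begin{equation}
\exp(-i\pst\NLp(G\times H))=\sum_{k,\ell}\exp\bigl[-i\pst(\lambda_{k}+\mu_{\ell}-\lambda_{k}\mu_{\ell})\bigr]\, E_{k}\otimes F_{\ell},
\end{equation}
the assumption $\pst\Sp_{\NLp}(G)\Sp_{\NLp}(H)\subseteq 2\pi\ZZ$ gives $e^{i\pst\lambda_{k}\mu_{\ell}}=1$ for every eigenvalue pair. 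Substituting this simplification, the double sum factors as a tensor product.

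The next step is to recognize the factored form as
\begin{equation}
e^{-i\pst\NLp(G\times H)}=\Bigl(\sum_{k}e^{-i\pst\lambda_{k}}E_{k}\Bigr)\otimes\Bigl(\sum_{\ell}e^{-i\pst\mu_{\ell}}F_{\ell}\Bigr)=e^{-i\pst\NLp(G)}\otimes e^{-i\pst\NLp(H)}.
\end{equation}
Using $\buket{(g,h)}=\buket{g}\otimes\buket{h}$, the $((g_{1},h_{1}),(g_{2},h_{2}))$-entry of the quantum walk on $G\times H$ at time $\pst$ equals the product of the $(g_{1},g_{2})$-entry of $e^{-i\pst\NLp(G)}$ and the $(h_{1},h_{2})$-entry of $e^{-i\pst\NLp(H)}$.

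By hypothesis, each of these factors has unit magnitude at time $\pst$, so their product does as well. This yields perfect state transfer between $(g_{1},h_{1})$ and $(g_{2},h_{2})$ on $G\times H$ relative to $\NLp$ at time $\pst$, completing the proof.

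There is essentially no obstacle here: the theorem is a clean corollary of Lemma \ref{lemma:normalized-qwalk} once one notices that the cross term $\lambda_{k}\mu_{\ell}$ is the sole obstruction to the walk on the weak product factoring as a tensor product of the walks on $G$ and $H$. The only mild care needed is to confirm that the hypothesis covers the pair $(0,0)\in\Sp_{\NLp}(G)\times\Sp_{\NLp}(H)$ and all other eigenvalue pairs simultaneously, which is immediate from the set-product notation $\Sp_{\NLp}(G)\Sp_{\NLp}(H)$ defined in the Preliminaries.
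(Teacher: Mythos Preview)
Your proposal is correct and follows essentially the same approach as the paper: both invoke Lemma~\ref{lemma:normalized-qwalk}, use the spectral hypothesis to eliminate the cross term $e^{i\pst\lambda_{k}\mu_{\ell}}$, and then factor the resulting expression into the product of the two individual quantum walks. The only cosmetic difference is that you factor at the level of the full unitary operator before taking matrix elements, whereas the paper works with the $((g_{1},h_{1}),(g_{2},h_{2}))$-entry from the outset.
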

\begin{proof}
By Lemma \ref{lemma:normalized-qwalk}, we have
\begin{equation}
\tbracket{\buket{(g_{2},h_{2})}}{e^{-it\NLp(G \times H)}}{\buket{(g_{1},h_{1})}}
=
\sum_{k} e^{-it\lambda_{k}} \tbracket{\buket{g_{2}}}{E_{k}}{\buket{g_{1}}} 
	\sum_{\ell} e^{it\lambda_{k}\mu_{\ell}} e^{-it\mu_{\ell}} \tbracket{\buket{h_{2}}}{F_{\ell}}{\buket{h_{1}}}.
\end{equation}
where $\NLp(G) = \sum_{k} \lambda_{k} E_{k}$ and $\NLp(H) = \sum_{\ell} \mu_{\ell} F_{\ell}$
are the spectral decompositions of the normalized Laplacians of $G$ and $H$.
Since $\pst\Sp_{\NLp}(G)\Sp_{\NLp}(H) \subseteq 2\pi\ZZ$, we have
\begin{eqnarray}
\tbracket{\buket{(g_{2},h_{2})}}{e^{-i\pst\NLp(G \times H)}}{\buket{(g_{1},h_{1})}}
	& = & \sum_{k} e^{-i\pst\lambda_{k}} \tbracket{\buket{g_{2}}}{E_{k}}{\buket{g_{1}}} 
			\tbracket{\buket{h_{2}}}{e^{-i\pst\NLp(H)}}{\buket{h_{1}}} \\
	& = & \tbracket{\buket{g_{2}}}{e^{-i\pst\NLp(G)}}{\buket{g_{1}}} 
			\tbracket{\buket{h_{2}}}{e^{-i\pst\NLp(H)}}{\buket{h_{1}}},
\end{eqnarray}
which proves the claim.
\end{proof}

\par\noindent{\em Remark}: Examples of of graphs which are realizations of 
Theorem \ref{thm:normalized-weak-pst2} have proved elusive.


\subsection{Paths}

We show that paths of length at least four have no antipodal perfect state transfer relative to
the normalized Laplacian. This nearly matches the situation in the adjacency matrix model
(see Christandl \etal \cite{cdel04,cddekl05} and Godsil \cite{godsil-dm11}).
We show a connection between paths under the normalized Laplacian and even cycles under the adjacency matrix. 
This connection seems well-known (see Aldous and Fill \cite{aldous-fill}), but we state a version 
useful for quantum walks.

\medskip

\begin{lemma} \label{lemma:normalized-path-cycle}
Let $n \ge 2$ be an integer.
The path $P_{n}$ has antipodal perfect state transfer relative to the normalized Laplacian
if and only if 
the cycle $C_{2(n-1)}$ has antipodal perfect state transfer relative to the adjacency matrix.
\end{lemma}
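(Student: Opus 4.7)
\medskip

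\noindent\textbf{Proof proposal.} The plan is to realize $P_{n}$ as the quotient of $C_{2(n-1)}$ under the natural folding partition, and then reconcile the two spectral models (adjacency on the cycle vs.\ normalized Laplacian on the path) by a simple affine rescaling that preserves absolute values of unitary entries.

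Label the vertices of $C_{2(n-1)}$ as $\{0,1,\ldots,2n-3\}$ with addition mod $2n-2$. Define the folding partition $\pi$ with cells $V_{0}=\{0\}$, $V_{n-1}=\{n-1\}$, and $V_{k}=\{k,2n-2-k\}$ for $1\le k\le n-2$. A direct check of the adjacency pattern of the cycle shows that $\pi$ is equitable: the two non-singleton endpoints of any inner cell each contribute exactly one neighbor to the adjacent cells, while the singleton endpoints $V_{0}$ and $V_{n-1}$ each see two neighbors in $V_{1}$ and $V_{n-2}$ respectively. By Fact~\ref{fact:equitable-adjacency}, the quotient matrix $B=A(C_{2(n-1)}/\pi)$ is the $n\times n$ tridiagonal matrix whose only nonzero entries lie on the sub-/super-diagonals, with boundary value $\sqrt{d_{0,1}d_{1,0}}=\sqrt{2}$ at the outer edges and interior value $\sqrt{d_{k,k+1}d_{k+1,k}}=1$ elsewhere.

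The key observation is to compare $B$ to the normalized adjacency matrix $\NA(P_{n})=D^{-1/2}A(P_{n})D^{-1/2}$, whose degree matrix $D=\diag(1,2,\ldots,2,1)$ produces entries $1/\sqrt{2}$ on the outer edges and $1/2$ on the interior edges. Hence $B=2\NA(P_{n})=2(\Id-\NLp(P_{n}))$. Consequently, for every $\tau\in\RR$,
\begin{equation}
e^{-i\tau B}=e^{-2i\tau}\,e^{2i\tau\NLp(P_{n})},
\end{equation}
and taking the $(0,n-1)$ entry preserves absolute value under both the scalar phase $e^{-2i\tau}$ and complex conjugation (since $\NLp(P_{n})$ is real symmetric).

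To close the iff, apply Lemma~\ref{lemma:lifting} with the singleton cells $V_{0}$ and $V_{n-1}$: antipodal PST on $C_{2(n-1)}$ relative to $A$ at time $\tau$ occurs if and only if PST occurs between the corresponding singleton indices in the quotient $B$ at the same time $\tau$. Combined with the identity above, this is equivalent to $|e^{2i\tau\NLp(P_{n})}_{0,n-1}|=1$, i.e., antipodal PST on $P_{n}$ relative to $\NLp$ at time $2\tau$ (and conversely, if $P_{n}$ has such PST at time $t$, the cycle has it at time $t/2$). The main potential pitfall is making sure the folding partition is genuinely equitable at the boundary cells $V_{1}$ and $V_{n-2}$ where the cyclic wrap-around and the singleton endpoints interact; a careful case analysis of the two endpoints of $V_{1}$ (namely $1$ and $2n-3$, both neighboring $0$) confirms the uniform degree condition and is the one bookkeeping step that must be done carefully.
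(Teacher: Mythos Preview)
Your proof is correct and follows essentially the same approach as the paper: both fold $C_{2(n-1)}$ via the equitable partition $V_{0}=\{0\}$, $V_{n-1}=\{n-1\}$, $V_{k}=\{k,2n-2-k\}$, identify the quotient with $2\NA(P_{n})=2(\Id-\NLp(P_{n}))$, and then invoke Lemma~\ref{lemma:lifting}. In fact you are slightly more careful than the paper about the time-dilation factor of $2$ between the two walks (the paper's displayed identity drops this factor, though it is irrelevant for the if-and-only-if PST conclusion).
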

\begin{proof}
Let $m = n-1$.
Consider the cycle $C_{2m}$ with the vertex set $\{0,1,\ldots,2m-1\}$ 
where vertex $j$ is adjacent to vertex $k$ whenever $j-k \equiv \pm 1\pmod{2m}$.
Let $\pi$ be an equitable partition of $C_{2m}$ with $m+1$ cells where 
$V_{0} = \{0\}$, $V_{m} = \{m\}$, and $V_{k} = \{k, 2m-k\}$, for $k=1,\ldots,m-1$.
Then, $C_{2m}/\pi$ is a weighted path $\tilde{P}_{m+1}$ with adjacency matrix $A(\tilde{P}_{m+1})$
defined as:
\begin{equation}
\tbracket{\buket{j}}{A(\tilde{P}_{m+1})}{\buket{k}}
=
(\sqrt{2})^{\iverson{\beta(j,k)}} \tbracket{\buket{j}}{A(P_{m+1})}{\buket{k}} 
\end{equation}
where $\beta(j,k)$ holds if either $j$ or $k$ is a boundary vertex in $\{0,m\}$.
\ignore{
That is,
\begin{equation}
A(\tilde{P}_{m+1})
=
\begin{bmatrix}
0 & \sqrt{2} & 0 & \ldots & 0 & 0 & 0 \\
\sqrt{2} & 0 & 1 & \ldots & 0 & 0 & 0 \\
0 & 1 & 0 & \ldots & 0 & 0 & 0 \\
\vdots & \vdots & \vdots & \ldots & \vdots & \vdots & \vdots \\
0 & 0 & 0 & \ldots & 0 & 1 & 0 \\
0 & 0 & 0 & \ldots & 1 & 0 & \sqrt{2} \\
0 & 0 & 0 & \ldots & 0 & \sqrt{2} & 0 
\end{bmatrix}
\end{equation}
}
We note that 
\begin{equation}
A(\tilde{P}_{m+1}) = 2 D^{-1/2}A(P_{m+1})D^{-1/2},
\end{equation}
where $D$ is the degree matrix of $P_{m+1}$.
\ignore{
For brevity, let $D = D(P_{m+1})$ and $A = A(P_{m+1})$.
To see this, note that
\begin{equation}
2 \bra{j} D^{-1/2}AD^{-1/2}\ket{k}
\textstyle
	= 2 \frac{1}{\sqrt{\deg(j)\deg(k)}} \bra{j}A\ket{k} \\
	= (\sqrt{2})^{\iverson{\beta(j,k)}} \bra{j}A\ket{k}.
\end{equation}
}
Therefore, we have
\begin{equation}
\textstyle \NLp(P_{m+1}) = \Id - \frac{1}{2}A(C_{2m}/\pi).
\end{equation}
This shows that
\begin{equation}
\tbracket{\buket{m}}{e^{-it\NLp(P_{m+1})}}{\buket{0}}
	= e^{-it} \tbracket{\buket{V_{m}}}{e^{itA(C_{2m}/\pi)}}{\buket{V_{0}}} 
	= e^{-it} \tbracket{\buket{m}}{e^{itA(C_{2m})}}{\buket{0}},
\end{equation}
where the last equality follows by {\em lifting} (see Lemma \ref{lemma:lifting}).
\end{proof}

We will need the following results for our main theorem in this section.

\begin{proposition} \label{prop:godsil-vertex-transitive}
(Godsil, Corollary 8.2.2. in \cite{godsil-book}) \\
If perfect state transfer occurs on a connected vertex-transitive graph $G$, then
the eigenvalues of $G$ are integers.
\end{proposition}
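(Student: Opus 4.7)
The plan is to use the perfect state transfer condition together with vertex transitivity to show that the quantum walk operator $U(2\tau) = e^{-2i\tau A}$ is a scalar multiple of the identity, and then to derive integrality of the eigenvalues from the resulting arithmetic-progression structure on the spectrum via a Galois argument combined with a trace identity.

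First I would observe that PST between $u$ and $v$ at time $\tau$ makes $u$ periodic at time $2\tau$. Since $A$ is real symmetric, $U(\tau)$ is symmetric, so $U(\tau)\buket{u} = \gamma\buket{v}$ combined with unitarity forces $U(\tau)\buket{v} = \gamma\buket{u}$; composing gives $U(2\tau)\buket{u} = \gamma^{2}\buket{u}$. I would then lift this to every vertex using vertex transitivity: if $\sigma \in \Aut(G)$ sends $u$ to $w$, the permutation matrix $P_{\sigma}$ commutes with $A$ and hence with $U(2\tau)$, so $U(2\tau)\buket{w} = \gamma^{2}\buket{w}$ for every $w \in V(G)$, giving $U(2\tau) = \gamma^{2}\Id$. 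Spectrally this forces $e^{-2i\tau\lambda} = \gamma^{2}$ for every eigenvalue $\lambda$ of $A$, so setting $\alpha := \pi/\tau$ all eigenvalues lie in a single arithmetic progression with step $\alpha$.

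Since $G$ is connected and regular, the Perron eigenvalue $k$ is a positive integer in the spectrum, so I can write every eigenvalue as $\lambda_{i} = k + m_{i}\alpha$ with $m_{i} \in \ZZ$ and $m_{0} = 0$. Because $\alpha = (\lambda_{i} - k)/m_{i}$, the number $\alpha$ is a real algebraic number. For any $\sigma \in \Aut(\overline{\QQ}/\QQ)$, the image $\sigma(\lambda_{i}) - k = m_{i}\sigma(\alpha)$ must equal $m_{j}\alpha$ for some eigenvalue index $j$, so $\sigma(\alpha) = r\alpha$ for a rational $r$; iterating on the finite Galois orbit of $\alpha$ gives $r^{N} = 1$ and hence $r = \pm 1$.

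It remains to rule out $r = -1$. In that case $\sigma$ acts on the spectrum by $k + m_{i}\alpha \mapsto k - m_{i}\alpha$, so the multiset of the $m_{i}$ (with multiplicities) is symmetric about zero; but then
\[
0 = \operatorname{tr}(A) = \sum_{i} (\text{mult}_{i})(k + m_{i}\alpha) = nk + \alpha \sum_{i}(\text{mult}_{i})\, m_{i} = nk,
\]
contradicting $nk > 0$. Hence every Galois automorphism fixes $\alpha$, so $\alpha \in \QQ$, all eigenvalues are rational, and being algebraic integers they are ordinary integers. The main obstacle will be the trace step ruling out $r = -1$: this is where connectedness and the zero-diagonal structure of $A$ are both essentially used, and without the Perron eigenvalue as a rational base point the Galois argument would only place the spectrum in a real quadratic field rather than in $\ZZ$.
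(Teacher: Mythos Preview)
The paper does not supply a proof of this proposition; it is quoted verbatim as Godsil's Corollary~8.2.2 and used as a black box in Theorem~\ref{thm:no-normalized-pst-path}. So there is nothing in the paper to compare against.

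Your argument, however, is correct and is essentially the standard route to this result. The steps are sound: symmetry plus unitarity of $U(\tau)$ force $U(\tau)\buket{v}=\gamma\buket{u}$, whence $u$ is periodic at $2\tau$; vertex transitivity then upgrades this to $U(2\tau)=\gamma^{2}\Id$, placing the whole spectrum in $k+\alpha\ZZ$ with $\alpha=\pi/\tau$. Your Galois step is fine once phrased carefully: working in the splitting field $K$ of the characteristic polynomial, each $\sigma\in\mathrm{Gal}(K/\QQ)$ permutes the eigenvalues and so sends $\alpha$ to a rational multiple $r_{\sigma}\alpha$; the assignment $\sigma\mapsto r_{\sigma}$ is a homomorphism into $\QQ^{\times}$ with finite image, hence lands in $\{\pm 1\}$. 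The trace identity then eliminates $-1$ exactly as you wrote, since $\operatorname{tr}A=0$ but $nk>0$ for a connected graph on at least two vertices. This is the same mechanism behind Godsil's proof (periodicity gives the ratio condition, forcing eigenvalues into $\ZZ$ or a single real quadratic field, and the trace kills the quadratic case); you have just packaged the quadratic-field dichotomy as the $r=\pm1$ alternative.

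Two small points worth tightening in a final write-up: (i) make explicit that you work in the finite Galois extension $K/\QQ$ so every $\sigma$ has finite order, and (ii) note that $\alpha$ is real and nonzero (there are at least two distinct eigenvalues since $G$ is connected on $\ge 2$ vertices), so $r_{\sigma}^{N}=1$ in $\QQ$ genuinely forces $r_{\sigma}\in\{\pm1\}$.
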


\begin{fact} \label{fact:olmstead} (Olmstead, see Corollary 3.12 in Niven \cite{niven}) \\
If $\theta \in 2\pi\QQ$, then the only rational values of
$\cos(\theta)$ are $0, \pm\frac{1}{2}, \pm 1$.
\end{fact}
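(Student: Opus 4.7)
The plan is to reduce the claim to the classical fact that a rational algebraic integer is an ordinary integer (a one-line consequence of the rational root theorem applied to its minimal polynomial, which by definition is monic with integer coefficients). The payload we need is that $2\cos\theta$ is automatically an algebraic integer when $\theta \in 2\pi\QQ$, which follows from $e^{i\theta}$ being a root of unity.

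Concretely, I would proceed in four short steps. First, write $\theta = 2\pi p/q$ with $p,q \in \ZZ$, so that $\zeta := e^{i\theta}$ satisfies $\zeta^{q} = 1$ and is therefore a root of the monic integer polynomial $x^{q} - 1$; in particular $\zeta$, and equally $\zeta^{-1} = \overline{\zeta}$, is an algebraic integer. Second, since the algebraic integers form a subring of $\CC$, the element $\alpha := 2\cos\theta = \zeta + \zeta^{-1}$ is also an algebraic integer. Third, assume $\cos\theta \in \QQ$; then $\alpha \in \QQ$, and the minimal polynomial of $\alpha$ over $\QQ$ is monic linear with integer coefficients, forcing $\alpha \in \ZZ$. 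Fourth, the trivial bound $|\alpha| \le 2$ restricts $\alpha$ to $\{-2,-1,0,1,2\}$, so $\cos\theta \in \{-1,-\tfrac{1}{2},0,\tfrac{1}{2},1\}$, exactly as claimed.

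The only substantive ingredient is the closure of the algebraic integers under addition, which I would regard as the single ``hard'' step in that it is the only non-trivial black box; everything else is a direct consequence. If one preferred to avoid algebraic integer theory altogether, an alternative route is a direct induction on the sequence $a_{n} := 2\cos(n\theta)$, which satisfies $a_{n+1} = a_{1}a_{n} - a_{n-1}$ with $a_{0} = 2$ and, via the double-angle formula, $a_{2m} = a_{m}^{2} - 2$. One shows that if $a_{1} = a/b$ is in lowest terms with $b > 1$, then $a_{2^{k}}$ has denominator $b^{2^{k}}$ in lowest terms, contradicting the fact that $a_{n} = 2$ whenever $n\theta \in 2\pi\ZZ$. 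I would favor the algebraic-integer proof as being both shorter and more conceptually clean, and it is in fact the standard proof of what is often called Niven's theorem.
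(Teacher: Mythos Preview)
Your argument is correct and is the standard algebraic-integer proof of what is usually called Niven's theorem. Note, however, that the paper does not actually prove this statement: it is stated as a cited result (Corollary~3.12 in Niven's \emph{Irrational Numbers}) and invoked as a black box in the proof of Theorem~\ref{thm:no-normalized-pst-path}. So there is no ``paper's own proof'' to compare against; you have supplied a full argument where the authors only give a reference.

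One small remark on your alternative sketch: the final contradiction should appeal to the periodicity of the sequence $(a_{n})$ (hence boundedness of its denominators) rather than to ``$a_{n}=2$ whenever $n\theta\in 2\pi\ZZ$'', since the particular $n$ for which that happens need not be a power of~$2$. With that tweak the elementary route also goes through, though as you say the algebraic-integer version is cleaner.
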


\begin{theorem} \label{thm:no-normalized-pst-path}
For $n \ge 4$, there is no antipodal perfect state transfer on $P_{n}$ relative to the normalized Laplacian.
\end{theorem}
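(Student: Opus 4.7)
The plan is to use the Laplacian-to-adjacency reduction given in Lemma~\ref{lemma:normalized-path-cycle} to recast the problem as an adjacency-matrix PST question on the even cycle $C_{2m}$, where $m=n-1\geq 3$. Specifically, antipodal PST on $P_n$ under $\NLp$ is equivalent to antipodal PST on $C_{2m}$ under $A$, so it suffices to show that $C_{2m}$ admits no antipodal PST for $m\geq 3$. I will split the argument into two cases: $m\geq 4$ handled by a spectral/number-theoretic criterion, and $m=3$ (i.e., $C_6$) handled by a direct character calculation.

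For $m\geq 4$: the cycle $C_{2m}$ is a connected vertex-transitive graph, so by Proposition~\ref{prop:godsil-vertex-transitive} any PST forces every eigenvalue to be an integer. The eigenvalues of $C_{2m}$ are $2\cos(\pi k/m)$, $k=0,\ldots,2m-1$. Taking $k=1$, integrality of $2\cos(\pi/m)$ makes $\cos(\pi/m)$ rational; since $\pi/m \in 2\pi\QQ$, Fact~\ref{fact:olmstead} (Olmstead) restricts $\cos(\pi/m)\in\{0,\pm\tfrac12,\pm1\}$. But for $m\geq 4$, $\cos(\pi/m) \in (\tfrac{\sqrt{2}}{2},1)$, which contains none of the allowed values. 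Contradiction.

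For $m=3$, i.e., $C_6$, all four distinct eigenvalues $\{\pm 1,\pm 2\}$ are integers, so the Godsil--Olmstead argument is vacuous; this is where I expect the real obstacle to lie. The plan is to exploit the circulant structure directly. Viewing $C_6$ as the Cayley graph of $\ZZ_6$ with connection set $\{\pm 1\}$, the characters $\chi_k(j)=e^{i\pi jk/3}$ simultaneously diagonalize $A(C_6)$ with eigenvalues $\lambda_k=2\cos(\pi k/3)$. Antipodal PST from $0$ to $3$ at time $t$ requires a common phase $\theta$ with $e^{-it\lambda_k}=e^{i\theta}(-1)^k$ for every $k=0,\ldots,5$. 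The equations for $k=0,1,2$ are $e^{-2it}=e^{i\theta}$, $e^{-it}=-e^{i\theta}$, and $e^{it}=e^{i\theta}$ respectively. Combining $k=0$ with $k=1$ forces $e^{it}=-1$, hence $e^{3it}=-1$; combining $k=0$ with $k=2$ forces $e^{3it}=1$. These are incompatible, so $C_6$ admits no antipodal PST.

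Combining the two cases gives no antipodal PST on $C_{2m}$ for any $m\geq 3$, and lifting back through Lemma~\ref{lemma:normalized-path-cycle} yields the theorem. The main difficulty is the sporadic $C_6$ case, since the integrality criterion does not eliminate it; the character-theoretic phase matching above is the cleanest way I see to dispatch it, and it also suggests an alternative (equivalent) proof by direct computation of $(e^{-itA(C_6)})_{0,3}=\tfrac{8i}{3}\sin^3(t/2)\cos(t/2)$, whose modulus is bounded above by $\tfrac{\sqrt{3}}{2}<1$.
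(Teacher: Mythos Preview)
Your reduction via Lemma~\ref{lemma:normalized-path-cycle} and the use of Proposition~\ref{prop:godsil-vertex-transitive} together with Fact~\ref{fact:olmstead} follow exactly the paper's own proof. The paper, however, argues in one stroke that integrality of all eigenvalues of $C_{2(n-1)}$ forces $k/(n-1)\in\tfrac{1}{2}\ZZ$ and hence $n\in\{2,3\}$; as you correctly noticed, this overlooks the possibility $\cos(\pi k/(n-1))=\pm\tfrac{1}{2}$ allowed by Olmstead's theorem, which is precisely what happens for $n=4$ (the cycle $C_{6}$ has integral spectrum $\{\pm 1,\pm 2\}$, so the integrality criterion alone does not rule it out).

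Your separate treatment of $C_{6}$ therefore closes a genuine gap in the paper's argument. The phase-matching computation is clean and correct: from $e^{-2it}=e^{i\theta}$ and $e^{-it}=-e^{i\theta}$ one gets $e^{-it}=-1$, while $e^{-2it}=e^{i\theta}=e^{it}$ gives $e^{3it}=1$, contradicting $e^{3it}=(e^{it})^{3}=-1$. Your alternative direct computation $(e^{-itA(C_6)})_{0,3}=\tfrac{8i}{3}\sin^{3}(t/2)\cos(t/2)$, with maximum modulus $\tfrac{\sqrt{3}}{2}$, is also correct and provides a useful sanity check. In short, your proof is essentially the paper's proof, but with the sporadic $n=4$ case handled properly.
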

\begin{proof}
Let $n \ge 4$.
By Lemma \ref{lemma:normalized-path-cycle}, if $P_{n}$ has antipodal perfect state transfer 
relative to the normalized Laplacian, then $C_{2(n-1)}$ has antipodal perfect state transfer 
relative to the adjacency matrix.
By Proposition \ref{prop:godsil-vertex-transitive}, since $C_{2(n-1)}$ is connected and vertex-transitive, 
if it has perfect state transfer, then its eigenvalues must be integers.
But the eigenvalues of the cycles $C_{2(n-1)}$ are given by
$\{2\cos(2\pi k/2(n-1)) : 0 \le k < 2(n-1)\}$. 
By Fact \ref{fact:olmstead}, these are integers only at $0,\pm 1$ which implies that
$k/(n-1) \in \ZZ/2$. This implies that $n = 2,3$, which is a contradiction.
\end{proof}

\medskip

\par\noindent{\em Remark}:
The application of Fact \ref{fact:olmstead} in the proof of Theorem \ref{thm:no-normalized-pst-path}
followed similar ideas used by Godsil in the context of standard Laplacians (see \cite{godsil-book}).


\section{Conclusions}

In this work, we studied perfect state transfer in quantum walk relative to graph Laplacians. 
As pointed out by Bose, Casaccino, Mancini and Severini \cite{bcms09},
a quantum walk relative to the standard Laplacian is related to quantum spin networks 
in the isotropic Heisenberg ($\xyz$ interaction) model whereas a quantum walk with the 
adjacency matrix is connected to the $\xy$ model.
In their seminal work, Farhi and Gutmann \cite{fg98} used a weighted Laplacian matrix to 
define {\em continuous-time quantum walks} to underscore the close connection with continuous-time 
random walks. In the first work which introduced {\em perfect state transfer}, Bose \cite{bose03} studied
quantum spin chains in the $\xyz$ (or ``Laplacian'') model.

Our main goal in this work is to understand perfect state transfer in quantum walks relative to 
the standard, signless and normalized Laplacians.
Our focus was on irregular graphs (since all Laplacian quantum walks are equivalent otherwise)
and nonbipartite graphs (since the standard/signless Laplacian quantum walks are equivalent otherwise).
To the best of our knowledge, the signless and normalized Laplacians have not been studied extensively 
in the context of quantum walks.
Although the signless Laplacian of a graph $G$ has no clear ``physical'' motivation, 
it shares a strong spectral bond with the line graph $\Line(G)$.
So, it provides a method for analyzing quantum walk on line graphs in the $\xy$ model. 
This is a direction which merits closer study. 
In contrast, the normalized Laplacian has a clear ``physical'' meaning, albeit in a more classical sense.
It has been closely studied in connection with the Heat Kernel random walk in spectral graph theory
(see Chung \cite{chung}) and in machine learning (see Kondor and Lafferty \cite{kl02}).

We observed a useful closure property relative to the standard Laplacian: 
{\em complementation} preserves perfect state transfer. 
Relative to the adjacency matrix (and perhaps the other two Laplacians), 
this property holds only for regular graphs.
This closure property allowed us to characterize Laplacian perfect state transfer on double cones.
In turn, we generalized a known result of Bose \etal \cite{bcms09} and found a much simpler proof.
We also found families of double cones with perfect state transfer relative to the
signless Laplacian, but not relative to the standard Laplacian.
Our proofs relied on ideas from the theory of equitable and almost-equitable partitions.

By exploiting the connection between signless Laplacians and line graphs, 
we showed some negative results for perfect state transfer relative to the signless Laplacian.
Using a reduction to the adjacency matrix model, we observed that paths with five or more vertices 
have no antipodal perfect state transfer relative to the signless Laplacian 
(also standard, by switching equivalence).
But, a better negative result is known for paths (due to Godsil \cite{godsil-book}) and, 
recently, for trees (due to Coutinho and Liu \cite{cl14}).
We applied our techniques to nonbipartite graphs and showed this for the simplest family 
of odd unicyclic graphs (two pendant paths attached to a three cycle). 
Our proof made heavy use of Godsil's results on controllable subsets of graphs \cite{godsil-ac12}.

A paradoxical lore of quantum walk on the $n$-cube relative to the adjacency matrix is 
that its (antipodal) perfect state transfer time is $\pi/2$, for {\em any} $n$. This striking
statement seems to violate the constant speed of light postulate.
In contrast, relative to the normalized Laplacian, the $n$-cube has antipodal perfect state transfer 
at time $n\pi/2$.
This example suggests that quantum walks relative to the normalized Laplacian might be closer to reality.
Here, we proved another closure property for perfect state transfer but under the {\em weak product}.
This is not too surprising given that the normalized Laplacian spectrum behaves well under weak product 
(and not under, say, Cartesian product). 
As a corollary, we showed that a weak product of $P_{3}$ with either an even clique or odd cube has 
perfect state transfer.  It is curious that $P_{3}$ has perfect state transfer under the normalized Laplacian 
but not relative to the standard/signless Laplacians. To complete the picture, we showed that paths 
with four or more vertices do not have (antipodal) perfect state transfer under the normalized Laplacian. 
This almost matches the state of affairs under the adjacency matrix, where no perfect state transfer exists
between {\em any} pair of vertices (see Godsil \cite{godsil-dm11}). It is unclear if this stronger result 
holds relative to the normalized Laplacian.


\section{Acknowledgments}

We would like to thank Ada Chan, Gabriel Coutinho, and Chris Godsil for their generous and helpful comments.
The research of the first five authors was supported by NSF grant DMS-1262737 and NSA grant H98230-14-1-0141.
The research of H.Z. is supported by a Graduate Student Fellowship at the University of Waterloo
while working the guidance of Chris Godsil.




\end{document}
